\newtheorem{Theorem}{Theorem}
\newtheorem{Proposition}{Proposition}
\newtheorem{Definition}{Definition}
\newtheorem{Corollary}{Corollary}
\newtheorem{Assumption}{Assumption}
\newtheorem{Example-set}{Example}
\newtheorem{Remark}{Remark}
\newcommand{\CI}{\mathrel{\perp\mspace{-10mu}\perp}}
\newcommand{\nCI}{\centernot{\CI}}
\newcommand{\ie}{i.e.}
\newcommand{\srcsize}{\@setfontsize{\srcsize}{4.5pt}{4.5pt}}
\mathchardef\mhyphen="2D
\tikzset{
mycirclestyle/.style={
  circle,
  thin,
  inner sep=0pt,
  text width=6mm,
  minimum size=4mm,
  inner sep=0pt,
  align=center,
  draw=gray,
  fill=gray
  }
}
\tikzstyle{main node}=[circle,draw,minimum size=0.5mm]
\newdimen\arrowsize
\newcommand*\samethanks[1][\value{footnote}]{\footnotemark[#1]}
\newcommand{\revise}[1]{{\color{black}#1}}
\newcommand{\tworevise}[1]{{\color{black}#1}}
\DeclareMathOperator{\Cov}{Cov}
\DeclareMathOperator{\Var}{Var}
\begin{document}

\title{Testability of Instrumental Variables in Additive Nonlinear, Non-Constant Effects Models}

\author{\name Xichen Guo\thanks{Equal contribution} \email   
guoxichen0@gmail.com\\
\addr Department of Applied Statistics, Beijing Technology and Business University\\Beijing, 102488, China
            \AND
 \name Zheng Li\samethanks \email   zhengli0060@gmail.com\\ 
 \addr Department of Applied Statistics, Beijing Technology and Business University\\Beijing, 102488, China
 \AND
 \name Biwei Huang \email   bih007@ucsd.edu\\ 
 \addr Halicioglu Data Science Institute (HDSI), University of California San Diego\\
 La Jolla, San Diego, California, 92093, USA
        \AND
        \name Yan Zeng  \email yanazeng013@btbu.edu.cn\\
 \addr Department of Applied Statistics, Beijing Technology and Business University\\Beijing, 102488, China
            \AND
            \name Zhi Geng \email zhigeng@pku.edu.cn\\
 \addr Department of Applied Statistics, Beijing Technology and Business University\\ Beijing, 102488, China\\
 \addr School of Mathematical Sciences, Peking University\\ Beijing, 100871, China
		\AND 
            \name Feng Xie \thanks{Corresponding author} \email     fengxie@btbu.edu.cn\\
 \addr Department of Applied Statistics, Beijing Technology and Business University\\Beijing, 102488, China}
\editor{My editor}

\maketitle

\begin{abstract}
We address the issue of the testability of instrumental variables derived from observational data. Most existing testable implications are centered on scenarios where the treatment is a discrete variable, e.g., instrumental inequality \citep{pearl1995testability}, or where the effect is assumed to be constant, e.g., instrumental variables condition based on the principle of independent mechanisms \citep{burauel2023evaluating}. However, treatments can often be continuous variables, such as drug dosages or nutritional content levels, and non-constant effects may occur in many real-world scenarios. In this paper, we consider an additive nonlinear, non-constant effects model with unmeasured confounders, in which treatments can be either discrete or continuous, and propose an Auxiliary-based Independence Test (AIT) condition to test whether a variable is a valid instrument. We first show that, under the completeness condition, if the candidate instrument is valid, then the AIT condition holds. Moreover, we illustrate the implications of the AIT condition and demonstrate that, under certain additional conditions, the AIT condition is necessary and sufficient to detect all invalid IVs. We also extend the AIT condition to include covariates and introduce a practical testing algorithm. Experimental results on both synthetic and three different real-world datasets show the effectiveness of our proposed condition. 
\end{abstract}

\begin{keywords}
instrumental variable; testability; unmeasured confounders; non-constant effects; causal graphical models
\end{keywords}

\section{Introduction}
Estimating causal effects from observational data is a fundamental task in understanding the underlying relationships between variables. The instrumental variables (IV) model is a well-established method for estimating the causal effect of a treatment (exposure) $X$ on an outcome $Y$ in the presence of unmeasured confounders and has been used in a range of fields, such as economics~\citep{Imbens2014IV,imbens2015causal}, sociology \citep{pearl2009causality,spirtes2000causation}, epidemiology~\citep{hernan2006instruments,baiocchi2014instrumental}, and artificial intelligence \citep{chen2022instrumental,wu2022instrumental}. Generally speaking, given a causal relationship $X \to Y$, the valid IV $Z$ is required to satisfy the following three conditions: $\mathcal{C}1$. $Z$ is related to the treatment (\emph{relevance}), $\mathcal{C}2$. $Z$ is independent of the unmeasured confounders that affect the treatment and outcome (\emph{exogeneity}), and $\mathcal{C}3$. $Z$ has no direct path to the outcome (\emph{exclusion restriction}). Figure \ref{Fig:main-example-graph} illustrates the graphical criteria of the IV model, where $Z$ is a valid IV relative to $X \to Y$ in the subgraph (a).

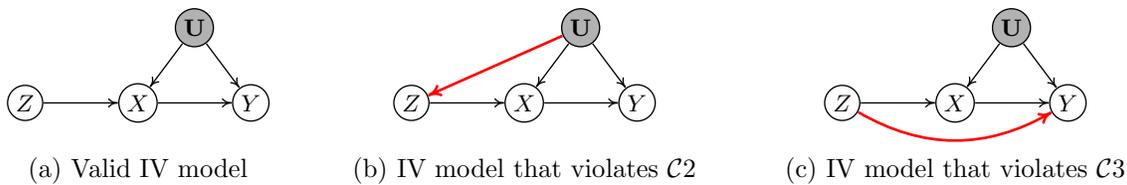
\begin{figure*}[htp]
    \centering
    \begin{tikzpicture}[scale=1.0, line width=0.75pt, inner sep=0.2mm, shorten >=.1pt, shorten <=.1pt]
	\draw (2.75, 1.0) node(U) [circle, dash pattern=on 2.5pt off 2pt, fill=gray!30, draw] {{\footnotesize\,$\mathbf{U}$\,}};
	\draw (0.5, 0) node(Z) [circle, draw] {{\footnotesize\,$Z$\,}};
	\draw (2, 0) node(X) [circle, draw] {{\footnotesize\,{$X$}\,}};
    \draw (3.5, 0) node(Y) [circle, draw] {{\footnotesize\,{$Y$}\,}};
	\draw[-arcsq] (U) -- (X) node[pos=0.5,sloped,above] {};
	\draw[-arcsq] (U) -- (Y) node[pos=0.5,sloped,above] {};
	\draw[-arcsq] (Z) -- (X) node[pos=0.5,sloped,above] {{\scriptsize\,{}\,}};
	\draw[-arcsq] (X) -- (Y) node[pos=0.5,sloped,above] {};
    \draw (2.0, -0.9) node(con3) [] {{\small\,(a) Valid IV model\,}};
    \end{tikzpicture}~~~~~~~
    \begin{tikzpicture}[scale=1.0, line width=0.75pt, inner sep=0.2mm, shorten >=.1pt, shorten <=.1pt]
	\draw (2.75, 1.0) node(U) [circle, dash pattern=on 2.5pt off 2pt, fill=gray!30, draw] {{\footnotesize\,$\mathbf{U}$\,}};
	\draw (0.5, 0) node(Z) [circle, draw] {{\footnotesize\,$Z$\,}};
	\draw (2, 0) node(X) [circle, draw] {{\footnotesize\,{$X$}\,}};
	\draw (3.5, 0) node(Y) [circle, draw] {{\footnotesize\,{$Y$}\,}};
	\draw[-arcsq] (U) -- (X) node[pos=0.5,sloped,above] {};
	\draw[-arcsq] (U) -- (Y) node[pos=0.5,sloped,above] {};
        \draw[-arcsq, line width =1.0pt, color=red] (U) -- (Z) node[pos=0.5,sloped,above,text=black] {{\scriptsize\,{}\,}};
	\draw[-arcsq] (Z) -- (X) node[pos=0.5,sloped,above] {{\scriptsize\,{}\,}};
	\draw[-arcsq] (X) -- (Y) node[pos=0.5,sloped,above] {};
    \draw (2.0, -0.9) node(con3) [] {{\small\,(b) IV model that violates {$\mathcal{C}2$}\,}};
    \end{tikzpicture}~~~~~~~
    \begin{tikzpicture}[scale=1.0, line width=0.75pt, inner sep=0.2mm, shorten >=.1pt, shorten <=.1pt]
	\draw (2.75, 1.0) node(U) [circle, dash pattern=on 2.5pt off 2pt, fill=gray!30, draw] {{\footnotesize\,$\mathbf{U}$\,}};
	\draw (0.5, 0) node(Z) [circle, draw] {{\footnotesize\,$Z$\,}};
	\draw (2, 0) node(X) [circle, draw] {{\footnotesize\,{$X$}\,}};
	\draw (3.5, 0) node(Y) [circle, draw] {{\footnotesize\,{$Y$}\,}};
	\draw[-arcsq] (U) -- (X) node[pos=0.5,sloped,above] {};
	\draw[-arcsq] (U) -- (Y) node[pos=0.5,sloped,above] {};
	\draw[-arcsq] (Z) -- (X) node[pos=0.5,sloped,above] {{\scriptsize\,{}\,}};
	\draw[-arcsq] (X) -- (Y) node[pos=0.5,sloped,above] {};
        \draw[-arcsq, line width =1.0pt, color=red]
    (Z) edge[bend right=30] (Y);
        
    \draw (2.0, -0.9) node(con3) [] {{\small\,(c) IV model that violates {$\mathcal{C}3$}\,}};
    \end{tikzpicture}  
    \caption{Graphical illustration of IV models, where $\mathbf{U}$ is the set of unmeasured confounders. (a) $Z$ is a valid IV. (b) $Z$ is an invalid IV due to the edge $\mathbf{U} \to Z$ (Violate {$\mathcal{C}2$}). (c) $Z$ is an invalid IV due to the edge $Z \to Y$ (Violate {$\mathcal{C}3$}).}
    \label{Fig:main-example-graph}
\end{figure*}

Due to the presence of unmeasured confounders $\mathbf{U}$, determining which variable serves as a valid IV is not always straightforward based solely on observational data, and often requires domain knowledge. A classic test for the IV model is the Durbin-Wu-Hausman test \citep{nakamura1981relationships}. Given a subset of valid IVs, it can identify whether other potential candidates are also valid IVs. However, it does not guide how to find the initial set of valid IVs. 
Meanwhile, given an invalid IV, it may not consistently identify the correct causal effect~\citep{bound1995problems,chu2001semi}. Thus, it is vital to develop statistical methods for selecting IVs solely from observational data.

It is not feasible to ascertain the validity of IVs solely based on the joint distribution of observed variables, without incorporating additional assumptions \citep{pearl2009causality}. \citet{pearl1995testability} introduced a seminal necessary criterion known as the \emph{instrumental inequality}, which acts as a critical test for identifying potential IVs in models featuring discrete variables. Building on this groundwork, subsequent research by ~\citet{manski2003partial,palmer2011nonparametric,kitagawa2015test,wang2017falsification} broadened the scope, exploring the applicability and limitations of IV validity tests across diverse scenarios. A notable advancement was made by \citet{kedagni2020generalized}, who formulated a more encompassing set of criteria, the \emph{generalized instrumental inequalities}. These criteria cater to scenarios with discrete treatment variables, removing constraints of the type of data on the IV and outcome variables and offering a robust framework to challenge the exogeneity condition. Intuitively, these methods mentioned above use the idea that if the IV is independent of the unmeasured confounders and the \emph{exclusion restriction} ($\mathcal{C}3$) holds, then changes in the IV should not have a significant impact on the outcome variable without altering the treatment variable, because the treatment variable mediates the influence of the instrumental variable on the outcome variable. However, these methods fail to work when treatment is a continuous variable. In reality, one may often be concerned about the causal effect of the continuous treatment on the outcome; see~\citet{skaaby2013vitamin, martinussen2019instrumental}.

Several contributions have been made to address continuous treatment settings under certain assumptions. In an additive linear, constant effects (ALICE) model, \citet{kang2016instrumental,bowden2016consistent}, and \cite{windmeijer2019use} have shown that if we assume more than half of the variables are valid IVs in the potential IVs (known as the majority rule), one may identify the valid IVs solely from observed data. Later, \cite{hartwig2017robust,guo2018confidence,windmeijer2021confidence} relaxed the majority rule and assumed that the number of valid IVs is larger than any number of invalid IVs with the same ratio estimator limit (known as the plurality rule). They demonstrated that it is still possible to identify valid IVs under the plurality rule. Another interesting work by \cite{silva2017learning} proposed the IV-TETRAD algorithm, which uses the so-called Trek conditions~\citep{Sullivant-T-separation, Spirtes:2013:Extended-Trek-Theorem} for selecting a valid IV set. This method requires at least two or more valid IVs in the system. However, although these methods have been used in a range of fields, they may fail to test whether a single IV is valid. 

Recently, \citet{xie2022testability} have demonstrated that a single IV imposes specific constraints within the linear non-Gaussian acyclic causal model. However, their method assumes that all noise terms are non-Gaussian and that the effects remain constant. More recently, \citet{burauel2023evaluating} have introduced a novel validity condition for Instrumental Variables based on the Principle of Independent Mechanisms, termed IV-PIM, within the linear IV framework. This condition is particularly notable as it applies to both continuous and discrete treatment variables. Nevertheless, its applicability is constrained by the presence of covariates, making it unsuitable when no covariates are available. Additionally, the condition is limited to scenarios with constant treatment effects.

\citet{pearl1995testability} conjectured that the validity of an instrument cannot be tested when dealing with continuous treatment variables without additional assumptions, a theory recently confirmed by \citet{gunsilius2021nontestability}. Unlike the existing work that focuses on the parametric linear constant effects model, we consider IV validity in a more challenging additive non-parametric model, the Additive NonlInear, Non-Constant Effects (ANINCE) Model. Rather surprisingly, although a single IV is in general not fully testable within the ANINCE model, we will show that a single variable $Z$, being a valid IV relative to $X \rightarrow Y$, imposes specific constraints in the ANINCE model under the completeness condition (\ie, for all functions $\psi(X)$ with finite expectation, $\mathbb{E}[\psi(X)|Z]=0$ implies $\psi(X)=0$.) \footnote{See Section \ref{Section-IV-Identification-Estimator} for more detailed discussion of the completeness condition. For related context, we refer interested readers to \cite{newey2003instrumental,d2011completeness,hu2018nonparametric}.}. Specifically, we make the following contributions:

\begin{itemize}
     \item [1.] We introduce a necessary condition, termed the Auxiliary-based Independence Test (AIT) condition, for detecting a single variable that cannot serve as an IV within the ANINCE model, under the completeness condition (Assumption \ref{Ass-completeness}). This condition is applicable to scenarios involving non-constant causal effects and both discrete and continuous treatment.
    \item [2.] We provide the necessary and sufficient conditions for detecting all invalid IVs using the AIT condition under the ANINCE model. Specifically, we show that, under the partial non-Gaussianity assumption (Assumption \ref{Ass-non-Gaussianity-exogeneity}), all observable violations of the IV exogeneity condition can be identified in the linear, constant effects model. Additionally, under \revise{the completeness condition (Assumption \ref{Ass-completeness}) and} \tworevise{the distributional non-degeneracy condition assumption} (Assumption \ref{Ass-algebraic-condition}), we can detect invalid IVs resulting from violations of either exogeneity or the exclusion restriction in the ANINCE model. We also present two notable types of non-identifiable invalid IVs (\revise{see Proposition \ref{Pro-violate-algebraic}}), along with intuitive explanations for each. 
    
    \item [3.] We present a practical implementation of the AIT condition test that accounts for the presence of covariates with finite data. \tworevise{We establish its asymptotic validity, including control of Type I error and consistency (power tending to one) against alternatives.}
    
    \item [4.] We demonstrate the efficacy and applicability of the proposed approach on both synthetic and three real-world datasets with different scenarios.

\end{itemize}

The rest of this paper is organized as follows. In Section \ref{Preliminaries}, we introduce notations, the additive non-parametric IV model, and the ANINCE model. In Section \ref{Section-AIT-condition-and-implication}, we formulate the AIT condition for the single IV. We show the AIT condition is a necessary condition for IV validity in the ANINCE model \revise{under the completeness condition}. We discuss the implications of AIT condition in the linear, constant effects model and the nonlinear, non-constant effects model, respectively. We show that, under additional assumptions, the AIT condition is a necessary and sufficient condition for IV validity. \tworevise{In Section \ref{Section-Test-Procedure}, we address the practical scenario with covariates and provide the \emph{AIT Condition} algorithm for implementing the test. We further provide a theoretical analysis establishing the asymptotic level and  power of the AIT test.} In Section \ref{Section:experiments}, we present the efficacy and applicability of our method on both synthetic and three real-world datasets which contain continuous and discrete data. Conclusions are given in Section \ref{Section-conclusions}. 

\section{Preliminaries}\label{Preliminaries}
\subsection{Notations}
This work is conducted within the framework of causal graphical models as elaborated by \citet{pearl2009causality} and \citet{spirtes2000causation}. Specifically, we represent causal relationships using the directed acyclic graph (DAG), denoted as $\mathcal{G}$, where nodes represent variables and directed edges (arrows) indicate causal links between those variables. Sets of variables are represented in bold, and individual variables and symbols for graphs are in italics. We use ``instrumental variable (IV)'' and ``instrument'' interchangeably. The main symbols used in this paper are summarized in Table \ref{Table-symbols}.

\begin{table*}[t!]
    \centering
    \caption{List of main symbols used in this paper}
    \label{Table-symbols}
    \small
    \begin{tabular}{p{2.4cm}p{11.6cm}}
    \toprule
    \textbf{Symbol} & \textbf{Description}\\ 
    \midrule
    $\mathcal{G}$   & A directed acyclic graph  \\
    $X$    & Treatment (exposure)  \\
    $Y$    & Outcome   \\
    $Z$    & A candidate (potential) instrument   \\
    $\mathbf{U}$ & The latent (unmeasured) confounders \\
    $\mathbf{W}$    & Covariates   \\
    $\mathcal{Z}$ & The residual of $Z$ after regressing on covariates $\mathbf{W}$ \\
    IV    & Instrumental Variable \\
    ${A} \CI {B} | {C}$  & $A$ is statistically independent of $B$ given $C$ \\
    ${A} \nCI {B} | {C}$ & $A$ is statistically dependent on $B$ given $C$ \\
    \revise{$\{X, Y||Z\}$} & \revise{The candidate instrumental variable $Z$ used to assess the independence of the auxiliary variable $\mathcal{A}_{X \to Y||Z}=Y-h(X)$} \\
    $|\mathbf{W}|$  & The number of variables in set $\mathbf{W}$\\
    $f(X, Z)$   & The causal effect of $X$ and $Z$ on $Y$  \\
    $\widetilde{f}_{bias}(X, Z)$ & The bias between estimated causal effect of $X$ on $Y$ and ground-truth causal effect of $(X, Z)$ on $Y$ \\
    $\varepsilon_{*}$ & The noise term of a variable \\
    $\varphi_{*}(\mathbf{U})$ & The effect of the latent variables $\mathbf{U}$ on the observed variables \\
    $g_{*}(Z)$ & The effect of the instrument variable $Z$ on other observed variables \\
    $\mathbb{R}$ &  The field of real numbers\\
    $\mathbb{R}\rightarrow\mathbb{R}$ & A mapping from the real numbers to the real numbers \\
    $\mathcal{I}(*)$  & The indicator function \\
    $\mathbb{E}(X)$ & The expected value of random variable $X$\\
    $\frac{\partial^2 Y}{\partial X \partial Z} $ & The second-order partial derivative of $Y$ with respect to $X$ and $Z$\\
    $\mathcal{A}_{X \to Y||Z}$   & The auxiliary variable of causal relationship $X \to Y$ relative to $Z$. We often drop the subscript ${X \to Y||Z}$ when there is no ambiguity (\ie, $\mathcal{A}$)\\
    $\hat{h}(X, \mathbf{W})$ & The empirical estimate of the function $h(X, \mathbf{W})$ \\
    $\hat{\mathcal{A}}_{X \to Y \| Z}$ & The estimated auxiliary variable computed as $Y - \hat{h}(X, \mathbf{W})$. We often use $\hat{\mathcal{A}}$ as a shorthand when there is no ambiguity \\ 
    $k, l$ & The kernels $k: \mathcal{U} \times \mathcal{U} \to \mathbb{R}$ and $l: \mathcal{V} \times \mathcal{V} \to \mathbb{R}$ used for variables in $\mathcal{U}$ and $\mathcal{V}$, respectively \\
    K-test method  & Kitagawa's method from \cite{kitagawa2015test} \\
    IV-PIM method &  Burauel's method from \cite{burauel2023evaluating} \\
    \bottomrule
    \end{tabular}
\end{table*} 

\subsection{Additive Nonparametric Instrumental Variable Model}\label{Section-IV-Identification-Estimator}
The instrumental variable approach offers a strategy for inferring the causal effect of interest in the presence of unmeasured confounders \citep{bowden1990instrumental,angrist1996identification,pearl2009causality,imbens2015causal}. 
Given a causal relationship $X \to Y$, a valid IV $Z$ is required to satisfy the following three conditions: 
\begin{description}
    \item [$\mathcal{C}1$. (Relevance).] $Z$ has \revise{a direct effect on} the treatment $X$; 
    \item [$\mathcal{C}2$. (Exogeneity or Randomness).] $Z$ is independent of the unmeasured confounders $\mathbf{U}$; 
    \item [$\mathcal{C}3$. (Exclusion Restriction).] $Z$ does not directly affect the outcome $Y$.
\end{description}
\begin{Definition}
    A random variable $Z$ is a valid IV for the causal relationship $X \to Y$ if the above three conditions $\mathcal{C}1 \sim \mathcal{C}3$ are satisfied.
\end{Definition}
We here consider the additive non-parametric IV model presented in \cite{newey2003instrumental}, which, for a valid IV $Z$, can be expressed as follows \footnote{Here, we have slightly modified the model from \cite{newey2003instrumental} to explicitly represent the unmeasured confounders for subsequent analysis.}:
\begin{equation}\label{Eq-IV-Model}
    \begin{aligned}
        X &= g(Z) + \underbrace{\varphi_{X}(\mathbf{U}) + \varepsilon_{X}}_{\delta},\\
        Y &= f(X) + \underbrace{\varphi_{Y}(\mathbf{U}) + \varepsilon_{Y}}_{\epsilon},  
    \end{aligned}
\end{equation}
where $\mathbb{E}[\varphi_Y(\mathbf{U}) + \varepsilon_Y|Z]=0$, and the noise terms $\varepsilon_{X}$ and $\varepsilon_{Y}$ are statistically independent. 

\tworevise{
We here would like to mention that in nonparametric IV models, identifying the function $f(\cdot)$ amounts to solving the conditional moment restriction $\mathbb{E}[Y - f(X) \mid Z] = 0$, which can be formulated as a linear inverse problem of the form $Tf = q$, where $Tf := \mathbb{E}[f(X) \mid Z]$ and $q := \mathbb{E}[Y \mid Z]$. Under mild regularity conditions on the joint distribution of $(X,Z)$, the associated conditional expectation operator $T$ is compact. As a consequence, solving $Tf = q$ constitutes an ill-posed inverse problem, and the existence of a solution is not guaranteed for arbitrary $q$. Classical results from inverse problem theory show that existence is characterized by a Picard criterion \citep{kress1989linear,kress2013linear}\footnote{\tworevise{The Picard criterion provides necessary and sufficient conditions for the existence of a solution to $\mathbb{E}[Y-f(X)\mid Z]=0$. In particular, when the associated conditional expectation operator $T$ is compact, existence can be characterized in terms of 
the operator $T$ and the function $q$; see Theorem~15.18 in 
\citet{kress1989linear,kress2013linear} and the discussion in 
\citet{horowitz2012specification}.}}. As noted by \citet{darolles2011nonparametric}, restrictions that confine the function to a compact set are mathematically equivalent to enforcing the Picard criterion and thereby ensuring the existence of a solution to the associated inverse problem. Following common practice in the nonparametric IV literature (e.g., \citet{newey2003instrumental,carrasco2007linear,florens2011identification}), we restrict the true function $f(\cdot)$ to belong to a compact set of functions, which ensures the existence of a solution to the conditional moment restriction $\mathbb{E}[Y - f(X) \mid Z] = 0$, that is, $q$ lies in the range of the operator $T$. This condition is assumed to hold throughout the paper. 
}

\revise{
\begin{Assumption}{\normalfont(\textbf{Completeness Condition} \tworevise{\citep{newey2003instrumental,d2011completeness,hu2018nonparametric}})}\label{Ass-completeness}
Given a valid IV $Z$, if for all measurable functions $\psi(X)$ such that $\mathbb{E}[|\psi(X)|] < +\infty$, 
\begin{equation}
    \begin{aligned}
        \mathbb{E}[\psi(X)|Z]=0 \text{ almost surely} \Rightarrow \psi(X)=0 \text{ almost surely}.
    \end{aligned}
\end{equation}
\end{Assumption}
 
}

\revise{Intuitively, for a valid IV $Z$, completeness imposes a condition on the conditional probability density $k(X|Z)$ in the nonparametric IV model. Note that identifying the nonparametric IV model is generally nontrivial due to the challenges inherent in nonparametric settings. Consequently, the completeness condition has been extensively studied to establish identification in various nonparametric and semiparametric models, such as \citet{newey2003instrumental,ai2003efficient,hall2005nonparametric,chen2006identification,blundell2007semi,chernozhukov2007instrumental,carrasco2007linear,hu2008instrumental,carroll2010identification,d2011completeness,darolles2011nonparametric,an2012well,newey2013nonparametric,canay2013testability,shiu2013identification,feve2014non,andrews2017examples,hu2018nonparametric}. For instance, 
\citet{newey2003instrumental} have shown that, given a valid IV $Z$, the causal effect $f(\cdot)$ of interest in the model specified by Equation \eqref{Eq-IV-Model} can be \revise{uniquely identified} if the above completeness condition—\ie, Assumption \ref{Ass-completeness}—holds.} 

\begin{Remark}
\revise{Regarding the completeness condition, three key aspects are worth highlighting: }
\begin{enumerate}
    \item \revise{The testability of the completeness condition has attracted attention in recent literature. \cite{canay2013testability} have shown that the completeness condition is generally untestable without additional restrictions. Therefore, for more specific models, testable conditions have been proposed. For example, \cite{freyberger2017completeness} provided a test for restricted completeness by linking the outcome of the test to consistency of an estimator. \cite{hu2022simple} provided a useful result for testing the completeness condition in a class of models based on convolution. Other tests include full-rank tests for completeness in discrete settings, as proposed by \citet{robin2000tests}.}
    \item \revise{\cite{newey2003instrumental} have shown that the case of finite support and the exponential family setting constitute sufficient conditions for completeness. Many commonly used distributions—such as Gaussian, Poisson, Binomial, and certain multivariate forms of these—fall within the exponential family framework \citep{hu2018nonparametric}. Building on this, other sufficient conditions for completeness have been further developed in the literature \citep{d2011completeness,chen2014local,andrews2017examples,hu2018nonparametric}. Notably, \cite{d2011completeness} derived sufficient conditions for various forms of completeness of the endogenous variable $X$ given the instrument $Z$, and applied these results to nonparametric IV regression. Furthermore, \cite{hu2018nonparametric} provided sufficient conditions for completeness of the distribution of treatment conditional on the instrument, without relying on a specific functional form. }
    \item \revise{Under the completeness condition, estimation methods for nonparametric IV models have also been extensively studied \citep{newey2003instrumental,ai2003efficient,darolles2011nonparametric,chernozhukov2007instrumental,newey2013nonparametric,singh2019kernel,bennett2019deep}. Among them, \cite{newey2003instrumental} developed a nonparametric equivalent to the two-stage least squares estimator: they used linear-in-parameter series expansions of $\mathbb{E}[Y \mid Z]$ and $\mathbb{E}[g(X) \mid Z]$ in a generalized method of moments framework. \cite{darolles2011nonparametric} analyzed identification and overidentification of the nonparametric IV model, and proposed an estimator based on Tikhonov regularization to address the ill-posed inverse problem inherent in nonparametric IV estimation. A comprehensive review of several estimation techniques has been provided in \cite{carrasco2007linear}.}
\end{enumerate}
\end{Remark}

\subsection{Additive Nonlinear, Non-Constant Effects Model}
Without loss of generality, we assume that all variables have a zero mean (otherwise can be centered) and that no covariates are present for simplicity. In Section \ref{Section-Test-Procedure}, we address the practical scenario where covariates are included. In this paper, we focus our attention on the Additive NonlInear, Non-Constant Effects (ANINCE) Model. Specifically, the generation process satisfies the following structural causal model:
\begin{equation}\label{Eq-Main-Model}
    \begin{aligned}
        X &= g(Z) + \varphi_{X}(\mathbf{U}) + \varepsilon_{X},\\
        Y &= f(X, Z) + \varphi_{Y}(\mathbf{U}) + \varepsilon_{Y}, 
    \end{aligned}
\end{equation}
where $f(\cdot)$ denotes the true, unknown causal effect of interest, and 
$g(\cdot):\mathbb{R} \to\mathbb{R}$, $f(\cdot):\mathbb{R}^{2}\to\mathbb{R}$, and 
$\varphi_{*}(\cdot):\mathbb{R}^{|\mathbf{U}|}\to\mathbb{R}$ are smooth functions. The noise terms $\varepsilon_{X}$, $\varepsilon_{Y}$, $\varepsilon_{Z}$ and $\boldsymbol{\varepsilon_{U}}$ are mutually independent. Note that $Z$ and $\mathbf{U}$ may be dependent, which indicates that the \emph{exogeneity} condition ($\mathcal{C}2$) is violated, and the non-zero $f(\cdot, Z)$ function indicates that $Z$ directly affects the outcome $Y$, implying that the \emph{exclusion restriction} condition ($\mathcal{C}3$) is violated.

A special case of the ANINCE model is the additive linear, constant effects model (ALICE), where functions $g(\cdot)$, $f(\cdot)$, and $\varphi_{*}(\cdot)$ are linear functions, which have been extensively studied in works such as those by \citet{bowden2015mendelian,kang2016instrumental, silva2017learning,windmeijer2021confidence}. Compared to these works, we investigate the testability of IV in a more challenging scenario, where $g(\cdot)$, $f(\cdot)$, and $\varphi_{*}(\cdot)$ may be non-linear functions. Additionally, we focus on the testability of a single valid IV, whereas previous works have focused on the testability of a set of IVs (assuming that including at least two or more valid IVs among the candidate variables). 

\tworevise{
\begin{Remark}
It is worth noting that, unlike the usual additive nonparametric IV model (e.g., \citealp{newey2003instrumental}), the ANINCE model is formulated within the structural causal model (SCM) framework and therefore assumes mutually independent noise components. This SCM-based independence structure is essential for deriving additional distributional constraints that render IV validity testable; without it, valid instruments would not imply observable independence relations and no empirical test could be constructed. Moreover, the usefulness of independent-noise assumptions is well established in additive SCMs for causal discovery, where they enable identifiability of causal directions and have been validated across numerous applications \citep{hoyer2008nonlinear,peters2014causal,peters2014identifiability,buhlmann2014cam,peters2017elements,glymour2019review}. Consequently, our adoption of independent-noise terms follows standard practice in SCM-based causal analysis rather than departing from conventional modeling assumptions.
\end{Remark}

}

\noindent\textbf{Our Goal.} The goal of this paper is to determine, from the observed dataset $\{X, Y, Z\}$ satisfying an ANINCE model, whether $Z$ is related to $X$ (i.e., \emph{relevance} condition), $Z$ is exogenous relative to $(X, Y)$ (i.e., \emph{exogeneity} condition) and $Z$ does not directly affect outcome $Y$ (i.e., \emph{exclusion restriction} condition). Note that the first condition \emph{relevance}, can be easily checked by the independence test because $Z$ and $X$ are observed variables. Therefore, we focus on the last two conditions of IV $Z$. In summary, we aim to provide a new necessary condition to detect whether a variable is a valid IV and investigate the necessary and sufficient conditions under which all invalid IVs can be detected.

\begin{Remark}
\revise{Existing approaches have attempted to detect violations of the exogeneity for a single IV in settings with discrete variables, such as through \emph{instrumental inequality} and its extensions. In contrast to prior work, we focus on continuous-variable settings. Notably, instrument validity is generally untestable in continuous variable settings without additional assumptions. Hence, in this paper, we introduce additive function constraints, allowing us to detect the invalid IVs when the treatment variable is continuous.} 
\end{Remark}

\section{AIT Condition and Its Implications in ANINCE Models}\label{Section-AIT-condition-and-implication}
In this section, we first formulate the Auxiliary-based Independence Test condition (AIT condition) and show that it is a necessary condition for evaluating IV validity \revise{under the completeness condition (Assumption \ref{Ass-completeness})}. We further present theoretical results regarding the implications of the AIT condition in the linear, constant effects model and nonlinear, non-constant effects model. 

\subsection{AIT Condition}\label{section_AIT-condition}

Below, we give the AIT condition, which defines the independent relationship between the ``Auxiliary variable" and candidate IV. Note that the concept of ``auxiliary variable'' has been developed to address different tasks \citep{drton2004iterative,chen2017identification,cai2019triad}, but our formalization is different from theirs (see Equation \eqref{Eq-auxiliary-variable}). To the best of our knowledge, it has not been realized that the independence property involving such an auxiliary variable reflects the validity of the IV in the ANINCE model. 
\begin{Definition}[\textbf{AIT Condition}]\label{Definition-AIT-condition} Suppose the treatment $X$, the outcome $Y$, and a candidate IV $Z$ are nodes in a causal graph $\mathcal{G}$. Define the auxiliary variable of the causal relationship $X \to Y$ relative to $Z$ as
\begin{align}
\label{Eq-auxiliary-variable}
    \mathcal{A}_{X \to Y||Z} \coloneqq Y - h (X),
\end{align}
where $h(\cdot)$ satisfies $\mathbb{E}[ \mathcal{A}_{X \to Y||Z}|Z] = {0}$ and $h(\cdot) \neq {0}$. We say that $\{X, Y||Z\}$ follows the AIT condition if and only if $\mathcal{A}_{X \to Y||Z}$ is independent from $Z$.
\end{Definition}

The AIT condition states that if there exists a function $h(\cdot)$ such that $\mathbb{E}[Y - h(X) \mid Z] = 0$, then $\{X, Y \| Z\}$ satisfies the AIT condition if and only if the auxiliary variable $\mathcal{A}_{X \to Y \| Z}$ is independent of $Z$. If no non-zero function $h(\cdot)$ exists that satisfies the conditional moment restriction—i.e., $\mathbb{E}[Y - h(X) \mid Z] \neq 0$ for all $h(\cdot)$—then $Y - h(X)$ is always correlated with $Z$ (recall that all variables are assumed to be mean-centered). This implies that $Y - h(X)$ is dependent on $Z$, and hence the AIT condition does not hold. It is noteworthy that under the completeness condition, there exists a unique function $h(\cdot)$ satisfying $\mathbb{E}[Y - h(X) \mid Z] = 0$ in the additive nonparametric IV model (Equation \eqref{Eq-IV-Model}); this function coincides with the true causal function $f(\cdot)$ of $X$ on $Y$. For the sake of conciseness, we often drop the subscript ${X \to Y||Z}$ from $\mathcal{A}_{X \to Y||Z}$ when there is no ambiguity. The following theorem shows the testability of an IV in light of the AIT condition in an ANINCE model.

\begin{Theorem}[\textbf{{Necessary Condition for IV}}]\label{Theorem-Necessary-Condition-IV}
Let $X$, $Y$, and $Z$ be the treatment, outcome, and candidate IV in an ANINCE model, respectively. Suppose that $X$, $Y$, and $Z$ are correlated and that \revise{Assumption \ref{Ass-completeness} holds}. If $Z$ is a valid IV relative to $X \to Y$, then $\{X, Y||Z\}$ always satisfies the AIT condition. 
\end{Theorem}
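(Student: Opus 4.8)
The plan is to exhibit a specific choice of $h(\cdot)$ in the definition of the auxiliary variable $\mathcal{A}_{X\to Y\|Z}$ that both satisfies the defining constraint $\mathbb{E}[\mathcal{A}\mid Z]=0$ and makes $\mathcal{A}$ independent of $Z$, when $Z$ is a valid IV. Since $Z$ is valid, the ANINCE model \eqref{Eq-Main-Model} collapses to the additive non-parametric IV model \eqref{Eq-IV-Model}: the exclusion restriction ($\mathcal{C}3$) forces $f(X,Z)=f(X)$, so $Y=f(X)+\varphi_Y(\mathbf{U})+\varepsilon_Y$, and exogeneity ($\mathcal{C}2$) gives $Z \CI \mathbf{U}$, hence (together with $\varepsilon_X,\varepsilon_Y$ independent of $Z$) $\mathbb{E}[\varphi_Y(\mathbf{U})+\varepsilon_Y\mid Z]=0$. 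The natural candidate is therefore $h = f$, the true structural function, which under completeness is exactly the object that the additive non-parametric IV estimator of \cite{newey2003instrumental} recovers; with this choice $\mathcal{A}_{X\to Y\|Z} = Y - f(X) = \varphi_Y(\mathbf{U})+\varepsilon_Y =: \epsilon$.

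First I would verify the two requirements of Definition \ref{Definition-AIT-condition} for $h=f$. The mean-zero constraint $\mathbb{E}[\mathcal{A}\mid Z]=\mathbb{E}[\varphi_Y(\mathbf{U})+\varepsilon_Y\mid Z]=0$ is immediate from the model assumptions recalled above, and $f\neq 0$ since $X$ and $Y$ are correlated (a zero causal function together with $Z\CI\mathbf U$-type reasoning would decouple $Y$ from $X$, contradicting the hypothesis). Next I would establish the independence $\mathcal{A}\CI Z$, i.e.\ $\big(\varphi_Y(\mathbf{U})+\varepsilon_Y\big)\CI Z$. This is where the explicit noise-density hypotheses enter. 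The confounder contribution $\varphi_Y(\mathbf{U})$ is a deterministic function of $\mathbf{U}$, which is generated from its own exogenous noises $\boldsymbol{\varepsilon_U}$; by $\mathcal{C}2$ the instrument $Z$ (a function of $\varepsilon_Z$ and, through its structural equation, of variables not including $\mathbf U$) shares no noise source with $\mathbf U$, so $\varphi_Y(\mathbf U)\CI Z$. Since $\varepsilon_Y$ is independent of all other noise terms including $\varepsilon_Z$ and $\boldsymbol{\varepsilon_U}$, the sum $\varphi_Y(\mathbf U)+\varepsilon_Y$ is independent of $Z$. The twice-differentiability and positivity of $p(\boldsymbol{\varepsilon_U})$ and $p(\varepsilon_Z)$ on $(-\infty,\infty)$ are what guarantee these variables are genuinely ``full-support'' continuous objects so that the estimator-identified $h$ coincides $\mathbb{P}$-a.s.\ with the structural $f$ (via completeness, as in the Remark after Equation \eqref{Eq-IV-Model}), ruling out degenerate solutions of the integral equation $\mathbb{E}[Y-h(X)\mid Z]=0$.

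The main obstacle I anticipate is uniqueness: the AIT condition only asserts the \emph{existence} of some $h\neq 0$ with $\mathbb{E}[Y-h(X)\mid Z]=0$ and $Y-h(X)\CI Z$, so strictly I only need to produce one such $h$ — but I must be careful that the $h$ delivered by the non-parametric IV estimation procedure is the one I want, and that the procedure is well-defined. This is handled by invoking completeness of the conditional expectation of functions of $X$ given $Z$ (assumed throughout after the Remark following \eqref{Eq-IV-Model}), which makes the solution of $\mathbb{E}[Y-h(X)\mid Z]=\mathbb{E}[\varphi_Y(\mathbf U)+\varepsilon_Y\mid Z]=0$ unique, hence equal to $f$; the regularity (differentiability/positivity) conditions on the noise densities ensure the relevant conditional expectation operators are well-behaved and that $n\to\infty$ consistency of the estimated $\widehat h$ toward $f$ holds, so in the population limit the auxiliary variable is exactly $\epsilon=\varphi_Y(\mathbf U)+\varepsilon_Y$. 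Assembling these pieces — (i) validity reduces the model to \eqref{Eq-IV-Model}, (ii) $h=f$ meets the constraints of Definition \ref{Definition-AIT-condition}, (iii) $\mathcal A = \epsilon \CI Z$ by the noise-independence structure — yields that $\{X,Y\|Z\}$ satisfies the AIT condition, completing the proof.
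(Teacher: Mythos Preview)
Your proposal is correct, but it takes a more elementary route than the paper. Both arguments begin identically: validity collapses the ANINCE model to Equation~\eqref{Eq-IV-Model}, the consistent IV estimator (via completeness) forces $h=f$, and hence $\mathcal{A}_{X\to Y\|Z}=\varphi_Y(\mathbf{U})+\varepsilon_Y$. From there you argue independence directly from the structural noise assumptions --- $Z=\varepsilon_Z$ under validity, the noise terms $(\boldsymbol{\varepsilon_U},\varepsilon_Y,\varepsilon_Z)$ are mutually independent, so $\mathcal{A}$ (a function of $(\boldsymbol{\varepsilon_U},\varepsilon_Y)$) is independent of $Z=\varepsilon_Z$. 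The paper instead computes the transformation $(\mathcal{A},Z)\mapsto(\boldsymbol{\varepsilon_U},\varepsilon_Z)$, writes $\log p(\mathcal{A},Z)=\log p(\boldsymbol{\varepsilon_U})+\log p(\varepsilon_Z)+\log|J|$, expands the mixed second partial $\partial^2\log p/\partial\mathcal{A}\,\partial Z$, and checks that every term vanishes under validity, invoking the Lin (1997) separability criterion (Theorem~\ref{Theorem-lin}) to conclude independence.

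Your argument is shorter and in fact does not need the twice-differentiability and positivity hypotheses on $p(\boldsymbol{\varepsilon_U})$ and $p(\varepsilon_Z)$ at all; your explanation that these hypotheses secure identification of $h$ is a misattribution --- in the paper they are there precisely to license the Hessian computation. What the paper's heavier machinery buys is a uniform template: the same second-derivative expansion (Equation~\eqref{Eq:Second-order partial derivative}) reappears verbatim as Assumption~\ref{Ass-algebraic-condition} and in the proofs of Proposition~\ref{Proposition-identifiable-IV-ANINCE} and Corollaries~\ref{Corollary1-Non-testability-IV-ANINCE}--\ref{Corollary3-Non-testability-IV-ANINCE}, where one must show the mixed partial is \emph{nonzero} for invalid instruments. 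So the paper is front-loading the framework it needs for the converse direction, whereas your direct argument is cleaner for this theorem in isolation but would not extend to those later results.
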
 
\begin{proof} 
\revise{
\tworevise{If $Z$ is a valid IV relative to $X \to Y$, then the ANINCE model can be rephrased as additive nonparametric IV models: 
\begin{equation}
\begin{aligned}
    X = g({Z}) + \varphi_{X}(\mathbf{U}) + \varepsilon_{X},  \quad Y = f(X) + \varphi_{Y}(\mathbf{U}) + \varepsilon_{Y}, 
    \end{aligned}
\end{equation}
where $\mathbf{U} = \boldsymbol{\varepsilon_{U}}$, $Z = \varepsilon_{Z}$, and $\mathbb{E}[\varphi_{Y}(\mathbf{U}) + \varepsilon_{Y}|Z]=0$ due to IV validity. 

Taking conditional expectations of $Y$ with respect to $Z$ yields 
\begin{equation}\label{Eq:Integral equation}
\begin{aligned}
    \mathbb{E}[Y|Z] &= \mathbb{E}[f(X) + {\varphi_{Y}(\mathbf{U}) + \varepsilon_{Y}}|Z] = \mathbb{E}[f(X)|Z] = \int\limits f(X)k(X|Z)dX, 
    \end{aligned}
\end{equation}
where $k(X|Z)$ is the conditional probability distribution function of $X$ given $Z$. Thus, $f(\cdot)$ is a solution to the integral equation \eqref{Eq:Integral equation}.

Now let $h(\cdot)$ be any function that satisfies the conditional moment restriction 
\begin{equation}
\begin{aligned}
E[Y - h(X)\mid Z] = 0
\quad\Longleftrightarrow\quad
E[h(X)\mid Z] = E[Y\mid Z].
    \end{aligned}
\end{equation}

Since $f(\cdot)$ also satisfies $E[Y - f(X)\mid Z]=0$, both $f(\cdot)$ and $h(\cdot)$ solve the same integral equation. Subtracting the two displays yields 
\begin{equation}\label{Eq:same func}
\begin{aligned}
    \mathbb{E}[h(X)-f(X)|Z] = \mathbb{E}[h(X)|Z] - \mathbb{E}[f(X)|Z] = 
    \mathbb{E}[Y|Z] - \mathbb{E}[Y|Z] =0. 
    \end{aligned}
\end{equation}
Define $\psi(X) := h(X) - f(X)$. By the completeness condition (Assumption \ref{Ass-completeness}), the condition $\mathbb{E}[\psi(X)\mid Z]=0$ implies $\psi(X)=0$ almost surely, and thus $h(X)=f(X)$ almost surely. Therefore, the function $h(\cdot)$ that solves $\mathbb{E}[Y-h(X)\mid Z]=0$ is uniquely identified and coincides with the true causal effect function $f(\cdot)$ of $X$ on $Y$. This is the standard completeness-based identification argument in the additive non-parametric IV literature (see, e.g., \citet{newey2003instrumental,newey2013nonparametric}, \citet{singh2019kernel}, and \citet{bennett2019deep}).
}
}

Thus, we have the auxiliary variable 
\begin{align}\label{Eq:Theorem-Necessary-Condition-IV-A}
    \mathcal{A}_{X \to Y||Z} = Y - h(X) = Y - f(X) =  \varphi_{Y}(\mathbf{U}) + \varepsilon_{Y}.
\end{align}
By Theorem 2.2.5 and its extension both in \cite{meester2008natural}, if random variables are mutually independent, then any measurable functions applied to disjoint subsets of them yield independent random variables (see Theorem \ref{The_function_indep} and Corollary~\ref{The_sum_indep} in Appendix~\ref{Appendix-proofs} for further details). Based on this result, we next show that the auxiliary variable $\mathcal{A}_{X \to Y||Z}$ and $Z$ are statistically independent. Specifically, since the noise terms $\varepsilon_Z$, $\varepsilon_Y$, and $\boldsymbol{\varepsilon_U}$ are mutually independent, we can obtain that $\varepsilon_Z$ is also independent of $\varphi_Y(\boldsymbol{\varepsilon_U}) + \varepsilon_Y$. Furthermore, combining the equations $\boldsymbol{U}=\boldsymbol{\varepsilon_U}$ and $Z = \varepsilon_Z$, we conclude that $Z$ is independent of $\varphi_Y(\boldsymbol{U}) + \varepsilon_Y$. Therefore, the auxiliary variable $\mathcal{A}_{X \to Y||Z}$ and $Z$ are statistically independent, \ie, $\mathcal{A}_{X \to Y||Z} \CI Z$. 
This implies that $\{X, Y||Z\}$ satisfies the AIT condition. 

\end{proof}

Theorem \ref{Theorem-Necessary-Condition-IV} means that if $\{X, Y||Z\}$ violates the AIT condition, then $Z$ is an invalid IV relative to $X \to Y$. Otherwise, $Z$ may or may not be valid.

\subsection{Implications of AIT Condition in Additive Linear, Constant Effects Models}\label{Subsection-AIT-Linear-Model}

In this section, we focus our attention on a special type of ANINCE model, the linear, constant effects model, which has been widely studied \citep{bowden2015mendelian,kang2016instrumental, silva2017learning,windmeijer2021confidence}. \revise{Specifically, we assume that the underlying causal structure of the system can be represented by a DAG $\mathcal{G}$, and that the data-generating process follows a linear structural equation model associated with $\mathcal{G}$. In particular, each variable $V_i$ satisfies} $V_i = \sum_{V_j \in pa(V_i)} {\alpha_{ij}V_j + \varepsilon_{V_i}}, i = 1,2,..., r$, where the noise terms $\varepsilon_{V_1},..., \varepsilon_{V_r}$ are independent of each other, and $\alpha_{ij}$ is the direct effect of $V_j \to V_i$. Hence, the ANINCE model in Equation \eqref{Eq-Main-Model} can be expressed as follows:
\begin{equation}\label{Eq-Linear-model}
    \begin{aligned}
        X & = \tau Z  + \boldsymbol{\rho}^{T} \boldsymbol{U} + \varepsilon_{X}, &\quad Y &= \beta X + \nu Z + \boldsymbol{\kappa}^{T} \boldsymbol{U} + \varepsilon_{Y},
    \end{aligned}
\end{equation}
where $Z = \boldsymbol{\gamma}^{T} \boldsymbol{U} + \varepsilon_{Z}$. When $\boldsymbol{\gamma}^{T} = \mathbf{0}$ (satisfying the \emph{exogeneity} condition) and $\nu = 0$ (satisfying the \emph{exclusion restriction} condition), $Z$ qualifies as a valid IV relative to $X \to Y$. Below, we show the implications of the AIT condition in this model. 

\textbf{Motivating Examples:}
Firstly, we illustrate with two simple examples that while a valid IV does not impose any restrictions on the joint marginal distribution of the observed variables within the linear Gaussian model, it does impose certain constraints in the linear partial non-Gaussian model, which can be identified using AIT condition. Consider the causal graph in Figure \ref{Fig:main-example-graph}(b), where $Z$ is an invalid IV for $X \to Y$, as it violates the exogeneity condition. Let $\mathcal{N}(0,1)$ denote the standard normal distribution, and $\text{exp}(0.5)$ denote the exponential distribution with a parameter of $0.5$. Suppose the generating mechanisms of these models are as follows: 
\begin{itemize}
    \item \emph{Linear Gaussian model}. $U = \varepsilon_{U}$, $Z = 2U + \varepsilon_{Z}$, $X =  1.5Z + 0.8U + \varepsilon_{X}$, $Y = X + 3.5U + \varepsilon_{Y}$, and $\varepsilon_{U}, \varepsilon_{Z}, \varepsilon_{X}, \varepsilon_{Y} \sim \mathcal{N}(0,1)$.
    \item \emph{Linear partial non-Gaussian model}. $U = \varepsilon_{U}$, $Z = 2U + \varepsilon_{Z}$, $X = 1.5Z + 0.8U + \varepsilon_{X}$, $Y = X + 3.5U + \varepsilon_{Y}$, $\varepsilon_{U} \sim \text{exp}(0.5)$, and $\varepsilon_{Z}, \varepsilon_{X}, \varepsilon_{Y} \sim \mathcal{N}(0,1)$.
\end{itemize}

\begin{figure}[htbp]
        \centering
    \subfloat[Linear Gaussian Model]{%
        \includegraphics[width=0.45\linewidth, height=4cm]{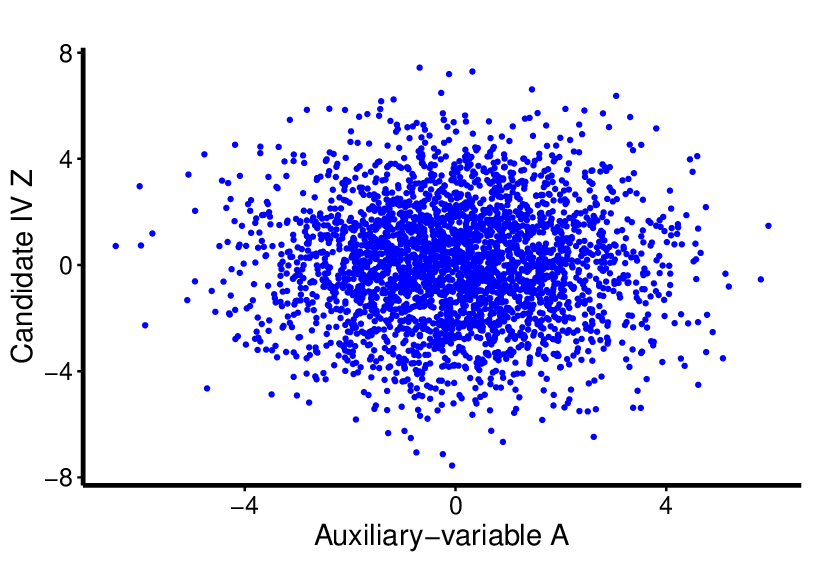}
        \label{fig:Linear_All_Gaussian_Model}
    }
    \hspace{0.05\linewidth}
    \subfloat[Linear Partial Non-Gaussian Model]{%
        \includegraphics[width=0.45\linewidth, height=4cm]{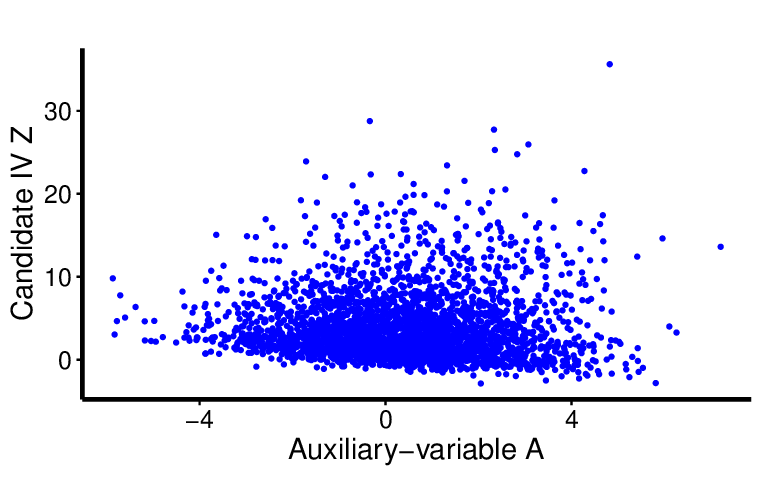}
        \label{fig:Linear_Partial_monGaussian_valid_IV_Model}
    }
    \caption{Scatter plots of Candidate IV $Z$ and Auxiliary-variable $\mathcal{A}$ under the linear models. (a) All noise terms follow Gaussian distributions. (b) Some noise terms follow non-Gaussian distributions.}
    \label{fig:SP_Linear_Model}
\end{figure}

The difference between the above two models lies in the noise term $\varepsilon_{U}$; the first follows a Gaussian distribution, while the second follows an exponential distribution (non-Gaussian distribution). Figure \ref{fig:SP_Linear_Model} shows the scatter plots of $\mathcal{A}_{X \to Y||Z}$ versus the invalid IV $Z$ for two models. Interestingly, we find that, in the linear Gaussian model, $\mathcal{A}_{X \to Y||Z}$ and $Z$ are statistically independent (satisfying AIT condition), while in the linear partial non-Gaussian model, $\mathcal{A}_{X \to Y||Z}$ and $Z$ are statistically dependent (violating AIT condition). These results suggest that non-Gaussianity is beneficial for identifying the invalid IV.

The following Propositions \ref{proposition-linear-gaussian-model} and \ref{Proposition-linear-exogeneity} formalize the phenomena discussed above. 
\begin{Proposition}[Non-Testability in Linear Gaussian Models]\label{proposition-linear-gaussian-model}
Let $X$, $Y$, and $Z$ be the treatment, outcome, and candidate IV in a linear model (Equation \eqref{Eq-Linear-model}), respectively. Suppose that $X$, $Y$, and $Z$ are correlated. 
If all noise terms of variables follow Gaussian distributions, then regardless of whether $Z$ is a valid IV relative to $X \to Y$ or not, $\{X, Y||Z\}$ always satisfies the AIT condition.
\end{Proposition}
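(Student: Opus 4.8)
The plan is to compute the auxiliary variable $\mathcal{A}_{X\to Y\|Z}$ explicitly in the linear model of Equation \eqref{Eq-Linear-model} and show that, when all noise terms are Gaussian, the function $h(\cdot)$ selected by the defining constraint $\mathbb{E}[\mathcal{A}\mid Z]=0$ produces an $\mathcal{A}$ that is jointly Gaussian with $Z$ and has zero covariance with $Z$; for jointly Gaussian variables, zero covariance is equivalent to independence, which is exactly the AIT condition. First I would argue that in the linear model the minimal-mean-square / conditional-mean estimand $h$ is itself linear, say $h(X)=bX$ for some scalar $b$ (possibly differing from the true $\beta$ when $Z$ is invalid), because $(X,Z)$ are jointly Gaussian and hence $\mathbb{E}[Y\mid Z]$ and $\mathbb{E}[X\mid Z]$ are linear in $Z$; the constraint $\mathbb{E}[Y-h(X)\mid Z]=0$ then forces $b = \mathrm{Cov}(Y,Z)/\mathrm{Cov}(X,Z)$, which is the usual IV ratio estimand and is well-defined since $X,Z$ are correlated. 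With this $b$, write $\mathcal{A} = Y - bX$ as an explicit linear combination of the independent Gaussian noises $\varepsilon_U$ (or the vector $\boldsymbol{U}$), $\varepsilon_Z$, $\varepsilon_X$, $\varepsilon_Y$, and likewise $Z=\boldsymbol{\gamma U}+\varepsilon_Z$; both are linear images of a Gaussian vector, so $(\mathcal{A},Z)$ is bivariate Gaussian.

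The second step is to verify $\mathrm{Cov}(\mathcal{A},Z)=0$. This is immediate from the defining property: $\mathbb{E}[\mathcal{A}\mid Z]=0$ implies $\mathbb{E}[\mathcal{A}]=0$ and $\mathbb{E}[\mathcal{A}Z]=\mathbb{E}[\,\mathbb{E}[\mathcal{A}\mid Z]\,Z\,]=0$, hence $\mathrm{Cov}(\mathcal{A},Z)=0$. Combined with joint Gaussianity from the previous step, this yields $\mathcal{A}\CI Z$, i.e. $\{X,Y\|Z\}$ satisfies the AIT condition. I would also note that this argument does not use the validity of $Z$ anywhere — it works verbatim whether or not $\boldsymbol{\gamma}=\mathbf{0}$ and whether or not $\nu=0$ — which is precisely the ``non-testability'' content of the proposition.

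Alternatively, and perhaps more in the spirit of the proof of Theorem \ref{Theorem-Necessary-Condition-IV}, one could route through the Hessian/linear-separability argument of \cite{lin1997factorizing}: express the transformation between $(\mathcal{A},Z)$ and the underlying independent noises, observe that in the linear-Gaussian case all the maps are linear (so the Jacobian is constant and $K_3''=K_3'=0$), and check that the cross derivative in Equation \eqref{Eq:Second-order partial derivative} vanishes because the Gaussian log-density is quadratic (so $K_1''$ and $K_2''$ are constants) and the relevant mixed partials $\partial\varepsilon_U/\partial Z$, etc., are constants whose product is killed by the zero-covariance relation. I expect the main obstacle to be the bookkeeping in the invalid-IV case: one must confirm that even when $\boldsymbol{\gamma}\neq\mathbf{0}$ or $\nu\neq 0$ the constraint $\mathbb{E}[\mathcal{A}\mid Z]=0$ still pins down a unique linear $h$ and that the resulting $\mathcal{A}$ genuinely remains jointly Gaussian with $Z$ — i.e. that no nonlinearity or degeneracy sneaks in — but since every structural equation is linear and every noise is Gaussian, this reduces to a routine covariance computation rather than a genuine difficulty.
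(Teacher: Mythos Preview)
Your proposal is correct and follows essentially the same route as the paper: both arguments use the defining constraint $\mathbb{E}[\mathcal{A}\mid Z]=0$ to obtain $\mathrm{Cov}(\mathcal{A},Z)=0$ and then invoke the fact that uncorrelated jointly Gaussian variables are independent. The paper's version is terser and does not explicitly pin down $h$ as linear or verify joint Gaussianity of $(\mathcal{A},Z)$, whereas you spell those steps out; your added rigor is welcome but does not constitute a different approach.
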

\begin{proof}\label{Proof-proposition-linear-gaussian-model}
The proof of Proposition \ref{proposition-linear-gaussian-model} is straightforward. Let \( Z \) represent any candidate IV, which may or may not be valid. \revise{In the linear Gaussian model, by the definition of the AIT condition, there exists a function $h(X) = \frac{\operatorname{Cov}(Y, Z)}{\operatorname{Cov}(X, Z)} \cdot X$ such that $\mathbb{E}[Y - h(X) \mid Z] = 0$. Hence, $\mathbb{E}[ \mathcal{A}_{X \to Y||Z} | Z]= \mathbb{E}[Y - \frac{\operatorname{Cov}(Y, Z)}{\operatorname{Cov}(X, Z)} X \mid Z] = 0$, which implies that $Cov( \mathcal{A}_{X \to Y||Z}, Z)=0$. Since zero correlation implies independence in the linear Gaussian model~\citep{bain1992introduction}, we conclude that \(\mathcal{A}_{X \to Y||Z}\) is independent of the candidate IV \( Z \). Consequently, \(\{X, Y || Z\}\) always satisfies the AIT condition.} 
\end{proof}

Proposition \ref{proposition-linear-gaussian-model} states that checking the AIT condition in a linear Gaussian causal model (second-order statistics) does not provide any useful information for identifying invalid IVs. Below, we show that one can leverage higher-order statistics \footnote{Higher-order statistics mean beyond the second-order moments in statistics of the data, such as skewness and kurtosis.} of noise terms to identify certain types of invalid IVs that violate \emph{exogeneity} condition. Before presenting the result, we give the key assumption.

\begin{Assumption}[Partial Non-Gaussianity]\label{Ass-non-Gaussianity-exogeneity}
At least one of the following conditions holds: 
(i) there exists at least one variable $U_i \in \mathbf{U}$ whose noise term follows a non-Gaussian distribution and cause $Z$; (ii) the noise term of $Z$ follows a non-Gaussian distribution.
\end{Assumption}

Assumption \ref{Ass-non-Gaussianity-exogeneity} states the non-Gaussianity of data, which is expected to be widespread, as suggested by Cram\'{e}r Decomposition Theorem \citep{Cramer62}.  Considerable works have already been built on this assumption~\citep{shimizu2006linear,salehkaleybar2020learning}. For additional references, see \citet{spirtes2016causal,shimizu2022statistical}.

We now show that the AIT condition can access the validity of \emph{exogeneity} condition in linear models under Assumption \ref{Ass-non-Gaussianity-exogeneity}. 
\begin{Proposition}[Testability of Exogeneity in Linear Models]\label{Proposition-linear-exogeneity}
Let $X$, $Y$, and $Z$ be the treatment, outcome, and candidate IV in a linear model (Equation \eqref{Eq-Linear-model}), respectively. Suppose that $X$, $Y$, and $Z$ are correlated, and that Assumption \ref{Ass-non-Gaussianity-exogeneity} holds. If $Z$ violates the exogeneity condition, i.e., at least one variable $U_i \in \mathbf{U}$ causes $Z$, then $\{X, Y||Z\}$ violates the AIT condition.
\end{Proposition}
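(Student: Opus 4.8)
The plan is to exploit the same Hessian-of-log-density machinery used in Theorem~\ref{Theorem-Necessary-Condition-IV}, but now run it in reverse: instead of showing a mixed partial derivative vanishes, I want to show it is \emph{nonzero} whenever some $U_k \in \mathbf{U}$ causes $Z$. First I would write down the auxiliary variable explicitly in the linear model. Since $\mathbb{E}[\mathcal{A}_{X\to Y||Z} \mid Z] = 0$ forces $h(\cdot)$ to be the population IV regression of $Y$ on $X$ using $Z$ — which in the linear case is $h(X) = b X$ for the unique scalar $b = \mathrm{Cov}(Z,Y)/\mathrm{Cov}(Z,X)$ — I can compute $\mathcal{A} = Y - bX$ as an explicit linear combination of the exogenous noises $\boldsymbol{\varepsilon_U}, \varepsilon_Z, \varepsilon_X, \varepsilon_Y$. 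The key structural fact is that when $\boldsymbol{\gamma}\neq\mathbf 0$, the coefficient $b$ generally differs from the structural $\beta$, so $\mathcal{A}$ retains a genuine dependence on the confounder noise $\boldsymbol{\varepsilon_U}$, and moreover $Z = \boldsymbol{\gamma U} + \varepsilon_Z$ also depends on $\boldsymbol{\varepsilon_U}$ through any $U_k$ with $\gamma_k\neq 0$. So $(\mathcal{A}, Z)$ is a pair of linear functionals of independent noises that \emph{share} at least one non-Gaussian component (by Assumption~\ref{Ass-non-Gaussianity-exogeneity}, either that shared $U_k$ is non-Gaussian, or $\varepsilon_Z$ is — but $\varepsilon_Z$ enters $Z$ and not $\mathcal{A}$, so the relevant shared component is $\varepsilon_{U_k}$; I will need to be careful here that the non-Gaussian witness is actually one that is shared, which is exactly why condition (i) of the Assumption, not (ii), is the one that does the work under exogeneity violation).

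Next, to convert ``shared non-Gaussian component'' into ``dependent'', I invoke the Darmois–Skitovich theorem (equivalently, the Lin/Cramér linear-separability argument already cited as Theorem~\ref{Theorem-lin} in the appendix): if $\mathcal{A} = \sum_i a_i \varepsilon_i$ and $Z = \sum_i c_i \varepsilon_i$ with the $\varepsilon_i$ independent, then $\mathcal{A} \CI Z$ forces every $\varepsilon_i$ with $a_i c_i \neq 0$ to be Gaussian. Since the shared coordinate $\varepsilon_{U_k}$ has $c_k = \gamma_k \neq 0$ and (I will argue) $a_k \neq 0$, and $\varepsilon_{U_k}$ is non-Gaussian, we get $\mathcal{A} \nCI Z$, i.e. the AIT condition is violated. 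Concretely via the Hessian route: $\log p(\mathcal{A}, Z)$ decomposes as $K_1 + K_2 + K_3$ as in Equation~\eqref{Eq:Second-order partial derivative}, the Jacobian $|J|$ is a nonzero constant (linear transformation) so $K_3'' = K_3' = 0$, and the $(1,2)$ entry of the Hessian reduces to a weighted sum of the terms $(\log p(\varepsilon_i))''$ over shared coordinates; this sum fails to vanish identically precisely when some shared $\varepsilon_i$ is non-Gaussian.

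The main obstacle is establishing that the coefficient of the witnessing confounder noise $\varepsilon_{U_k}$ in $\mathcal{A} = Y - bX$ is genuinely nonzero — one must rule out a ``cancellation'' in which, despite $\boldsymbol{\gamma}\neq\mathbf 0$, the particular linear combination $Y - bX$ happens to have zero loading on every non-Gaussian $U_k$ that causes $Z$. I would handle this by writing $b = \beta + \widetilde b$ where $\widetilde b = \mathrm{Cov}(Z, \epsilon)/\mathrm{Cov}(Z,X)$ is the confounding bias term with $\epsilon = \boldsymbol{\kappa U}+\varepsilon_Y$, so that $\mathcal{A} = Y - bX = (\epsilon) - \widetilde b\,(\delta)$ where $\delta = \boldsymbol{\rho U}+\varepsilon_X$; then the loading of $\varepsilon_{U_k}$ in $\mathcal{A}$ is $\kappa_k - \widetilde b\,\rho_k$. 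The claim becomes: this cannot vanish for \emph{every} non-Gaussian $U_k$ with $\gamma_k \neq 0$ simultaneously — because if it did, $\mathcal{A}$ would be a linear function of noises none of which are shared with $Z$ except through Gaussian coordinates, which would actually make the AIT condition hold, contradicting... hmm, but that is what we are trying to prove, so this needs an honest argument rather than a circular one. The clean way out is to treat the degenerate cancellation $\kappa_k = \widetilde b\,\rho_k$ as a measure-zero (non-generic) coincidence, or — more likely the paper's intent — to fold it into the ``observable violation'' language: when such exact cancellation occurs the exogeneity violation is genuinely unidentifiable, so the honest statement is that AIT detects it outside that degenerate locus. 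I would check the appendix to see whether the authors instead simply assume genericity of the structural coefficients, which would dispatch this obstacle immediately; barring that, I would state the needed non-cancellation as a mild regularity condition and note it holds Lebesgue-almost-everywhere in coefficient space.
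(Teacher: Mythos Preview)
Your overall plan matches the paper's proof: write $\mathcal{A}=Y-\hat\beta X$ explicitly as a linear combination of the independent noises $\boldsymbol{\varepsilon_U},\varepsilon_Z,\varepsilon_X,\varepsilon_Y$, do the same for $Z=\boldsymbol{\gamma}\boldsymbol{\varepsilon_U}+\varepsilon_Z$, and invoke Darmois--Skitovich on the shared non-Gaussian coordinates. The Hessian detour is unnecessary in the linear case (the paper does not take it here), but it is equivalent and harmless.

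There is, however, a concrete error in your noise accounting. You assert that ``$\varepsilon_Z$ enters $Z$ and not $\mathcal{A}$'' and therefore dismiss case~(ii) of Assumption~\ref{Ass-non-Gaussianity-exogeneity} as irrelevant. This is wrong. Since $X=\tau Z+\boldsymbol{\rho U}+\varepsilon_X$ and $Y=\beta X+\nu Z+\boldsymbol{\kappa U}+\varepsilon_Y$, expanding $\mathcal{A}=Y-\hat\beta X$ with $\hat\beta=\beta+\beta_{\mathrm{bias}}$ gives an $\varepsilon_Z$-loading of $(\nu-\tau\beta_{\mathrm{bias}})$, which is generically nonzero once exogeneity fails (even with $\nu=0$ it equals $-\tau\beta_{\mathrm{bias}}\neq 0$). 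So $\varepsilon_Z$ \emph{is} a shared coordinate between $\mathcal{A}$ and $Z$, and case~(ii) of the assumption is exactly what delivers the Darmois--Skitovich conclusion when it is $\varepsilon_Z$ rather than some $\varepsilon_{U_k}$ that is non-Gaussian. The paper's proof uses both branches of Assumption~\ref{Ass-non-Gaussianity-exogeneity} in precisely this way.

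Your worry about the cancellation $\kappa_k-\widetilde b\,\rho_k=0$ is legitimate, but note that the paper handles the nonvanishing of the relevant loadings by direct assertion (``This will imply that $[\ldots]\neq 0$ for $\varepsilon_{U_k}$, and $(\nu-\tau\beta_{\mathrm{bias}})\neq 0$ for $\varepsilon_Z$''), essentially treating genericity as understood. So your instinct to flag this as a regularity condition is reasonable and, if anything, more careful than the paper; it is not a gap relative to the published argument.
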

\begin{proof}
    Roughly speaking, by the Darmois–Skitovich theorem \citep{darmois1953analyse,skitovitch1953property},  if $\mathcal{A}_{X \to Y||Z}$ shares any common non-Gaussian noise terms $\varepsilon_{U_i}$ or $\varepsilon_Z$ with $Z$, $\mathcal{A}_{X \to Y||Z}$ is statistically dependent on $Z$. This implies that $\{X, Y||Z\}$ violates the AIT condition. See Appendix \ref{Proof-proposition-exogeneity-linear-model} for its complete proof.
\end{proof}

\begin{Remark}
When all noise terms follow the non-Gaussian distributions, the linear partial non-Gaussian model becomes the well-known Linear Non-Gaussian Acyclic Model (LiNGAM), which has been extensively studied \citep{shimizu2006linear, salehkaleybar2020learning}. Consequently, according to Proposition \ref{Proposition-linear-exogeneity}, an invalid IV that violates the exogeneity condition in LiNGAM can be detected in light of the AIT condition.
\end{Remark}

Proposition \ref{Proposition-linear-exogeneity} states that we can detect an invalid IV that violates \emph{exogeneity} condition using the AIT condition based on the observational data in the linear model under Assumption \ref{Ass-non-Gaussianity-exogeneity}. A natural question that arises is whether we can detect IVs that violate the \emph{exclusion restriction} condition within the same framework. Unfortunately, in practice, we cannot detect an invalid IV that solely violates the \emph{exclusion restriction} condition in the linear model, as shown in the following example. 

\begin{Example-set}
Let's consider the causal structure illustrated in Figure \ref{Fig:main-example-graph} (a), where $Z$ is a valid IV. We assume the following relationship: $Z = \varepsilon_Z$, $X = \tau Z + \boldsymbol{\rho}^{T} \boldsymbol{U} + \varepsilon_{X}$, $Y = \beta X + \boldsymbol{\kappa}^{T} \boldsymbol{U} + \varepsilon_{Y}$. Next, we demonstrate how to construct another causal structure as shown in Figure \ref{Fig:main-example-graph} (c), where $Z$ becomes an invalid IV (violating solely the exclusion restriction condition). Specifically, let $\beta^{\prime} = \beta - \frac{\nu}{\tau}$, $Z^{\prime} = Z$, $X^{\prime} = X$, and $Y^{\prime} = \beta^{\prime} X^{\prime} + \nu Z^{\prime} + \boldsymbol{\kappa}^{T} \boldsymbol{U} + \varepsilon_{Y} + \frac{\nu}{\tau}( \boldsymbol{\rho}^{T} \boldsymbol{U} + \varepsilon_X) = Y$. Thus, $(Z, X, Y)$ has the same distribution as $(Z^{\prime}, X^{\prime}, Y^{\prime})$. This implies that a variable being an instrument imposes no constraints on the joint marginal distribution of the observed variables. The same result is also discussed in Section 3 of \citet{chu2001semi}.    
\end{Example-set}

The following proposition states the above phenomenon in the linear model. 
\begin{Proposition}[Non-Testability of Exclusion Restriction in Linear Models]\label{proposition-linear-model-exclusion}
Let $\quad$ $X$, $Y$, and $Z$ be the treatment, outcome, and candidate IV in a linear model (Equation \eqref{Eq-Linear-model}), respectively. Suppose that $X$, $Y$, and $Z$ are correlated. If $Z$ satisfies the exogeneity condition, regardless of whether $Z$ violates the exclusion restriction condition or not, then $\{X, Y||Z\}$ always satisfies the AIT condition. 
\end{Proposition}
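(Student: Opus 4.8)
The plan is to exploit the algebraic identity already displayed in the Example preceding the proposition: when $Z$ satisfies exogeneity ($\boldsymbol{\gamma}=\mathbf{0}$, so $Z=\varepsilon_Z$), any linear model in which $Z$ violates only the exclusion restriction (coefficient $\nu\neq 0$ on $Z\to Y$) is distributionally indistinguishable from one in which $Z$ is a valid IV, via the reparametrization $\beta'=\beta-\nu/\tau$ with the $Z$-channel absorbed into the $X$-channel. Concretely, starting from $Y=\beta X+\nu Z+\boldsymbol{\kappa U}+\varepsilon_Y$ and $X=\tau Z+\boldsymbol{\rho U}+\varepsilon_X$, I would write $\nu Z = (\nu/\tau)(X - \boldsymbol{\rho U}-\varepsilon_X)$, so that $Y=(\beta+\nu/\tau)X - (\nu/\tau)(\boldsymbol{\rho U}+\varepsilon_X)+\boldsymbol{\kappa U}+\varepsilon_Y$. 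Since $Z=\varepsilon_Z$ is independent of $\mathbf{U},\varepsilon_X,\varepsilon_Y$, the joint law of $(X,Y,Z)$ is exactly that of a model where $Z$ is a valid IV with structural effect $\beta'=\beta+\nu/\tau$ and a (possibly modified) confounding term. By Theorem \ref{Theorem-Necessary-Condition-IV}, or more simply by the direct argument below, the AIT condition holds for that valid-IV representation, hence for $(X,Y,Z)$.

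The cleanest route, avoiding any appeal to non-Gaussianity regularity hypotheses, is to argue directly. First, I would recall from the proof of Proposition \ref{proposition-linear-gaussian-model} that $\mathbb{E}[\mathcal{A}_{X\to Y\|Z}\mid Z]\equiv 0$ holds by construction of $h(\cdot)$, so $\mathcal{A}$ is mean-independent of, and uncorrelated with, $Z$. Then I would show that in the linear model with $Z=\varepsilon_Z$ exogenous, the population choice of $h$ yielding $\mathbb{E}[\mathcal{A}\mid Z]=0$ is the linear map $h(X)=\beta^{*}X$ with $\beta^{*}=\beta+\nu/\tau$ (this is the 2SLS/IV probability limit; relevance $\tau\neq0$ makes it well-defined). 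Substituting gives
\begin{align}
\mathcal{A}_{X\to Y\|Z} = Y-\beta^{*}X = \Bigl(\boldsymbol{\kappa}-\tfrac{\nu}{\tau}\boldsymbol{\rho}\Bigr)\mathbf{U}+\varepsilon_Y-\tfrac{\nu}{\tau}\varepsilon_X,
\end{align}
which is a function of $\mathbf{U},\varepsilon_X,\varepsilon_Y$ only. Because $Z=\varepsilon_Z$ is, by the exogeneity assumption, independent of the noise terms $\boldsymbol{\varepsilon_U},\varepsilon_X,\varepsilon_Y$ generating all of $\mathbf{U},\varepsilon_X,\varepsilon_Y$, it follows that $\mathcal{A}_{X\to Y\|Z}\CI Z$, i.e.\ $\{X,Y\|Z\}$ satisfies the AIT condition — independently of whether $\nu=0$ or not.

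The main obstacle is a bookkeeping subtlety rather than a deep one: I must verify that the $h$ characterized by the moment condition $\mathbb{E}[\mathcal{A}\mid Z]=0$ is indeed unique (up to the behavior that matters) and equals this linear $\beta^{*}X$, rather than some other function; this is where the completeness/relevance assumption ($Z$ correlated with $X$, $\tau\neq0$) enters, guaranteeing that $\mathbb{E}[Y\mid Z]=\beta^{*}\,\mathbb{E}[X\mid Z]$ pins down $\beta^{*}$ and that no nonlinear $h$ can also satisfy the condition while keeping $h\neq 0$. I would also note explicitly that the argument does not require any distributional assumption on the noise: the key structural fact is simply that the residual $Y-\beta^{*}X$ contains no $Z$-channel once exogeneity holds, so dependence on $Z$ cannot arise. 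This is the converse-side complement to Proposition \ref{Proposition-linear-exogeneity}: violations of exclusion restriction alone are invisible to the AIT condition in linear models, whereas violations of exogeneity are detectable under Assumption \ref{Ass-non-Gaussianity-exogeneity}.
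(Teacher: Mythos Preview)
Your direct argument is correct and is essentially the paper's own proof: both compute the IV estimand $\hat\beta=\beta+\nu/\tau$, substitute to obtain $\mathcal{A}_{X\to Y\|Z}=(\boldsymbol{\kappa}-\tfrac{\nu}{\tau}\boldsymbol{\rho})\boldsymbol{\varepsilon_U}-\tfrac{\nu}{\tau}\varepsilon_X+\varepsilon_Y$, and conclude independence from $Z=\varepsilon_Z$ because the two expressions involve disjoint noise terms. The paper phrases this last step as an appeal to the Darmois--Skitovich theorem, whereas your ``functions of disjoint independent blocks'' justification is more elementary and actually more to the point; also, watch the sign in your opening paragraph ($\beta'=\beta-\nu/\tau$ is the Example's valid$\to$invalid direction, while your computation correctly uses $\beta^{*}=\beta+\nu/\tau$ for the IV limit).
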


\begin{proof}
    This proof is straightforward. The auxiliary variable $\mathcal{A}_{X \to Y||Z}$ shares no common noise terms with candidate IV $Z$, whether the noise terms are Gaussian or non-Gaussian. By the Darmois-Skitovich theorem \citep{darmois1953analyse,skitovitch1953property}, $\mathcal{A}_{X \to Y||Z}$ is statistically independent from $Z$. This implies that $\{X, Y||Z\}$ always satisfies the AIT condition. See Appendix \ref{Proof-proposition-exclusion-linear-model} for its complete proof.
\end{proof}

Based on Theorem \ref{Theorem-Necessary-Condition-IV} and Propositions \ref{proposition-linear-gaussian-model} $\sim$ \ref{proposition-linear-model-exclusion}, we derive the following theorem, which provides a necessary and sufficient condition for detecting invalid IVs \revise{that violate the \emph{exogeneity} condition} within a linear causal model. 

\begin{Theorem}[\textbf{{Necessary and Sufficient Conditions in Linear Models}}]\label{Theorem-Necessary-Sufficient-Condition-IV-Linear-Models}
Let $X$, $Y$, and $Z$ be the treatment, outcome, and candidate IV in a linear model (Equation \eqref{Eq-Linear-model}), respectively. Suppose that $X$, $Y$, and $Z$ are correlated and that Assumption \ref{Ass-non-Gaussianity-exogeneity} holds. $\{X, Y||Z\}$ violates the AIT condition if and only if the candidate IV $Z$ is invalid due to a violation of the exogeneity condition. 
\end{Theorem}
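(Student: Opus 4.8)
# Proof Proposal for Theorem 3 (Necessary and Sufficient Conditions in Linear Models)

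The plan is to assemble the theorem directly from the three propositions and the necessary-condition theorem already established in this section, treating it as a case analysis over the two ways an IV can fail (exogeneity violation, exclusion-restriction violation) plus the valid case. First I would establish the ``if'' direction (sufficiency of an exogeneity violation for AIT failure): assume $Z$ is invalid because at least one $U_k \in \mathbf{U}$ causes $Z$. This is exactly the hypothesis of Proposition~\ref{Proposition-linear-exogeneity}, which, under Assumption~\ref{Ass-non-Gaussianity-exogeneity} (already assumed in the theorem), yields that $\{X,Y||Z\}$ violates the AIT condition. So this direction is immediate.

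Next I would handle the ``only if'' direction by contraposition: suppose $Z$ does \emph{not} violate the exogeneity condition, i.e. no $U_k$ causes $Z$ (equivalently $\boldsymbol{\gamma} = \mathbf{0}$ in Equation~\eqref{Eq-Linear-model}), and show $\{X,Y||Z\}$ satisfies the AIT condition. Here there are two sub-cases. If additionally $Z$ satisfies the exclusion restriction ($\nu = 0$), then $Z$ is a valid IV and Theorem~\ref{Theorem-Necessary-Condition-IV} gives that the AIT condition holds (one should note the density hypotheses of that theorem are met, or simply invoke the linear specialization via Proposition~\ref{proposition-linear-gaussian-model}/\ref{proposition-linear-model-exclusion}). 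If instead $Z$ violates only the exclusion restriction ($\boldsymbol{\gamma}=\mathbf{0}$ but $\nu \neq 0$), then Proposition~\ref{proposition-linear-model-exclusion} directly gives that $\{X,Y||Z\}$ still satisfies the AIT condition. In both sub-cases the AIT condition holds, completing the contrapositive and hence the ``only if'' direction.

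Combining the two directions gives the stated biconditional: $\{X,Y||Z\}$ violates the AIT condition if and only if $Z$ is invalid due to a violation of exogeneity. I would close by remarking that this characterization is exhaustive because every candidate IV falls into exactly one of the three categories (valid; violates exogeneity, possibly also exclusion; violates only exclusion), and the AIT condition separates the first and third from the second.

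The main obstacle is not any deep calculation but a bookkeeping subtlety: I must make sure the case ``$Z$ violates exogeneity \emph{and} exclusion restriction simultaneously'' is covered, since Proposition~\ref{Proposition-linear-exogeneity} is stated for the exogeneity violation regardless of the status of $\nu$ --- its proof relies on $\mathcal{A}_{X\to Y||Z}$ and $Z$ sharing a common non-Gaussian noise term coming through $U_k$, and this argument is unaffected by whether $\nu = 0$. I would state this explicitly so the reader sees that ``invalid due to a violation of the exogeneity condition'' in the theorem includes the doubly-invalid case, and that an exclusion-only violation is the unique undetectable failure mode. A secondary point to verify is that the non-degeneracy/correlation hypotheses ($X, Y, Z$ correlated, $n\to\infty$) propagate consistently to each invoked proposition, which they do since all four results share the same standing assumptions.
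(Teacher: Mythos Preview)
Your proposal is correct and takes essentially the same approach as the paper: both directions are assembled from Theorem~\ref{Theorem-Necessary-Condition-IV}, Proposition~\ref{Proposition-linear-exogeneity}, and Proposition~\ref{proposition-linear-model-exclusion}, with the ``if'' direction being an immediate application of Proposition~\ref{Proposition-linear-exogeneity} and the ``only if'' direction ruling out the valid case via Theorem~\ref{Theorem-Necessary-Condition-IV} and the exclusion-only case via Proposition~\ref{proposition-linear-model-exclusion}. The paper phrases the ``only if'' part as a contradiction rather than a contraposition and does not spell out the doubly-invalid case as explicitly as you do, but the logical content and the invoked results are identical.
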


\begin{proof}
    See Appendix \ref{proof:Theorem-Necessary-Sufficient-Condition-IV-Linear-Models} for its proof.
\end{proof}

This theorem states that the AIT condition is necessary and sufficient to detect the candidate IV violations of the \emph{exogeneity} condition when Assumption \ref{Ass-non-Gaussianity-exogeneity} holds.

\subsection{Implications of AIT Condition in Additive Nonlinear,
Non-Constant Effects Models}\label{Subsection-AIT-Non-Linear-Model}
In this section, we investigate the implications of the AIT condition on the ANINCE model. 
Before giving our main results, we first show a simple example to show that nonlinearity is beneficial in identifying the invalid IV. 

\textbf{A Motivating Example:} Continue to consider the causal graph in Figure \ref{Fig:main-example-graph} (b), where $Z$ serves as an invalid IV for the causal relationship $X \to Y$, violating the exogeneity condition. Here, we modify the generation mechanism of the linear Gaussian model by introducing a nonlinear function between $U$ and $Z$, specifically as follows:
\begin{itemize}
    \item \emph{Partial Non-linear Gaussian model}. $U  = \varepsilon_{U}$, $Z = \exp(U) + \varepsilon_{Z}$, $X = 1.5 Z + 0.8 U +\varepsilon_{X}$, $Y = X + 3.5 U + \varepsilon_{Y}$, and $\varepsilon_{U}, \varepsilon_{Z}, \varepsilon_{X}, \varepsilon_{Y} \sim \mathcal{N}(0,1)$.
\end{itemize}
\begin{figure}[htp]
\begin{center}
    \includegraphics[width=0.5\linewidth, height=4cm]{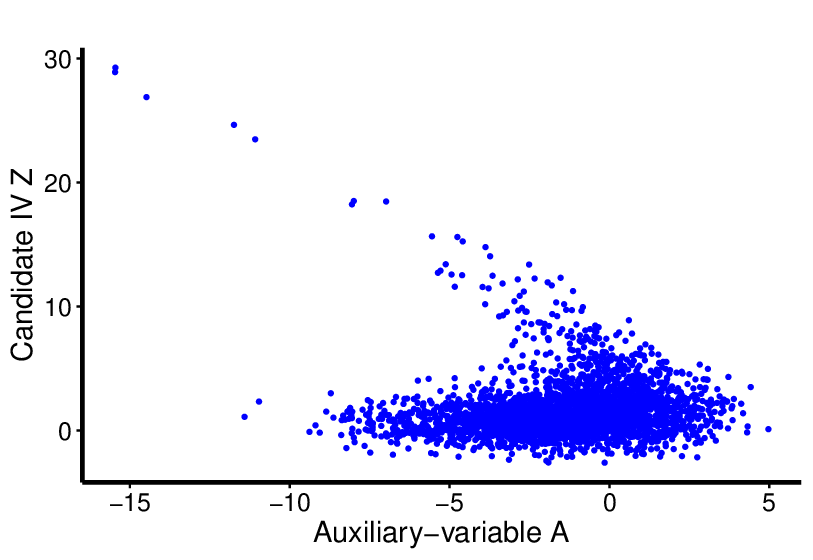}
    \caption{
    Scatter plot of Candidate IV $Z$ and Auxiliary-variable $\mathcal{A}$ when all noise terms follow Gaussian distribution in the partially non-linear invalid IV model. 
    }
    \label{fig:SP_Violate_A2_assumption-nonlinear}
    \end{center}
\end{figure}

Figure \ref{fig:SP_Violate_A2_assumption-nonlinear} presents the scatter plots of $\mathcal{A}_{X \to Y||Z}$ versus the candidate IV $Z$ in the partial non-linear Gaussian model. Compared to the linear Gaussian model, we transformed the functional relationship between $U$ and $Z$ from a linear function ($2U$) to an exponential function ($\exp(U)$). Interestingly, in the partial non-linear Gaussian model, $\mathcal{A}_{X \to Y||Z}$ and $Z$ are statistically dependent (violating AIT condition). 
Note that Proposition \ref{proposition-linear-gaussian-model} shows that the AIT condition is always satisfied in the linear Gaussian model. These findings suggest that nonlinearity is beneficial in assessing the validity of \emph{exogeneity} condition. 

We now investigate the conditions under which the invalid IV can be detected in terms of AIT condition. It is noteworthy that the ANINCE model (Equation \eqref{Eq-Main-Model}) is flexible as functions $g$, $f$, $\varphi_{X}$, and $\varphi_{Y}$ might be any unknown functions. Consequently, without imposing further parametric assumptions, it is impossible to determine the explicit forms of the estimated $f$ and $\mathcal{A}_{X \to Y||Z}$. Hence, let $h(\cdot)$ be a function satisfying $\mathbb{E}[ Y - h(X)|Z] = {0}$ and $h(\cdot) \neq {0}$. According to the definition of the AIT condition, 
the auxiliary variable $\mathcal{A}_{X \to Y||Z}$ is given by:
\begin{equation}
    \begin{aligned}
        \mathcal{A}_{X \to Y||Z} = Y- h(X)= \underbrace{f(X, Z) - h(X)}_{\widetilde{f}_{bias}(X, Z)} + \varphi_{Y}(\mathbf{U}) + \varepsilon_{Y},
    \end{aligned}
\end{equation}
where $\widetilde{f}_{bias}(X, Z) = f(X,Z) - h(X)$. It is important to note that for a valid instrumental variable $Z$, $\widetilde{f}_{bias}(X, Z) =0$. 

Below, we give the key assumption regarding the second-order partial derivative in the nonlinear model.

\begin{Assumption}[Distributional Non-degeneracy Condition]\label{Ass-algebraic-condition}
Assume that the joint density $p(\mathcal A_{X\to Y\|Z},Z)$ is twice continuously differentiable and satisfies
\begin{equation}\label{Eq:second-order-partial-derivative}
\begin{aligned}
\frac{\partial^2}{\partial \mathcal A_{X\to Y\|Z}\,\partial Z}
\log p(\mathcal A_{X\to Y\|Z},Z)\neq 0
\end{aligned}
\end{equation}
on a set with non-zero Lebesgue measure.
\end{Assumption}

Assumption \ref{Ass-algebraic-condition} is a natural condition that one expects to hold for detecting all invalid IVs in the ANINCE model. Intuitively, if the auxiliary variable $\mathcal{A}_{X \to Y \| Z}$ is independent of the instrument $Z$ (i.e., AIT holds), then the joint probability density of $(\mathcal{A}_{X \to Y \| Z}, Z)$ factorizes into the product of the marginal densities: $p(\mathcal{A}_{X \to Y \| Z}, Z) = p(\mathcal{A}_{X \to Y \| Z}) \cdot p(Z)$. According to \cite{lin1997factorizing}, for a set of independent random variables whose joint density is twice differentiable, the Hessian matrix of the logarithm of their joint density is diagonal everywhere (see Theorem \ref{Theorem-lin} in Appendix \ref{Appendix-proofs} for further details). Taking the logarithm of $p(\mathcal{A}_{X \to Y \| Z}, Z)$ and computing its second-order mixed partial derivative yields: $\frac{\partial^2 \log p(\mathcal{A}_{X \to Y \| Z}, Z)}{\partial \mathcal{A}_{X \to Y \| Z} \, \partial Z} = 0$. In contrast, if AIT does not hold—\ie, $\mathcal{A}_{X \to Y \| Z}$ and $Z$ are dependent—then $p(\mathcal{A}_{X \to Y \| Z}, Z) \neq p(\mathcal{A}_{X \to Y \| Z}) \cdot p(Z)$, which implies that $\frac{\partial^2 \log p(\mathcal{A}_{X \to Y \| Z}, Z)}{\partial \mathcal{A}_{X \to Y \| Z} \, \partial Z} \neq 0$ (Equation \eqref{Eq:second-order-partial-derivative}). 

\tworevise{
Below, we provide two nontrivial analytic examples that satisfy the distributional non-degeneracy condition. 
\begin{Example-set}\label{Example:violation of exogeneity}{\normalfont(\textbf{Violation of the Exogeneity Condition})} 
Consider the causal graph in Figure~\ref{Fig:main-example-graph} (b), where $Z$ violates exogeneity. Let the structural model be $U=\varepsilon_U$, $Z=\gamma U+\varepsilon_Z$, $X=\exp(Z)+\rho U+\varepsilon_X$, and $Y=\beta X+\kappa U+\varepsilon_Y$, where $\gamma\neq 0$ and $(\varepsilon_U,\varepsilon_Z,\varepsilon_X,\varepsilon_Y)\stackrel{\mathrm{ind}}{\sim}\mathcal N(0,1)$. According to the definition of auxiliary variables, we obtain $\mathcal{A}_{X \to Y||Z}=(\beta-\hat\beta)\exp(Z)+\big((\beta-\hat\beta)\rho+\kappa\big)U+(\beta-\hat\beta)\varepsilon_X+\varepsilon_Y$, where $\hat\beta
=\frac{\Cov(Y,Z)}{\Cov(X,Z)}=\beta+\frac{\kappa\gamma}{\sigma_Z^2 \exp({\sigma_Z^2/2})+\rho\gamma}$ and $\sigma_Z^2=\gamma^2+1$. Conditioned on $Z=z$, the auxiliary variable $\mathcal{A}_{X \to Y||Z}$ is a linear combination of Gaussian variables $\varepsilon_U, \varepsilon_X$, and $\varepsilon_Y$, implying $\mathcal{A} \mid Z=z \sim \mathcal{N}(m(z), v)$, where $m(z)=(\beta-\hat\beta)\exp({z})+\big((\beta-\hat\beta)\rho+\kappa\big)\frac{\gamma}{\sigma_Z^2}z$ and $v=\big((\beta-\hat\beta)\rho+\kappa\big)^2\frac{1}{\sigma_Z^2}+(\beta-\hat\beta)^2+1$. Using the factorization $p(\mathcal A,Z)=p(\mathcal A\mid Z)p(Z)$, we take the derivative of the joint density of $(\mathcal{A},Z)$ and obtain: 
\begin{equation}\nonumber
\frac{\partial^2 \log p(\mathcal A,Z)}{\partial \mathcal A\,\partial Z}=\frac{\partial^2 \log p(\mathcal A\mid Z)}{\partial \mathcal A\,\partial Z}= \frac{m'(Z)}{v} =\frac{\kappa\gamma-\beta_{bias}[\rho\gamma+\exp(Z)(\gamma^2+1)]}{(\kappa-\beta_{bias}\rho)^2 + (\gamma^2+1)(\beta_{bias}^2+1)}, 
\end{equation}
where $\beta_{bias}=\frac{\kappa\gamma}{\sigma_Z^2 \exp({\sigma_Z^2/2})+\rho\gamma}$. Since the numerator $\kappa\gamma-\beta_{bias}[\rho\gamma+\exp(Z)(\gamma^2+1)]$ is not identically zero, the mixed derivative is nonzero on a set with non-zero Lebesgue measure, and therefore Assumption~\ref{Ass-algebraic-condition} holds. A detailed derivation is provided in Appendix~\ref{App:Ex1-exogeneity}. 
\end{Example-set}

\begin{Example-set}\label{Example:violation of exclusion}{\normalfont(\textbf{Violation of the Exclusion Restriction Condition})} 
Consider the causal graph in Figure~\ref{Fig:main-example-graph} (c), where $Z$ violates the exclusion restriction. Let the structural model be $U=\varepsilon_U$, $Z=\varepsilon_Z$, $X=\exp(Z)+\rho U+\varepsilon_X$, and $Y=\beta X+\nu Z+\kappa U+\varepsilon_Y$, where $\nu\neq 0$ and $(\varepsilon_U,\varepsilon_Z,\varepsilon_X,\varepsilon_Y)\stackrel{\mathrm{ind}}{\sim}\mathcal N(0,1)$. According to the definition of auxiliary variables, we obtain $\mathcal{A}_{X \to Y||Z}=(\beta-\hat\beta)\exp(Z)+\nu Z+\big((\beta-\hat\beta)\rho+\kappa\big)U+(\beta-\hat\beta)\varepsilon_X+\varepsilon_Y$, where $\hat\beta
=\frac{\Cov(Y,Z)}{\Cov(X,Z)}=\beta+\nu \exp({-1/2})$. Conditioned on $Z=z$, the auxiliary variable $\mathcal{A}_{X\to Y||Z}$ is a linear combination of Gaussian variables $U, \varepsilon_X$, and $\varepsilon_Y$, implying $\mathcal{A} \mid Z=z \sim \mathcal{N}(m(z), v)$, where $m(z)=(\beta-\hat\beta)\exp({z})+\nu z$, and 
$v=\big((\beta-\hat\beta)\rho+\kappa\big)^2+(\beta-\hat\beta)^2+1$. Using the factorization $p(\mathcal A,Z)=p(\mathcal A\mid Z)p(Z)$, we take the derivative of the joint density of $(\mathcal{A},Z)$ and obtain: 
\begin{equation}\nonumber
\frac{\partial^2 \log p(\mathcal A,Z)}{\partial \mathcal A\,\partial Z}=\frac{\partial^2 \log p(\mathcal A\mid Z)}{\partial \mathcal A\,\partial Z} = \frac{m'(Z)}{v} =\frac{\nu [1-\exp(Z-1/2)]}{[\kappa-\nu \exp{(\frac{-1}{2})}]^2 + \nu^2\exp{(-1)+1}}.
\end{equation}
Since the numerator $\nu [1-\exp(Z-1/2)]$ is not identically zero, the mixed derivative is nonzero on a set with non-zero Lebesgue measure, and therefore Assumption~\ref{Ass-algebraic-condition} holds. A detailed derivation is provided in Appendix~\ref{App:Ex2-exclusion}. 
\end{Example-set}
}

\begin{Proposition}[Testability of IV in ANINCE Models]\label{Proposition-identifiable-IV-ANINCE}
Let $X$, $Y$, and $Z$ be the treatment, outcome, and candidate IV in an ANINCE model, respectively. Suppose that $X$, $Y$, and $Z$ are correlated and that \revise{Assumptions \ref{Ass-completeness} and \ref{Ass-algebraic-condition} hold}. If the candidate IV $Z$ is invalid, then $\{X, Y||Z\}$ violates the AIT condition. 
\end{Proposition}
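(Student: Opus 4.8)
The plan is to essentially run the argument of Theorem~\ref{Theorem-Necessary-Condition-IV} in reverse, now exploiting the failure of the auxiliary variable to equal $\varphi_Y(\mathbf{U})+\varepsilon_Y$. First I would observe that since $Z$ is an invalid IV, the estimated regression function $h(\cdot)$ extracted from $\mathbb{E}[Y-h(X)\mid Z]=0$ no longer recovers the true causal function $f(\cdot,Z)$, so that
\begin{equation}
\mathcal{A}_{X \to Y||Z} = \widetilde{f}_{bias}(X,Z) + \varphi_Y(\mathbf{U}) + \varepsilon_Y,
\end{equation}
with $\widetilde{f}_{bias}(X,Z)\neq 0$, where $X = g(Z)+\varphi_X(\mathbf{U})+\varepsilon_X$. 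Writing everything in terms of the independent source variables $(\boldsymbol{\varepsilon_U},\varepsilon_Z)$ (recalling $Z$ is a function of $\boldsymbol{\varepsilon_U}$ and $\varepsilon_Z$, and $\mathcal{A}$ is a function of $\boldsymbol{\varepsilon_U},\varepsilon_Z,\varepsilon_X,\varepsilon_Y$ through $\widetilde{f}_{bias}$ and $\varphi_Y$), I would set up the change of variables from $(\mathcal{A}_{X\to Y||Z},Z)$ to $(\boldsymbol{\varepsilon_U},\varepsilon_Z)$ exactly as in the proof of Theorem~\ref{Theorem-Necessary-Condition-IV}, obtaining $\log p(\mathcal{A},Z) = K_1+K_2+K_3$ with $K_1=\log p(\boldsymbol{\varepsilon_U})$, $K_2=\log p(\varepsilon_Z)$, $K_3=\log|J|$.

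Next I would invoke the linear separability result of \citet{lin1997factorizing} (Theorem~\ref{Theorem-lin} in the appendix): $\mathcal{A}_{X\to Y||Z}$ and $Z$ are independent if and only if their joint log-density has vanishing cross second-order partial derivative everywhere, i.e.
\begin{equation}
\frac{\partial^2 \log p(\mathcal{A}_{X \to Y||Z},Z)}{\partial \mathcal{A}_{X \to Y||Z}\,\partial Z} = 0 .
\end{equation}
Expanding this mixed partial through $K_1+K_2+K_3$ reproduces exactly the left-hand side of Equation~\eqref{Eq:second-order-partial-derivative-unfold}. By Assumption~\ref{Ass-algebraic-condition}, this expression is not identically zero. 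Hence the cross partial of $\log p(\mathcal{A},Z)$ does not vanish, so by the (contrapositive of the) separability criterion $\mathcal{A}_{X\to Y||Z}$ and $Z$ are statistically dependent, which is precisely the statement that $\{X,Y||Z\}$ violates the AIT condition.

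The main obstacle, and the part that needs care rather than just citation, is making the change-of-variables step rigorous in the invalid-IV regime: unlike the valid case, $\mathcal{A}$ now genuinely depends on $Z$ through $\widetilde{f}_{bias}(X,Z)$ and through $X=g(Z)+\cdots$, so one must check that the transformation $(\mathcal{A},Z)\mapsto(\boldsymbol{\varepsilon_U},\varepsilon_Z)$ is a well-defined smooth bijection with non-vanishing Jacobian on the relevant domain (here the hypotheses that $p(\boldsymbol{\varepsilon_U}),p(\varepsilon_Z)$ are twice differentiable and strictly positive on all of $\mathbb{R}$, together with smoothness and invertibility of $\varphi_Y$, do the work) so that $K_3=\log|J|$ is itself twice differentiable and the mixed partial in Equation~\eqref{Eq:second-order-partial-derivative-unfold} is legitimate. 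A secondary subtlety is that $\mathcal{A}$ also contains $\varepsilon_X$ and $\varepsilon_Y$; I would either absorb these into the argument by noting they are independent of $(\boldsymbol{\varepsilon_U},\varepsilon_Z)$ and integrating them out (so $p(\mathcal{A},Z)$ is a marginal, still with a factorized structure in the remaining independent coordinates after a suitable augmentation of the map), or treat the full joint density over all source noises and then marginalize, verifying the cross-partial nonvanishing survives marginalization — this is where Assumption~\ref{Ass-algebraic-condition} is doing its real work and is exactly why it is stated at the level of $\log p(\mathcal{A},Z)$ rather than of the full source density. Once these regularity points are settled, the conclusion follows immediately from \citet{lin1997factorizing} as above.
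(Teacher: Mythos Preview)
Your proposal is correct and follows essentially the same approach as the paper: set up the transformation $(\mathcal{A},Z)\mapsto(\boldsymbol{\varepsilon_U},\varepsilon_Z)$, write $\log p(\mathcal{A},Z)=K_1+K_2+K_3$, expand the mixed second partial to obtain exactly the expression in Assumption~\ref{Ass-algebraic-condition}, and conclude dependence via the separability criterion of \citet{lin1997factorizing}. The paper additionally spells out which individual partial derivatives are nonzero in the two invalid-IV scenarios (violation of exogeneity vs.\ exclusion restriction) before invoking Assumption~\ref{Ass-algebraic-condition}, whereas your discussion of the regularity of the change of variables and the handling of $\varepsilon_X,\varepsilon_Y$ raises issues the paper's proof simply passes over.
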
 
\begin{proof}
See Appendix \ref{Proof-Proposition-identifiable-IV-ANINCE} for its proof.
\end{proof}
The proposition above shows that, \revise{under Assumptions \ref{Ass-completeness} and \ref{Ass-algebraic-condition}}, the AIT condition can be used to detect invalid IVs. Although it is not obvious whether Assumption \ref{Ass-algebraic-condition} (\tworevise{the dstributional non-degeneracy condition}) holds in general, some solutions under which the Assumption \ref{Ass-algebraic-condition} does not hold are worth reporting, \revise{as shown in the following proposition.}

\revise{
\begin{Proposition}
\label{Pro-violate-algebraic}
    Let $X$, $Y$, and $Z$ be the treatment, outcome, and candidate IV in an ANINCE model, respectively. Assumption \ref{Ass-algebraic-condition} does not hold if any of the following conditions is satisfied: 
    \begin{enumerate}
        \item[(i)] the candidate IV $Z$ solely violates the exclusion restriction condition and the direct causal effect of $Z \to Y$ is a linear function of the direct causal effect from $Z \to X$ in the ANINCE model, i.e., $g_Y(Z) = a \cdot g_X(Z) + b$, where $a$ is the non-zero constant; 
        \item[(ii)] all noise terms associated with the variables follow Gaussian distributions within the linear causal model. 
    \end{enumerate}
\end{Proposition}
\begin{proof}
    See Appendix \ref{proof-Pro-violate-algebraic} for its proof.
\end{proof}
}

\revise{Condition $(i)$ implies that the AIT condition is always satisfied when the candidate IV $Z$ satisfies the \emph{exogeneity} condition and the relationship of effect $g_Y(Z) = a\cdot g_X(Z) + b$ holds in the ANINCE model. In other words, if the direct causal effect of $Z \to Y$ is not a linear function of the direct causal effect of $Z \to X$, we can identify invalid IVs that solely violate the exclusion restriction condition using the AIT condition. Intuitively, the distribution of the invalid IV model under condition $(i)$, in which candidate IV $Z$ solely violates the exclusion restriction condition and satisfies the specific linear relationship, can be transformed into the distribution of the valid IV model, as illustrated in Example \ref{Example:valid-invalid-transfromed-ANINCE}. Note that the constraint $g_Y(Z) = a \cdot g_X(Z) + b$ naturally arises in linear causal models. Thus, condition $(i)$ further implies that when the candidate IV $Z$ solely violates the exclusion restriction condition in the linear causal model, Assumption \ref{Ass-algebraic-condition} does not hold, which is consistent with the result of Proposition \ref{proposition-linear-model-exclusion}. In addition, condition $(ii)$ shows that the AIT condition is always satisfied in the linear Gaussian model, which aligns with the conclusion of Proposition \ref{proposition-linear-gaussian-model}. }

\begin{Example-set}\label{Example:valid-invalid-transfromed-ANINCE}
Continue to consider the causal graph in Figure \ref{Fig:main-example-graph} (c), where $Z$ is an invalid IV relative to $X \to Y$. The generating mechanism is as follows:
\begin{equation}\label{Eq-main-model-exclusion}
\begin{aligned}
    Z = \varepsilon_{Z}, \quad 
    X = g_{X}(Z) + \varphi_{X}(\mathbf{U}) + \varepsilon_{X},  \quad Y = f(X) + g_Y(Z) + \varphi_{Y}(\mathbf{U}) + \varepsilon_{Y},
\end{aligned}
\end{equation}
where $g_Y(Z) = a \cdot g_X(Z) + b $.  
We now construct another model based on the causal graph shown in Figure \ref{Fig:main-example-graph} (a), where $Z$ is a valid IV relative to $X \to Y$. Let $Z^{\prime} = Z$, $X^{\prime} = X$, and $f^{\prime}(X^{\prime}) = f(X^{\prime}) + a \cdot 
g_X(Z^{\prime}) + b$. Furthermore, $Y^{\prime}$ is expressed as $Y^{\prime} = f^{\prime}(X^{\prime}) + \varphi_{Y}(\mathbf{U}) + \varepsilon_{Y} = f(X) + g_{Y}(Z) + \varphi_{Y}(\mathbf{U}) + \varepsilon_{Y} = Y$. Hence, we conclude that the distribution of $(Z, X, Y)$ and $(Z^{\prime}, X^{\prime}, Y^{\prime})$ are identical. Because we can solely observe the variables $(Z, X, Y)$, it is impossible to determine from the distribution of $(Z, X, Y)$ whether the data come from Figure \ref{Fig:main-example-graph} (a) or Figure \ref{Fig:main-example-graph} (c). In other words, we cannot ascertain whether $Z$ is a valid IV or not. 
\end{Example-set}

According to Proposition \ref{Pro-violate-algebraic}, not all invalid IVs can be identified solely from the joint distribution of observational data. Below, based on Theorem \ref{Theorem-Necessary-Condition-IV} and Proposition \ref{Proposition-identifiable-IV-ANINCE}, we introduce the necessary and sufficient conditions for invalid IV in the additive nonlinear, non-constant effects model.

\begin{Theorem}[\textbf{{Necessary and Sufficient Conditions for IV in ANINCE Models}}]\label{Theorem-Necessary-Sufficient-Condition-IV-ANINCE-Models}
Let $X$, $Y$, and $Z$ be the treatment, outcome, and candidate IV in an ANINCE model (Equation \eqref{Eq-Main-Model}), respectively. Suppose that $X$, $Y$, and $Z$ are correlated, and that \revise{Assumption \ref{Ass-completeness} holds}. 
\begin{itemize}
    \item \revise{If $Z$ is a valid IV relative to $X \to Y$, then $\{X, Y||Z\}$ always satisfies the AIT condition.} 
    \item \revise{If $Z$ is an invalid IV relative to $X \to Y$ and Assumption \ref{Ass-algebraic-condition} holds, then $\{X, Y||Z\}$ always violates the AIT condition.} 
\end{itemize}
\end{Theorem}
\begin{proof}
    See Appendix \ref{Proof-Theorem-Necessary-Sufficient-Condition-IV-ANINCE-Models} for its proof.
\end{proof}
Theorem \ref{Theorem-Necessary-Sufficient-Condition-IV-ANINCE-Models} outlines two scenarios where a candidate IV would be considered invalid: either the IV doesn't meet the \emph{exogeneity} condition or it violates the \emph{exclusion restriction} condition within a nonlinear model. \revise{
Figure \ref{fig:flowchart} presents how the AIT condition relates to the validity of IVs under different assumptions. The completeness condition (Assumption \ref{Ass-completeness}) alone guarantees necessity, while adding a \tworevise{distributional non-degeneracy condition} (Assumption \ref{Ass-algebraic-condition}) ensures sufficiency.}

\begin{figure}[h]
  \centering
\includegraphics[trim=0cm 3.7cm 2.0cm 0cm, clip, width=1.0\textwidth, page=1]{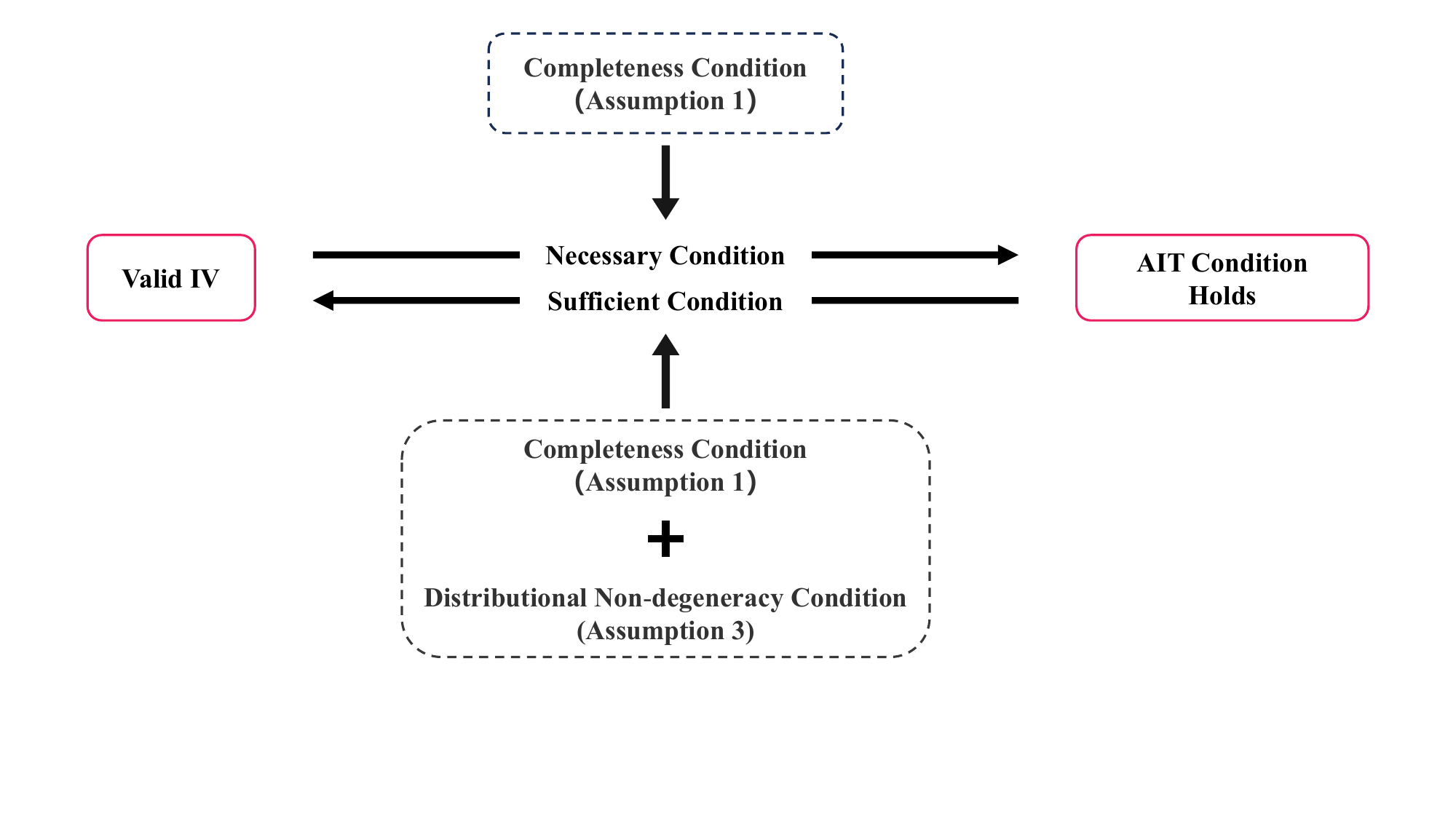}
  \caption{Flowchart showing how assumptions lead to the AIT condition being necessary or sufficient for IV validity.}
  \label{fig:flowchart}
\end{figure}

\section{Practical Implementation of AIT Condition}\label{Section-Test-Procedure}
In this section, we discuss the practical implementation of the AIT condition. In Section \ref{Section-Expanded-AIT-Condition}, we first address how to implement the AIT condition when covariates are present. \tworevise{In Section \ref{Section-algorithm}, we then propose a sample splitting procedure for implementing the AIT condition with finite samples. Finally, in Section \ref{Sec: Asymptotic Validity}, we establish the asymptotic validity of the AIT test.}

\subsection{AIT Condition with Covariates}\label{Section-Expanded-AIT-Condition}
In practice, there are scenarios where covariates $\mathbf{W}$ are present. For instance, age may influence how the treatment method affects patient recovery speed. 
\revise{Specifically, we focus on the generation of the ANINCE model with covariates as follows: 
\begin{equation}\label{Eq:ANINCE-covariates}
\begin{aligned}
    X &= g(\mathbf{W}, Z) + \varphi_{X}(\mathbf{U}) + \varepsilon_{X}, \\
    Y &= f(X, \mathbf{W}, Z) + \varphi_Y(\mathbf{U}) + \varepsilon_{Y},
    \end{aligned}
\end{equation}
where $g(\cdot):\mathbb{R}^{|\mathbf{W}|+1}\to\mathbb{R}$,
$f(\cdot):\mathbb{R}^{|\mathbf{W}|+2}\to\mathbb{R}$, and $\varphi_{*}(\cdot):\mathbb{R}^{|\mathbf{U}|}\to\mathbb{R}$
are smooth functions. The noise terms $\varepsilon_X$, $\varepsilon_Y$, $\varepsilon_Z$, and $\boldsymbol{\varepsilon_U}$ are statistically independent.} 
Below, we first extend the AIT condition from Definition \ref{Definition-AIT-condition} to account for the presence of covariates, as stated in the following definition.\begin{Definition}[\textbf{AIT Condition with Covariates}]\label{Definition-AIT-condition-covariates} Suppose the treatment $X$, the outcome $Y$, the covariates $\mathbf{W}$, and a candidate IV $Z$ are nodes in a causal graph $\mathcal{G}$. Define the auxiliary variable of the causal relationship $X \to Y$ relative to ${(Z,\mathbf{W})}$ as 
\begin{align}
\label{Eq-auxiliary-variable-covariates}
    \mathcal{A}_{X \to Y||(Z,\mathbf{W})} \coloneqq Y - h (X,\mathbf{W}),
\end{align}
where $h(\cdot)$ satisfies $\mathbb{E}[ \mathcal{A}_{X \to Y||(Z, \mathbf{W})}|Z, \mathbf{W}] = {0}$ and $h(\cdot) \neq {0}$. Define the residual of $Z$ after regressing on $\mathbf{W}$ as:
\begin{align}
    \mathcal{Z} \coloneqq Z - \mathbb{E}[Z|\mathbf{W}].
\end{align}
We say that $\{X, Y||(Z, \mathbf{W})\}$ follows the AIT condition if and only if $\mathcal{A}_{X \to Y||(Z,\mathbf{W})}$ is independent from $\mathcal{Z}$.
\end{Definition}

\tworevise{

\begin{Remark}
It is worth noting that covariates $\mathbf{W}$  can provide additional constraints that can be exploited in the IV validity test. Specifically, under the ANINCE model, incorporating covariates generalizes the AIT condition to the independence $\mathcal{A}_{X\to Y\mid\mid(Z,\mathbf{W})} \,\CI\, (Z,\mathbf{W})$, meaning that the auxiliary variable must be independent of the joint vector $(Z,\mathbf{W})$. However, directly testing independence against a high-dimensional vector is statistically challenging: nonparametric tests typically suffer from low power in high dimensions due to the curse of dimensionality and concentration of measure \citep{zhang2011kernel,ramdas2015decreasing}. To mitigate this issue, we use a statistically more tractable surrogate condition: 
$
\mathcal{A}_{X\to Y\mid\mid(Z,\mathbf{W})} \,\CI\, \mathcal{Z},
\text{where } \mathcal{Z} := Z - \mathbb{E}[Z\mid \mathbf{W}],
$
which tests independence between the auxiliary variable and the residual part of $Z$ that is 
conditionally orthogonal to $\mathbf{W}$. This preserves the intended inferential target under the ANINCE model while improving statistical power.
\end{Remark}

}

Based on Definition \ref{Definition-AIT-condition-covariates} and Theorem \ref{Theorem-Necessary-Condition-IV}, we obtain the necessary condition for IV in the presence of covariates $\mathbf{W}$, as described in the following corollary.
\begin{Corollary}[\textbf{{Necessary Condition for IV with Covariates}}]\label{Corollary-Necessary-Condition-IV-covariates}
Let $X$, $Y$, $\mathbf{W}$, and $Z$ be the treatment, outcome, covariates, and candidate IV in an ANINCE model, respectively. Suppose that $X$, $Y$, $\mathbf{W}$, and $Z$ are correlated and that \revise{Assumption \ref{Ass-completeness} holds}. If $Z$ is a valid IV relative to $X \to Y$ given $\mathbf{W}$, then $\{X, Y||(Z, \mathbf{W})\}$ always satisfies the AIT condition.
\end{Corollary}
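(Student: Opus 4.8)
The plan is to reduce the covariate case to the no-covariate case already handled by Theorem~\ref{Theorem-Necessary-Condition-IV}. First I would observe that when $Z$ is a valid IV relative to $X \to Y$ given $\mathbf{W}$, the ANINCE model conditional on $\mathbf{W}$ takes the same additive non-parametric IV form as Equation~\eqref{Eq-IV-Model}, with $g$ and $f$ now allowed to depend on $\mathbf{W}$ as an extra (observed) argument: $X = g(Z,\mathbf{W}) + \varphi_X(\mathbf{U}) + \varepsilon_X$ and $Y = f(X,\mathbf{W}) + \varphi_Y(\mathbf{U}) + \varepsilon_Y$, where validity given $\mathbf{W}$ means $Z \CI \mathbf{U} \mid \mathbf{W}$ and $Z$ has no direct edge to $Y$. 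The key consequence, via the consistent conditional-mean IV estimator (as in \citet{newey2003instrumental}), is that as $n \to \infty$ the fitted $h(X,\mathbf{W})$ recovers the true $f(X,\mathbf{W})$, so the auxiliary variable collapses to $\mathcal{A}_{X \to Y||(Z,\mathbf{W})} = Y - f(X,\mathbf{W}) = \varphi_Y(\mathbf{U}) + \varepsilon_Y$, exactly as in Equation~\eqref{Eq:Theorem-Necessary-Condition-IV-A}.

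Next I would analyze the residualized instrument $\mathcal{Z} = Z - \mathbb{E}[Z|\mathbf{W}]$. Since exogeneity holds only conditionally on $\mathbf{W}$, I would work conditionally: fix a value $\mathbf{W} = \mathbf{w}$ and note that, given $\mathbf{W}=\mathbf{w}$, the pair $(\mathcal{A}_{X \to Y||(Z,\mathbf{W})}, \mathcal{Z})$ is a deterministic smooth transformation of the independent noise pair $(\boldsymbol{\varepsilon_U}, \varepsilon_Z)$ (with $\mathbf{w}$ entering only as a parameter of the transformation). Then the same Hessian-of-log-density argument from the proof of Theorem~\ref{Theorem-Necessary-Condition-IV} — invoking the linear separability result of \citet{lin1997factorizing} and the twice-differentiability and positivity of $p(\boldsymbol{\varepsilon_U})$ and $p(\varepsilon_Z)$ — shows the $(1,2)$ entry of the Hessian of $\log p(\mathcal{A}, \mathcal{Z} \mid \mathbf{w})$ vanishes, because for a valid IV given $\mathbf{W}$ we still have $\partial \boldsymbol{\varepsilon_U}/\partial \mathcal{Z} = 0$, $\partial \varepsilon_Z/\partial \mathcal{A} = 0$, and the Jacobian of the transformation (at fixed $\mathbf{w}$) does not couple $\mathcal{A}$ and $\mathcal{Z}$. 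Hence $\mathcal{A}_{X \to Y||(Z,\mathbf{W})} \CI \mathcal{Z} \mid \mathbf{W}$.

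Finally I would upgrade conditional independence to marginal independence. Since $\mathcal{A}_{X \to Y||(Z,\mathbf{W})} = \varphi_Y(\mathbf{U}) + \varepsilon_Y$ does not depend on $\mathbf{W}$ except through $\mathbf{U}$, and — using $Z \CI \mathbf{U} \mid \mathbf{W}$ together with $\mathcal{Z}$ being a function of $(Z,\mathbf{W})$ — the conditional law of $\mathcal{Z}$ given $\mathbf{W}$ is (by construction of the residual) appropriately decoupled from $\mathbf{U}$, one concludes $\mathcal{A} \CI \mathcal{Z}$ unconditionally, i.e.\ $\{X,Y||(Z,\mathbf{W})\}$ satisfies the AIT condition. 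The main obstacle I anticipate is precisely this last step: the residualization $\mathcal{Z} = Z - \mathbb{E}[Z|\mathbf{W}]$ only removes the \emph{linear-in-conditional-mean} dependence on $\mathbf{W}$, so marginalizing over $\mathbf{W}$ could in principle reintroduce dependence between $\mathcal{A}$ and $\mathcal{Z}$ unless one argues carefully (e.g.\ that $\mathbf{U} \CI \mathbf{W}$ or that the relevant joint density still factors after integrating out $\mathbf{w}$). I would handle this either by an additional mild assumption made explicit elsewhere in the paper, or by noting that the independence $\mathcal{A} \CI \mathcal{Z} \mid \mathbf{W}$ plus $\mathcal{A} \CI \mathbf{W}$ (which follows since $\mathcal{A} = \varphi_Y(\mathbf{U})+\varepsilon_Y$ and, under validity, the part of $\mathbf{U}$ affecting $Y$ is not a descendant of $\mathbf{W}$ in the relevant sense) together yield the marginal statement.
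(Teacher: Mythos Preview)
Your first step---showing that validity forces $h(X,\mathbf{W})=f(X,\mathbf{W})$ and hence $\mathcal{A}_{X\to Y||(Z,\mathbf{W})}=\varphi_Y(\mathbf{U})+\varepsilon_Y$---is exactly what the paper does. The divergence is in how you handle $\mathcal{Z}$.

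The paper does \emph{not} argue conditionally on $\mathbf{W}=\mathbf{w}$ and then try to marginalize. Instead, in the proof it posits the additive structural equation $Z=t_Z(\mathbf{W})+\varepsilon_Z$ (with $\varepsilon_Z$ an exogenous noise independent of $\boldsymbol{\varepsilon_U}$), so that residualizing gives $\mathcal{Z}=Z-\mathbb{E}[Z\mid\mathbf{W}]=\varepsilon_Z$ \emph{exactly}. Then $(\mathcal{A},\mathcal{Z})$ is a smooth transformation of the marginally independent pair $(\boldsymbol{\varepsilon_U},\varepsilon_Z)$, and the Hessian argument from Theorem~\ref{Theorem-Necessary-Condition-IV} applies directly to the \emph{unconditional} joint density $p(\mathcal{A},\mathcal{Z})$. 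No conditional-to-marginal upgrade is needed.

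Your route runs into real trouble at the last step, and your own proposed fixes do not work in the paper's model. In particular, the paper's generating mechanism has $\mathbf{W}=\varphi_W(\mathbf{U})+\boldsymbol{\varepsilon_W}$, so $\mathbf{W}$ and $\mathcal{A}=\varphi_Y(\mathbf{U})+\varepsilon_Y$ both depend on $\mathbf{U}$; the auxiliary independence $\mathcal{A}\CI\mathbf{W}$ you invoke therefore generally fails. Likewise, ``the part of $\mathbf{U}$ affecting $Y$ is not a descendant of $\mathbf{W}$'' does not rescue you, since the dependence runs the other way ($\mathbf{U}\to\mathbf{W}$). The clean resolution is precisely the additive form $Z=t_Z(\mathbf{W})+\varepsilon_Z$: once you write that down, $\mathcal{Z}=\varepsilon_Z$ is marginally independent of $\boldsymbol{\varepsilon_U}$ and the rest is identical to the no-covariate proof. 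You correctly anticipated that some extra structural assumption was doing the work; this is it.
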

\begin{proof}
    See Appendix \ref{Proof:Corollary-Necessary-Condition-IV-covariates} for its proof. 
\end{proof}
Corollary \ref{Corollary-Necessary-Condition-IV-covariates} means that if $\{X, Y||(Z, \mathbf{W})\}$ violates the AIT condition, then $Z$ is an invalid IV relative to $X \to Y$ given $\mathbf{W}$. Otherwise, $Z$ may or may not be valid.

\tworevise{
\subsection{AIT Condition with Finite Data}\label{Section-algorithm}

Below, we provide the practical implementation of the AIT condition with finite data. For a candidate IV $Z$, under the ANINCE model, we need to test the following hypothesis:
\begin{equation}
    \begin{aligned}
        & H_0 : Z \text{ is a valid IV}, \quad H_1 : Z \text{ is an invalid IV}.
    \end{aligned}
\end{equation}

Since the AIT test is implemented as a plug-in procedure involving estimated functions, including the function $\hat{h}(X,\mathbf{W})$ and, when covariates are present, the regression function $\hat{\pi}(\mathbf{W})$ of $\mathbb{E}[Z \mid \mathbf{W}]$, we employ sample splitting to separate estimation and testing \citep{doran2014permutation, chernozhukov2018double, polo2023conditional, ren2025regression,scheidegger2025residual}. 
Specifically, we randomly split the observed dataset $\mathcal{D}$ with $|\mathcal{D}| = N$ into two disjoint subsets, $\mathcal{D} = \mathcal{D}_1 \cup \mathcal{D}_2$, where $|\mathcal{D}_1| = n = \lfloor \rho N \rfloor$, $\rho \in (0,1)$ is the splitting ratio, and $|\mathcal{D}_2| = m = N - n$. The functions $\hat{h}(\cdot)$ and $\hat{\pi}(\cdot)$ (if covariates are present) are estimated using dataset $\mathcal{D}_1$, and the statistical independence test between $\hat{\mathcal{A}}_{X \to Y \| (Z, \mathbf{W})} = Y - \hat{h}(X, \mathbf{W})$ 
and $\hat{\mathcal{Z}} = Z - \hat{\pi}(\mathbf{W})$ is then implemented using dataset $\mathcal{D}_2$. 

Next, we describe how to address the two main issues above, including estimating the functions $\hat{h}(\cdot)$ and $\hat{\pi}(\cdot)$ on $\mathcal{D}_1$, followed by conducting the independence test on $\mathcal{D}_2$. 

\textbf{Issue 1: Estimating the Functions on $\mathcal{D}_1$.} Regarding the estimation of the function $\hat{h}(\cdot)$, the intuitive idea is to apply a standard IV estimator to obtain $\hat{h}(X, \mathbf{W})$. Note that solving a nonparametric IV problem is, in general, highly nontrivial and may require regularization to address issues such as slow convergence rates, as thoroughly discussed in \cite{darolles2011nonparametric}. 
\citet{newey2003instrumental} show that, given a valid IV and under the completeness condition, 
the estimated function $\hat{h}(X, \mathbf{W})$ converges to the true function $f(X, \mathbf{W})$ in the population (infinite-sample) limit 
for exponential families or for discrete data with finite support. 
Conversely, if the IV is invalid, the population-limit estimator $\hat{h}(X, \mathbf{W})$ is biased relative to the true model. There are now many IV estimators for Additive Non-Parametric IV Models \citep{guo2016control, singh2019kernel, bennett2019deep}. Rather than proposing a new estimation strategy for nonparametric IV models, our paper focuses on developing a testable condition for identifying valid instruments using existing IV estimation methods, under an oracle scenario where the estimator is assumed to have access to the true structural function. Hence, we here adopt the control function IV estimator proposed by \citet{guo2016control}, which is a two-stage approach. In the first stage, it regresses the treatment variable $X$ on covariates $\mathbf{W}$ and instruments $(Z, \tau_2(Z), \dots, \tau_i(Z))$ (a known vector of linearly independent functions of $Z$), obtaining the predicted value $\hat{X}$ and the residual $e_1 = X - \hat{X}$. In the second stage, the outcome $Y$ is regressed on the treatment $X$, covariates $\mathbf{W}$, and the residual $e_1$ from the first stage regression. The coefficients of the second stage regression are taken as the control function estimates. Note that nonlinear functions can be estimated using basis functions. In our experiments, we apply polynomial basis functions as a predefined set of linearly independent functions, leveraging this control function method. 
\tworevise{Regarding the regression function $\hat{\pi}(\cdot)$, when the covariates $\mathbf{W}$ are present, we use the random forest regression method \citep{pedregosa2011scikit,breiman2001random} to fit the function $\hat{\pi}(\mathbf{W})$.}
}

\tworevise{\textbf{Issue 2: Testing Independence on $\mathcal{D}_2$.} 
As a first step, we compute the estimated auxiliary variable 
$\hat{\mathcal{A}}_{X \to Y \| (Z, \mathbf{W})} = Y - \hat{h}(X, \mathbf{W})$ 
and the residualized instrument 
$\hat{\mathcal{Z}} = Z - \hat{\pi}(\mathbf{W})$ using the dataset $\mathcal{D}_2$, 
where the functions $\hat{h}(\cdot)$ and $\hat{\pi}(\cdot)$ were estimated on $\mathcal{D}_1$ in the previous Issue 1. To check whether \tworevise{$\hat{\mathcal{A}}_{X \to Y \| (Z, \mathbf{W})}$ and $\hat{\mathcal{Z}}$} are statistically independent, we employ the Large-Scale HSIC Test, a Hilbert-Schmidt Independence Criterion (HSIC)-based test proposed by \citet{zhang2018large} for independence testing. If the output $p_{value}$ is less than the preset significance level $\alpha$, we reject the null hypothesis $H_0$, indicating that the candidate IV $Z$ is invalid. Conversely, if we fail to reject the null hypothesis, it suggests that $Z$ is a valid IV.

Based on the above discussions, the complete \emph{AIT Condition} test procedure is given in Algorithm \ref{Alg: AIT-Condition-Test}. To allow for greater flexibility in the algorithm and considering that prior knowledge, such as a constant causal effect, may sometimes be available, we include a two-stage least squares estimator suitable for linear effects \citep{basmann1957generalized,henckel2024graphical} to estimate the causal effect.
}

\begin{algorithm}[h]
\caption{AIT Condition}
\label{Alg: AIT-Condition-Test}
\begin{algorithmic}[1]
\tworevise{
\REQUIRE Observed dataset $\mathcal{D} = \{X, Y, \mathbf{W}, Z\}$, where $X$ is the treatment, $Y$ is the outcome, $\mathbf{W}$ denotes covariates, and $Z$ is a candidate IV; block size $B$; data split ratio $\rho$; significance level $\alpha$ 
\STATE Initialize: Result $\gets$ Do not reject $H_0$
\vspace{0.15cm}

\STATE \textbf{Sample Splitting.} Randomly split the dataset $\mathcal{D}$ with $|\mathcal{D}| = N$ into two disjoint subsets, $\mathcal{D}_1$ and $\mathcal{D}_2$, where $|\mathcal{D}_1| = n = \lfloor \rho N \rfloor$, and $
|\mathcal{D}_2| = m = N - n$.

\vspace{0.15cm}

\textbf{Step 1: Estimate functions $h(\cdot)$ and $\pi(\cdot)$ on $\mathcal{D}_1$} 

\IF{the constant-effect assumption is adopted}
    \STATE $\hat{h}(\cdot) \gets 
        \text{Two-Stage Least Squares Estimator }(X, Y, \mathbf{W}, Z)$ on $\mathcal{D}_1$
\ELSE
    \STATE $\hat{h}(\cdot) \gets 
        \text{Control Function IV Estimator }(X, Y, \mathbf{W}, Z)$ on $\mathcal{D}_1$
\ENDIF

\vspace{0.1cm}
\IF{$\mathbf{W}\neq \emptyset$}
    \STATE Fit the regression function $\hat{\pi}(\cdot)$ using the random forest regression method on $\mathcal{D}_1$
\ENDIF

\vspace{0.25cm}

\textbf{Step 2: Test the AIT Condition on $\mathcal{D}_2$}

\STATE Compute the auxiliary variable $\hat{\mathcal{A}}_{X \to Y \| (Z, \mathbf{W})} \gets Y - \hat{h}(X,\mathbf{W})$ on $\mathcal{D}_2$

\IF{$\mathbf{W} \neq \emptyset$}
    \STATE Compute the residualized instrument $\hat{\mathcal{Z}} \gets Z - \hat{\pi}(\mathbf{W})$ on $\mathcal{D}_2$
\ELSE
    \STATE Set $\hat{\mathcal{Z}} \gets Z$ on ${\mathcal{D}_2}$
\ENDIF

\STATE $p_{value} \gets 
    \text{Large-Scale HSIC Test}(\hat{\mathcal{A}}_{X \to Y\|(Z,\mathbf{W})}, \hat{\mathcal{Z}}, B)$

\IF{$p_{value} < \alpha$}
    \STATE Result $\gets$ Reject $H_0$
\ELSE
    \RETURN Result
\ENDIF

\ENSURE Result
}
\end{algorithmic}
\end{algorithm}

\tworevise{
\subsection{Theoretical Analysis of Type I and Type II Errors in the AIT Test}\label{Sec: Asymptotic Validity}
In this section, we investigate the asymptotic properties of the AIT test, focusing on both Type I and Type II errors. Despite being a plug-in procedure, the AIT test achieves asymptotic control of the Type I error at the nominal level under the null hypothesis and appropriate regularity conditions. At the same time, the AIT test exhibits desirable power properties against fixed alternatives.

In the implementation of the algorithm, we adopt the block-based HSIC statistic introduced in \citet{zhang2018large}. We first briefly review the relevant theoretical results on the Hilbert–Schmidt independence criterion for testing independence between two random variables; see \cite{zhang2018large, gretton2005measuring, gretton2008kernel}. Let $U$ and $V$ be random variables taking values in domains $\mathcal{U}$ and $\mathcal{V}$, respectively. Given kernels $k:\mathcal{U}\times \mathcal{U} \to \mathbb R$, $l:\mathcal{V}\times \mathcal{V} \to \mathbb R$, and the independently and identically distributed samples $S_{i=1}^m= \{u_i,v_i\}_{i=1}^m$, the samples are split into $\frac{m}{B}$ disjoint blocks, each of size $B$, denoted by $\{\{u_i^{(b)},v_i^{(b)}\}_{i=1}^B\}_{b=1}^{m/B}$. For each block $b \in \{1, \ldots, \frac{m}{B}\}$, the unbiased HSIC statistic is \begin{equation}\label{Eq:single-block-HSIC}
    \begin{aligned}
    \hat{\eta}_{b}:= \frac{1}{B(B-3)}\left[\operatorname{tr}\left(\tilde{K}^{(b)} \tilde{L}^{(b)}\right)+\frac{\mathbbm{1}^{T} \tilde{K}^{(b)} \mathbbm{1}\mathbbm{1}^{T} \tilde{L}^{(b)} \mathbbm{1}}{(B-1)(B-2)}\right. \left.-\frac{2}{B-2} \mathbbm{1}^{T} \tilde{K}^{(b)} \tilde{L}^{(b)} \mathbbm{1}\right], 
    \end{aligned}
\end{equation}
where \(\tilde{K} = K - \operatorname{diag}(K)\), with 
\(K_{i,j} := k(u_i,u_j)\), \ie, the kernel matrix with diagonal elements set to zero, and similarly for $\tilde{L}$. Here, $\mathbbm{1}$ denotes a vector of ones of relevant dimension. Throughout, we assume that the block size satisfies $B \ge 4$; when the sample size $m$ is not divisible by $B$, the number of blocks is taken to be $\lfloor m/B \rfloor$, so that the estimator is well defined.

The block-based HSIC estimator is then given by $\widehat{\mathrm{HSIC}} = \frac{B}{m}\sum_{b=1}^{m/B}\hat{\eta}_{b}$. 
According to \cite{zhang2018large}, in the asymptotic regime where $m \to \infty$, $B \to \infty$, and $m/B \to \infty$, we have 
\[
\sqrt{mB} \, \widehat{\mathrm{HSIC}} \xrightarrow{D} \mathcal{N}(0, \sigma^2),
\]
where $\sigma^2$ is the variance of the null distribution. 

For notational convenience, let $k$ and $l$ denote the kernels used to construct $\mathcal{A}$ and $\mathcal{Z}$ in the oracle setting, as well as $\hat{\mathcal{A}}$ and $\hat{\mathcal{Z}}$ in the estimated setting of the AIT test. For simplicity, we show the asymptotic validity of the AIT test without covariates. Before that, we first introduce regularity assumptions on the kernels required for our theoretical results.

\begin{Assumption}[Boundedness and Lipschitz Kernels]
\label{Ass: Bounded Lipschitz Kernels}
Let $\mathcal U \subset \mathbb R^{d_u}$ and 
$\mathcal V \subset \mathbb R^{d_v}$, and let $\|\cdot\|$ denote the Euclidean norm. 
\begin{enumerate}
\item[(1)] The kernel $l: \mathcal V \times \mathcal V \to \mathbb R$ 
is bounded, i.e., there exists a constant $M>0$ such that 
$|l(v,v')|\le M$ for all $v,v'\in\mathcal V$.

\item[(2)] The kernel $k: \mathcal U \times \mathcal U \to \mathbb R$ is Lipschitz continuous, i.e., there exists a constant $Q>0$ such that $|k(u,u')-k(\tilde u,\tilde u')|
\le Q(\|u-\tilde u\|+\|u'-\tilde u'\|)$
for all $u,u',\tilde u,\tilde u' \in \mathcal U$.
\end{enumerate}
\end{Assumption}

These two conditions are standard in the kernel literature. In particular, the Gaussian radial basis function (RBF) kernel satisfies both boundedness and Lipschitz continuity on compact domains. Moreover, similar regularity assumptions have been employed in kernel-based independence testing, including the regression-based conditional independence tests studied by \citet{polo2023conditional, ren2025regression}.

\begin{Assumption}[$\sqrt{n}$-Consistency of $\hat{h}$]\label{Ass:sqrt-n-consistency}
The estimator $\hat h$ satisfies the $\sqrt{n}$-convergence rate:
\[
\|h - \hat h\|_{L_2(P_{X})} = O_p(n^{-1/2}).
\]
where $h$ denotes the oracle function, and $\|\cdot\|_{L_2(P_X)}$ denotes the $L_2$ norm with respect to the distribution of $X$. 
\end{Assumption}

Note that fully nonparametric IV estimators generally fail to achieve the $\sqrt{n}$-rate due to the ill-posedness of the associated inverse problem \citep{horowitz2012specification,chen2011rate,hall2005nonparametric}. 
In the nonlinear case, we employ the control function IV estimator proposed by \citet{guo2016control}, which is semiparametric and attains the $\sqrt{n}$-rate under standard regularity conditions \citep{guo2016control,murphy2002estimation}. 
In the linear case, the two-stage least squares (2SLS) estimator also satisfies Assumption~\ref{Ass:sqrt-n-consistency} \citep{basmann1957generalized,henckel2024graphical,wooldridge2010econometric}.
}

\tworevise{
\begin{Theorem}[Asymptotic Level and Power of the AIT Test]\label{The:HSIC}
Suppose that Assumptions~\ref{Ass: Bounded Lipschitz Kernels} and \ref{Ass:sqrt-n-consistency} hold, and that the sample sizes satisfy $mB = o(n)$ as $n \to \infty$.
\begin{itemize}
\item \textbf{Type I Error:} Under the null hypothesis $H_0$, 
$$P_{H_{0}}(\text {Type I error})=P_{H_{0}}\left(\sqrt{mB}\widehat{\mathrm{HSIC}}\left(\hat{\mathcal{A}}, \mathcal{Z}\right)>c_{m,\alpha}\right) \leq \alpha + o(1),$$
\item \textbf{Type II Error:} Under any fixed alternative hypothesis 
$H_1$ such that $\mathrm{HSIC}(\mathcal A,\mathcal Z)=\eta>0$, 
$$P_{H_1}(\text{Type II error}) = P_{H_1}\left(\sqrt{mB}\widehat{\mathrm{HSIC}}(\hat{\mathcal{A}}, \mathcal{Z}) \leq c_{m,\alpha}\right) = O\left(\frac{1}{\sqrt{mB}} e^{-c mB}\right), \text{ } mB\to\infty,$$
\end{itemize}
where $c_{m,\alpha}$ denotes the critical value, and the constant $c = \frac{\eta^2}{2\sigma_{H_1}^2} > 0$, and $\sigma_{H_1}^2$ is the asymptotic variance. 
\end{Theorem}

\begin{proof}
    See Appendix \ref{App:proof of HSIC} for its proof. 
\end{proof}

Theorem~\ref{The:HSIC} establishes bounds on both Type I and Type II errors of the AIT test while simultaneously accounting for errors in both the estimation and independence testing stages. Specifically, the test controls the Type~I error at the nominal level $\alpha$ asymptotically and achieves vanishing Type~II error under fixed alternatives with $\widehat{\mathrm{HSIC}}(\mathcal A,\mathcal Z)>0$. 

We next consider the theoretical asymptotic validity of the AIT test with covariates $\mathbf{W}$. To this end, we introduce three assumptions that will be used to establish the result.

\begin{Assumption}[Boundedness and Lipschitz Kernels]
\label{Ass: Bounded Lipschitz Kernels with covariates}
Let $\mathcal{U} \subset \mathbb{R}^{d_u}$ and $\mathcal{V} \subset \mathbb{R}^{d_v}$, and let $\|\cdot\|$ denote the Euclidean norm. The kernels $k: \mathcal{U} \times \mathcal{U} \to \mathbb{R}$ and $l: \mathcal{V} \times \mathcal{V} \to \mathbb{R}$ satisfy the following conditions: 
\begin{enumerate}
    \item[(1)] 
    There exists a constant $M > 0$ such that $|k(u, u')| \le M$ and $|l(v, v')| \le M$ for all $u, u' \in \mathcal{U}$ and all $v, v' \in \mathcal{V}$.
    \item[(2)] 
    There exists a constant $Q > 0$ such that $|k(u,u') - k(\tilde{u}, \tilde{u}')| \le Q(\|u - \tilde{u}\| + \|u' - \tilde{u}'\|)$ and $|l(v,v') - l(\tilde{v}, \tilde{v}')| \le Q(\|v - \tilde{v}\| + \|v' - \tilde{v}'\|)$ for all 
    $u, u', \tilde{u}, \tilde{u}' \in \mathcal{U}$ and all $v, v', \tilde{v}, \tilde{v}' \in \mathcal{V}$. 
\end{enumerate}
\end{Assumption}

With covariates, $\mathcal{Z}$ is estimated. Unlike Assumption~\ref{Ass: Bounded Lipschitz Kernels}, which requires boundedness and Lipschitz continuity for only one kernel ($k$ or $l$), Assumption~\ref{Ass: Bounded Lipschitz Kernels with covariates} imposes these conditions on both.

\begin{Assumption}[$\sqrt{n}$-Consistency of $\hat{h}$ with Covariates]\label{Ass:sqrt-n-consistency-covariates}
The estimator $\hat h$ satisfies the $\sqrt{n}$-convergence rate:
\[
\|h - \hat h\|_{L_2(P_{(X,\mathbf{W})})} = O_p(n^{-1/2}).
\]
where $h$ denotes the oracle function, and $\|\cdot\|_{L_2(P_{(X,\mathbf{W})})}$ denotes the $L_2$ norm with respect to the distribution of $(X,\mathbf{W})$. 
\end{Assumption}

Assumption \ref{Ass:sqrt-n-consistency-covariates} is an extension of Assumption \ref{Ass:sqrt-n-consistency} that takes covariates into account. 

\begin{Assumption}[$n^{q}$-Consistency of $\hat{\pi}$]\label{Ass:rate}
The estimator $\hat{\pi}$ satisfies the convergence rate 
\[
\|\pi-\hat{\pi}\|_{L_2(P_\mathbf{W})} = O_p(n^{-q}),
\]
for some $q \in (0, 1/2)$, where $\|\cdot\|_{L_2(P_\mathbf{W})}$ denotes the $L_2$ norm with respect to the distribution of $\mathbf{W}$. 
\end{Assumption}

Assumption~\ref{Ass:rate} is formulated as a high-level rate condition on the regression estimator. 
In our implementation, we employ random forests to estimate $\hat{\pi}$. Random forests are known to be consistent under suitable structural assumptions, such as additive or low-dimensional models \citep{scornet2015consistency}. 
More generally, under standard nonparametric smoothness assumptions (e.g., Hölder smoothness of order $s$ in dimension $d$), regression estimators typically achieve convergence rates of the form $n^{-s/(2s+d)}$, which corresponds to some $q \in (0, 1/2)$ \citep{stone1982optimal,yang2015minimax}. Therefore, Assumption~\ref{Ass:rate} is compatible with commonly used nonparametric estimators, including random forests.

\begin{Corollary}[Asymptotic Level and Power of the AIT Test with Covariates]\label{The-HSIC-covariate}
$\quad$ Suppose that Assumptions~\ref{Ass: Bounded Lipschitz Kernels with covariates} $\sim$ \ref{Ass:rate} hold, and that the sample sizes satisfy $mB = o(n^{2q})$ as $n \to \infty$. 
\begin{itemize}
\item \textbf{Type I Error:}
Under the null hypothesis,  
$${P}(\text{Type I error}) = {P}_{H_0}(\sqrt{mB}\,\widehat{\mathrm{HSIC}}(\hat{\mathcal{A}},\hat{\mathcal{Z}}) > c_{m,\alpha}) \le \alpha + o(1),$$
\item \textbf{Type II Error:} Under any fixed alternative hypothesis 
$H_1$ such that $\mathrm{HSIC}(\mathcal A,\mathcal Z)=\eta>0$, 
$$P_{H_1}(\text{Type II error}) = P_{H_1}\left(\sqrt{mB}\widehat{\mathrm{HSIC}}(\hat{\mathcal{A}}, \hat{\mathcal{Z}}) \leq c_{m,\alpha}\right) = O\left(\frac{1}{\sqrt{mB}} e^{-c mB}\right), \text{ } mB\to\infty,$$
\end{itemize}
where $c_{m,\alpha}$ denote the critical value, and the constant $c = \frac{\eta^2}{2\sigma_{H_1}^2} > 0$, and $\sigma_{H_1}^2$ is the asymptotic variance.  
\end{Corollary}
\begin{proof}
    See Appendix \ref{App:proof of HSIC of covariates} for its proof. 
\end{proof}

Corollary~\ref{The-HSIC-covariate} establishes that the proposed test controls the Type~I error asymptotically and achieves consistency under fixed alternatives, with power converging to one as $mB \to \infty$.

\begin{Remark}
We would like to mention two main tuning parameters in practical implementations: the data split ratio $\rho$ and the block size $B$. Regarding the data split ratio $\rho$: 
Theoretically, our method requires the rate condition $mB = o(n)$ (or $mB = o(n^{2q})$ in the presence of covariates). In practice, we heuristically set the data split ratio to $\rho = 0.7$, following \cite{polo2023conditional,wang2025practical}. Moreover, the experimental results in Section~\ref{Section:experiments} show that this choice leads to favorable performance. Regarding the block size $B$: The sample size of each block $B$ can be guided by \cite{zaremba2013b,zhang2018large}, who showed that the null distribution is well approximated by the central limit theorem when $B$ is small, which thus helps the Type~I error approach the nominal level. However, a smaller $B$ may lead to reduced statistical power for a given sample size. In our experiments (Section~\ref{Section:experiments}), we heuristically set $B = m/15$ in the continous setting and $B = m/6$ in the discrete setting, which demonstrates superior performance. 
\end{Remark}

}

\section{Experiments}\label{Section:experiments}
In this section, we evaluated the performance of the proposed AIT condition for instrument validity across both synthetic data and three real-world datasets. Our source code is available at \url{https://github.com/zhengli0060/AIT_Condition}.
\subsection{Synthetic Data}\label{SubSection-Synthetic}
\tworevise{
We conducted simulation experiments from four perspectives. First, in Section \ref{Subsection-simulation-theory}, we verified the correctness of our theoretical results. Then, in Section \ref{Subsection-simulation-Continous}, we compared the IV-PIM method proposed by \cite{burauel2023evaluating}, which is designed for continuous treatments with covariates. Moreover, in Section \ref{Subsection-simulation-Discrete}, we compared the proposed method with Kitagawa's method, abbreviated as the K-test method \citep{kitagawa2015test}, which is designed for discrete treatments. Finally, in Section~\ref{Section-Supple-simulation}, we validate our theoretical results under an oracle setting, where the auxiliary variable is correctly specified (i.e., no estimation error in Step 1 of Algorithm~\ref{Alg: AIT-Condition-Test}). 
In all experiments, we evaluated the performance of our method using the following metrics: 
\begin{itemize}
[leftmargin=30pt,align=left,itemsep=1pt,topsep=0pt]
    \item \textbf{Valid IVs Misidentification Ratio (abbreviated as Valid MR)}: the ratio of the number of valid IVs incorrectly identified in the output.
    \item \textbf{Invalid IVs Misidentification Ratio (abbreviated as Invalid MR)}: the ratio of the number of invalid IVs incorrectly identified in the output.
\end{itemize}

\subsubsection{Theoretical Validation of Proposed Method}\label{Subsection-simulation-theory}
\vspace{-1mm}
In this section, we conducted simulations to verify our theoretical results for both the linear model and the ANINCE model. {Table~\ref{Summary_of_experiments} summarizes the propositions validated by the experimental results reported in various tables, along with the corresponding model settings.} Specifically, we tested our method's ability to identify invalid IVs that violate the \emph{exogeneity} condition under three scenarios: linear constant-effect setting with varied distributions, partially nonlinear constant conditions (with Gaussian noise terms), and partially nonlinear non-constant conditions with varied functional forms but a consistent Uniform distribution. The performance results are presented in Table \ref{Table-linear-model-A2}, Table \ref{Table-non-linear-constant-model-A2}, and Table \ref{Table-non-linear-non-constant-model-A2}, respectively. Further, we examined the proposed method’s effectiveness in identifying invalid IVs that violate the \emph{exclusion restriction} condition under nonlinear constant and nonlinear non-constant conditions, considering various functional forms as well as Beta and Uniform distributions, respectively. Results for these tests are shown in Table \ref{Table-non-linear-constant-model-A3} and Table \ref{Table-non-linear-non-constant-model-A3}.

\begin{table}[htbp]
\revise{
    \centering
    \caption{\revise{Summary of propositions validated by the experimental results, along with the corresponding model settings.}}
    \label{Summary_of_experiments}
    \resizebox{1.0\textwidth}{!}{
    \begin{tabular}{l l l l}
        \toprule
        \textbf{Table} & \textbf{Proposition Demonstrated} & \textbf{Model} & \textbf{IV Violation} \\
        \hline
        Table \ref{Table-linear-model-A2} & Propositions \ref{proposition-linear-gaussian-model} and \ref{Proposition-linear-exogeneity} & Linear & Exogeneity \\
        Table \ref{Table-non-linear-constant-model-A2} & Proposition \ref{Proposition-identifiable-IV-ANINCE} & Partial Non-Linear Constant Effect& Exogeneity  \\
        Table \ref{Table-non-linear-non-constant-model-A2} & Proposition \ref{Proposition-identifiable-IV-ANINCE} & Partial Non-Linear Non-Constant Effect&Exogeneity \\
        Table \ref{Table-non-linear-constant-model-A3} & Proposition \ref{Proposition-identifiable-IV-ANINCE} & Partial Non-Linear Constant Effect& Exclusion Restriction\\
        Table \ref{Table-non-linear-non-constant-model-A3} & Proposition \ref{Proposition-identifiable-IV-ANINCE} & Non-Linear Non-Constant Effect& Exclusion Restriction\\
        \bottomrule
    \end{tabular}
    }
\begin{tablenotes}
\item \small{Note: ``Proposition Demonstrated'' refers to the propositions that are supported by the experimental results presented in each table. 
}
\end{tablenotes}
}
\end{table}

\textbf{Experimental Design:} We generated data according to the model in Equation \eqref{Eq-Main-Model}, where each candidate IV set contains both a valid IV and an invalid IV. For the nonlinear non-constant effects model, we applied five functions-{\emph{logarithmic, quadratic polynomial, cubic polynomial, logarithmic quadratic polynomial, exponent quadratic polynomial}}-for $g(\cdot)$, $f(\cdot)$ and $\varphi_{*}(\cdot)$. In contrast, these functions were linear under linear, constant settings. For the error term $\varepsilon_{*}$, we selected six distributions\footnote{The ``Mixed distribution'' refers to that obtained by randomly selecting from the mentioned distributions.}: \emph{Gaussian, Uniform, T, Beta, Gamma, Log-normal}. In all experiments, the significance level was set to 0.05 for sample sizes smaller than 1000, and to 0.01 otherwise, with the sample splitting ratio fixed at $\rho = 0.7$. 
Each experiment was repeated 100 times with randomly generated data, and the reported results were averaged. The sample sizes are chosen from 3000 (3k), 5000 (5k), and 7000 (7k). Additional experimental details can be found in Appendix \ref{Appendix-Synthetic}.

\begin{table*}[hpt!]
\vspace{-1.mm}
\tworevise{
\centering
\small
\caption{Performance in Testing Violations of Exogeneity within Linear Models.}
\resizebox{0.98\textwidth}{!}{
\begin{tabular}{c|cc|cc|cc}
\toprule
\multicolumn{1}{c}{} & \multicolumn{2}{|c|}{Size=3K} & \multicolumn{2}{|c|}{Size=5K} & \multicolumn{2}{c}{Size=7K}\\
\hline
Distribution & {\textbf{Valid MR}$\downarrow $} %
&{\textbf{Invalid MR}$\downarrow $}  & {\textbf{Valid MR}}$\downarrow $ %
&{\textbf{Invalid MR}$\downarrow $} & {\textbf{Valid MR}}$\downarrow $ 
&{\textbf{Invalid MR}$\downarrow $}\\
\hline
Uniform 
& 0.00 & 0.04 
& 0.00 & 0.00 
& 0.00 & 0.00 \\
Beta
& 0.01 & 0.12 
& 0.00 & 0.03 
& 0.00 & 0.01 \\
T-distribution
& 0.02 & 0.08 
& 0.00 & 0.04 
& 0.00 & 0.01 \\
Gamma 
& 0.01 & 0.08 
& 0.00 & 0.01 
& 0.00 & 0.01 \\
Log-normal 
& 0.01 & 0.08 
& 0.00 & 0.02 
& 0.00 & 0.00 \\
Gaussian 
& 0.00 & 1.00 
& 0.00 & 1.00 
& 0.00 & 1.00 \\
Mixed 
& 0.01 & 0.20 
& 0.00 & 0.15 
& 0.00 & 0.05 \\
\bottomrule
\end{tabular}
}
\label{Table-linear-model-A2}
\begin{tablenotes}
\item \small{Note: 
$\downarrow $ means a lower value is better, and vice versa.}
\end{tablenotes}
\vspace{-4mm}
}
\end{table*}

\begin{table*}[hpt!]
\tworevise{
\centering
\small
\caption{Performance in Testing Violations of Exogeneity within Partial Non-Linear Constant Effect Models.}
\resizebox{0.98\textwidth}{!}{
\begin{tabular}{c|cc|cc|cc}
\toprule
\multicolumn{1}{c}{} & \multicolumn{2}{|c|}{Size=3K} & \multicolumn{2}{|c|}{Size=5K} & \multicolumn{2}{c}{Size=7K}\\
\hline
Function & {\textbf{Valid MR}$\downarrow $} 
&{\textbf{Invalid MR}$\downarrow $}  & {\textbf{Valid MR}}$\downarrow $ 
&{\textbf{Invalid MR}$\downarrow $} & {\textbf{Valid MR}}$\downarrow $ 
&{\textbf{Invalid MR}$\downarrow $}\\
\hline
Log
& 0.00 & 0.01 
& 0.00 & 0.00 
& 0.00 & 0.00 \\
Quadratic polynomial
& 0.00 & 0.00 
& 0.00 & 0.00 
& 0.00 & 0.00 \\
Cubic polynomial 
& 0.00 & 0.00 
& 0.00 & 0.00 
& 0.00 & 0.00 \\
Log(quadratic) 
& 0.00 & 0.00 
& 0.00 & 0.00 
& 0.00 & 0.00 \\
Exp(quadratic)
& 0.00 & 0.00 
& 0.00 & 0.00 
& 0.00 & 0.00 \\
\bottomrule
\end{tabular}
}
\label{Table-non-linear-constant-model-A2}
\begin{tablenotes}
\item \small{Note: $\downarrow$ means a lower value is better, and vice versa.}
\end{tablenotes}
}
\vspace{-4mm}
\end{table*}

\begin{table*}[hpt!]
\tworevise{
\centering
\small
\caption{Performance in Testing Violations of Exogeneity within Partial Non-Linear Non-Constant Effect Models.}
\resizebox{0.98\textwidth}{!}{
\begin{tabular}{c|cc|cc|cc}
\toprule
\multicolumn{1}{c}{} & \multicolumn{2}{|c|}{Size=3K} & \multicolumn{2}{|c|}{Size=5K} & \multicolumn{2}{c}{Size=7K}\\
\hline
Function & {\textbf{Valid MR}$\downarrow $} 
&{\textbf{Invalid MR}$\downarrow $}  & {\textbf{Valid MR}}$\downarrow$ 
&{\textbf{Invalid MR}$\downarrow $} & {\textbf{Valid MR}}$\downarrow $ 
&{\textbf{Invalid MR}$\downarrow$}\\
\hline
Log
& 0.00 & 0.00 
& 0.00 & 0.00 
& 0.00 & 0.00 \\
Quadratic polynomial
& 0.00 & 0.00 
& 0.00 & 0.00 
& 0.00 & 0.00 \\
Cubic polynomial 
& 0.00 & 0.00 
& 0.00 & 0.00 
& 0.00 & 0.00 \\
Log(quadratic) 
& 0.00 & 0.00 
& 0.00 & 0.00 
& 0.00 & 0.00 \\
Exp(quadratic)
& 0.00 & 0.01 
& 0.00 & 0.00 
& 0.00 & 0.00 \\
\bottomrule
\end{tabular}
}
\label{Table-non-linear-non-constant-model-A2}
\begin{tablenotes}
\item \small{Note: $\downarrow$ means a lower value is better, and vice versa.}
\end{tablenotes}
}
\vspace{-4mm}
\end{table*}

\begin{table*}[hpt!]
\tworevise{
\centering
\small
\caption{Performance in Testing Violations of Exclusion Restriction within Partial Non-Linear Constant Effect Models.}
\resizebox{0.98\textwidth}{!}{
\begin{tabular}{c|cc|cc|cc}
\toprule
\multicolumn{1}{c}{} & \multicolumn{2}{|c|}{Size=3K} & \multicolumn{2}{|c|}{Size=5K} & \multicolumn{2}{c}{Size=7K}\\
\hline
Function & {\textbf{Valid MR}$\downarrow$}  
&{\textbf{Invalid MR}$\downarrow$}  & {\textbf{Valid MR}}$\downarrow$  
&{\textbf{Invalid MR}$\downarrow$} & {\textbf{Valid MR}}$\downarrow$  
&{\textbf{Invalid MR}$\downarrow$}\\
\hline
Log
& 0.02 & 0.11 
& 0.01 & 0.08 
& 0.00 & 0.04 \\
Quadratic polynomial
& 0.00 & 0.09 
& 0.00 & 0.05 
& 0.00 & 0.02 \\
Cubic polynomial
& 0.01 & 0.12
& 0.00 & 0.07 
& 0.00 & 0.01 \\
Log(quadratic) 
& 0.00 & 0.16
& 0.00 & 0.09 
& 0.00 & 0.04 \\
Exp(quadratic)
& 0.01 & 0.09 
& 0.00 & 0.05 
& 0.00 & 0.03 \\
\bottomrule
\end{tabular}
}
\label{Table-non-linear-constant-model-A3}
\vspace{-1mm}
\begin{tablenotes}
\item \small{Note: $\downarrow $ means a lower value is better, and vice versa.}
\end{tablenotes}
}
\end{table*}

\begin{table*}[hpt!]
\tworevise{
\centering
\small
\caption{Performance in Testing Violations of Exclusion Restriction within Non-Linear Non-Constant Effect Models.}
\resizebox{0.98\textwidth}{!}{
\begin{tabular}{c|cc|cc|cc}
\toprule
\multicolumn{1}{c}{} & \multicolumn{2}{|c|}{Size=3K} & \multicolumn{2}{|c|}{Size=5K} & \multicolumn{2}{c}{Size=7K}\\
\hline
Function & {\textbf{Valid MR}$\downarrow$} 
&{\textbf{Invalid MR}$\downarrow$}  & {\textbf{Valid MR}}$\downarrow $ 
&{\textbf{Invalid MR}$\downarrow$} & {\textbf{Valid MR}}$\downarrow$ 
&{\textbf{Invalid MR}$\downarrow$}\\
\hline
Log
& 0.00 & 0.16 
& 0.00 & 0.12 
& 0.00 & 0.04 \\
Quadratic polynomial
& 0.00 & 0.02 
& 0.00 & 0.00 
& 0.00 & 0.00 \\
Cubic polynomial
& 0.00 & 0.02 
& 0.00 & 0.00 
& 0.00 & 0.00 \\
Log(quadratic) 
& 0.01 & 0.06 
& 0.00 & 0.02 
& 0.00 & 0.00 \\
Exp(quadratic)
& 0.00 & 0.00 
& 0.00 & 0.00 
& 0.00 & 0.00 \\
\bottomrule
\end{tabular}
}
\label{Table-non-linear-non-constant-model-A3}
\vspace{-1mm}
\begin{tablenotes}
\item \small{Note: $\downarrow$ means a lower value is better, and vice versa.}
\end{tablenotes}
}
\end{table*}

\textbf{Results:} As shown in Tables \ref{Table-linear-model-A2} $\sim$ \ref{Table-non-linear-non-constant-model-A3}, both metrics generally improve significantly with increasing sample sizes across various distributions and functions. Those facts suggest that our method can correctly identify invalid IVs that violate the \emph{exogeneity} condition or violate the \emph{exclusion restriction} condition. More specifically, Table \ref{Table-linear-model-A2} highlights that non-Gaussianity is beneficial for identifying IV in linear models, as demonstrated in Figure \ref{fig:SP_Linear_Model} and supported by Proposition \ref{Proposition-linear-exogeneity}. 
Notably, the Invalid MR value in the Gaussian distribution of the linear model in Table \ref{Table-linear-model-A2} is 1, indicating that our method cannot detect invalid IVs in a linear Gaussian model. This finding is consistent with the conclusions presented in Proposition \ref{proposition-linear-gaussian-model}. 
Tables \ref{Table-non-linear-constant-model-A2} $\sim$ \ref{Table-non-linear-non-constant-model-A2} further show that even a slight degree of nonlinearity facilitates the assessment of IV validity under the exogeneity condition in both nonlinear constant and non-constant effect models, as illustrated in Figure \ref{fig:SP_Violate_A2_assumption-nonlinear} and stated in Proposition \ref{Proposition-identifiable-IV-ANINCE}. 
Lastly, Tables \ref{Table-non-linear-constant-model-A3} $\sim$ \ref{Table-non-linear-non-constant-model-A3} reveal that in nonlinear models, when the direct causal effect of $Z \to Y$ does not follow a linear function of the effect of $Z \to X$, it becomes possible to assess the validity of IVs solely concerning the exclusion restriction condition, as highlighted in the negative of Proposition \ref{Pro-violate-algebraic} $(i)$ and elaborated in Proposition \ref{Proposition-identifiable-IV-ANINCE}.

\subsubsection{Comparison with IV-PIM in Continuous Treatment Setting}\label{Subsection-simulation-Continous}
In this section, we compared the proposed \emph{AIT Condition} with IV-PIM, as proposed by \citet{burauel2023evaluating}, in continuous treatment settings. Note that since IV-PIM requires covariates, we introduce covariates here for a fair comparison.

\textbf{Experimental Design:} The specific generation mechanism with covariates $\mathbf{W}$ in the linear model is defined as follows: $U =\varepsilon_U$, $\mathbf{W} = \boldsymbol{\varepsilon_{W}}$, $Z_1 =\mathcal{I}(U + \mathbf{W} + \varepsilon_{Z1})$, $Z_2= \mathcal{I}(\mathbf{W} + \varepsilon_{Z2})$, $X = 0.5Z_1 + 0.5Z_2 + \boldsymbol{\lambda W} + \delta$, and $Y = X + \mathbf{W} + \epsilon$, where $\varepsilon_{U} \sim T(5)$, $\varepsilon_{Z1} \sim Beta(0.5, 0.1)$, $\varepsilon_{Z2} \sim \mathcal{N}(0,1)$, and  $\delta, \epsilon \sim T(5)$. Here, $\mathcal{I}(*)$ is the indicator function such that $\mathcal{I}(*) > \text{mean}(*)$ equals 1; otherwise, it is 0. The coefficient $\boldsymbol{\lambda}$ is randomly drawn from a normalized standard normal distribution. The noise terms $\boldsymbol{\varepsilon_{W}}$ follow a multidimensional normal distribution and are consistent with IV-PIM, with the dimensionality of covariates $\mathbf{W}$ varying across $|\mathbf{W}|=\{2, 3, 5\}$. 
The remaining settings are the same as in Section \ref{Subsection-simulation-theory}.

\textbf{Results:} As shown in Table \ref{Table-IV-PIM-compared-linear-model}, our method outperforms experimental results of IV-PIM with covariates under both Valid MR and Invalid MR. Interestingly, IV-PIM’s performance improves as the dimensionality of covariates increases, consistent with findings in \cite{burauel2023evaluating}. Additionally, 
Table \ref{Table-IV-PIM-compared-linear-model} highlights the practicality of the AIT condition with covariates, as presented in Corollary \ref{Corollary-Necessary-Condition-IV-covariates}. 

\begin{table*}[hpt!]
\tworevise{
\centering
\small
\caption{Performance in Testing Instrumental Variables with Covariates.}
\resizebox{0.98\textwidth}{!}{
\begin{tabular}{c|c|cc|cc|cc}
\toprule
\multicolumn{1}{c}{}& \multicolumn{1}{c|}{} & \multicolumn{2}{|c|}{Size=3K} & \multicolumn{2}{|c|}{Size=5K} & \multicolumn{2}{c}{Size=7K}\\
\hline
$|\mathbf{W}|$& Condition & {\textbf{Valid MR}$\downarrow $}
&{\textbf{Invalid MR}$\downarrow$}  & {\textbf{Valid MR}}$\downarrow$ 
&{\textbf{Invalid MR}$\downarrow$} & {\textbf{Valid MR}}$\downarrow$   
&{\textbf{Invalid MR}$\downarrow$}\\
\hline
2&IV-PIM method~\citep{burauel2023evaluating}
& 0.22 & 0.25 
& 0.22 & 0.18  
& 0.32 & 0.22\\
&AIT condition
& 0.00 & 0.11 
& 0.00 & 0.01  
& 0.00 & 0.00\\
\hline
3&IV-PIM method~\citep{burauel2023evaluating}
& 0.15 & 0.27  
& 0.20 & 0.21   
& 0.15 & 0.16\\
&AIT condition
& 0.01 & 0.05 
& 0.00 & 0.00  
& 0.00 & 0.00\\
\hline
5&IV-PIM method~\citep{burauel2023evaluating}
& 0.11 & 0.16  
& 0.06 & 0.25   
& 0.08 & 0.23 \\
&AIT condition
& 0.00 & 0.02 
& 0.00 & 0.00  
& 0.00 & 0.00\\
\bottomrule
\end{tabular}
}
\label{Table-IV-PIM-compared-linear-model}
\begin{tablenotes}
\item \small{Note: $\downarrow $ means a lower value is better, and vice versa.}
\end{tablenotes}
}
\end{table*}

\subsubsection{Comparison with K-test in Discrete Treatment Setting}\label{Subsection-simulation-Discrete}
In this section, we compared the proposed instrument validity test with the K-test, which was proposed by \citet{kitagawa2015test} for discrete treatment settings without covariates. 
\sloppy
The source code for the K-test is available at 
\texttt{https://rdrr.io/github/CarrThomas/TestforInstrumentValidity/}.

\textbf{Experimental Design:} 
The discrete-treatment data were generated to simulate violations of the \emph{exogeneity} and \emph{exclusion restriction} conditions as follows:
    $U = \varepsilon_{U}$, $Z =  \mathcal{I} ( \varphi_{Z}(U) + \varepsilon_{Z})$, $X = \mathcal{I} (g_{X}(Z)+ \varphi_{X}(U) + \varepsilon_{X})$, $Y = \beta X + g_{Y}(Z) + \varphi_{Y}(U) + \varepsilon_{Y}$, and $\varepsilon_{*} \sim \mathcal{N}(0,1)$, where the causal effect $\beta = 1$, and $\mathcal{I}(*)$ is the indicator function such that $\mathcal{I}(*) > \text{mean}(*)$ equals 1; otherwise, it is 0. The functions $\varphi_{*}(U)$ and $g_{*}(Z)$ are nonlinear and randomly selected from the following set: {\emph{cos, sin, square, cubic(third-degree polynomials), logarithmic, exponential}}.
    \tworevise{The significance level is set to be the same for both methods at each sample size.}
    The remaining settings are the same as in Section \ref{Subsection-simulation-theory}.
\textbf{Results:} 
As shown in Table~\ref{Table-binary-treatment-model}, the proposed AIT condition exhibits competitive performance relative to the K-test in terms of Invalid MR. 
Specifically, AIT achieves lower Invalid MR at sample sizes of 3K and 7K, while yielding comparable performance at 5K. 
For the \emph{Valid MR} metric, both methods achieve a value of 0, indicating that neither method mistakenly identifies valid IVs as invalid.

\begin{table*}[hpt!]
\tworevise{
\centering
\small
\caption{Performance in Testing Instrumental Variables with Discrete Treatment.}
\resizebox{0.98\textwidth}{!}{
\begin{tabular}{c|cc|cc|cc}
\toprule
\multicolumn{1}{c}{} & \multicolumn{2}{|c|}{Size=3K} & \multicolumn{2}{|c|}{Size=5K} & \multicolumn{2}{c}{Size=7K}\\
\hline
Condition & {\textbf{Valid MR}$\downarrow$}
&{\textbf{Invalid MR}$\downarrow$}  & {\textbf{Valid MR}}$\downarrow$  
&{\textbf{Invalid MR}$\downarrow$} & {\textbf{Valid MR}}$\downarrow$  
&{\textbf{Invalid MR}$\downarrow$}\\
\hline
K-test method~\citep{kitagawa2015test}
& 0.00 & 0.20 
& 0.00 & 0.10  
& 0.00 & 0.08\\
AIT condition
& 0.00 & 0.15 
& 0.00 & 0.13  
& 0.00 & 0.07\\
\bottomrule
\end{tabular}
}
\label{Table-binary-treatment-model}
\begin{tablenotes}
\item \small{Note: $\downarrow$ means a lower value is better, and vice versa.}
\end{tablenotes}
}
\end{table*}

}

\subsubsection{Theoretical Validation with the Auxiliary Variable Correctly Specified}\label{Section-Supple-simulation} 
\revise{In this section, we present additional simulation experiments to verify the theoretical results with the auxiliary variable correctly specified. 
Specifically, we report results for both valid and invalid IV testing cases. 
\begin{itemize}
    \item For \textbf{valid IV testing}, we evaluate the Valid MR under three different model settings: the Linear Model, the Partial Non-Linear Model with Constant Causal Effect, and the ANINCE Model. 
    \item For \textbf{invalid IV testing}, we evaluate the Invalid MR under various scenarios where IV assumptions are violated. These include: the Linear Model with exogeneity violated; the Linear Model with both exogeneity and exclusion restriction violated; the Partial Non-Linear Model with Constant Causal Effect where exogeneity is violated; the Partial Non-Linear Model with Constant Causal Effect where exclusion restriction is violated; and the Partial Non-Linear Model with Constant Causal Effect where both exogeneity and exclusion restriction are violated.
\end{itemize}
It is worth noting that for the nonparametric invalid IV model, we are unable to derive the explicit closed-form solution for the auxiliary variable. As a result, we cannot report the Invalid MR for that setting. Besides, in each model, we include only a single candidate IV to focus on the core objective of the test—assessing the validity of that IV.
The remaining settings are the same as in Section \ref{Subsection-simulation-theory}.}

\revise{\textbf{Results.} \revise{As shown in Tables \ref{Table-type-I} and \ref{Table-type-II}, both Valid MR and Invalid MR are low across different sample sizes and model specifications. Specifically, 
\begin{itemize}
    \item The Valid MR is consistently low in both the constant and non-constant effect models, indicating that the test appropriately controls the false positive rate when the auxiliary variable is correctly specified. 
    \item The Invalid MR is also very low in the constant effect model with an invalid IV, suggesting that the test has strong power to detect violations of IV validity. 
\end{itemize}

In summary, these results provide empirical support for the validity of our test when the auxiliary variable is assumed to be correctly specified.

}}

\begin{table*}[hpt!]
\centering
\small
\caption{\revise{Performance in Testing Valid IV with the Auxiliary Variable Correctly Specified in Different Cases.}}
\begin{tabular}{c|c|c|c}
\toprule
\multicolumn{1}{c}{} & 
\multicolumn{3}{|c}{\textbf{Valid MR}}\\
\hline
\textbf{Cases} & {\textbf{Size=3K}}  
&{\textbf{Size=5K}} 
& {\textbf{Size=7K}} \\
\hline
Linear Model 
& 0.00  
& 0.00  
& 0.00 \\
Partial Non-Linear Model with 
Constant Causal Effect 
& 0.00 
& 0.00  
& 0.00 \\
ANINCE Model 
& 0.00 
& 0.00 
& 0.00 \\
\bottomrule
\end{tabular}
\label{Table-type-I}
\end{table*}

\begin{table*}[hpt!]
\centering
\small
\caption{\revise{Performance in Testing Invalid IV with the Auxiliary Variable Correctly Specified in Different Cases.}}
\resizebox{1\textwidth}{!}{
\begin{tabular}{c|c|c|c}
\toprule
\multicolumn{1}{c}{} & 
\multicolumn{3}{|c}{\textbf{Invalid MR}}\\
\hline
\textbf{Cases} & {\textbf{Size=3K}}
&{\textbf{Size=5K}} 
& {\textbf{Size=7K}} \\
\hline
Linear Model: Exogeneity Violated
& 0.01  
& 0.00  
& 0.00 \\
Linear Model: Exogeneity $\&$ Exclusion Restriction Violated
& 0.00 
& 0.00 
& 0.00  \\
Partial Non-Linear Model with 
Constant Causal Effect: Exogeneity Violated 
& 0.00 
& 0.00  
& 0.00 \\
Partial Non-Linear Model with 
Constant Causal Effect: Exclusion Restriction Violated 
& 0.02 
& 0.00 
& 0.00 \\
Partial Non-Linear Model with 
Constant Causal Effect: Exogeneity $\&$ 
Exclusion Restriction Violated 
& 0.01 
& 0.00 
& 0.00 \\
\bottomrule
\end{tabular}
}
\label{Table-type-II}

\end{table*}

\subsection{Real-World Datasets}\label{Subsection-Realdata}
In this section, we evaluated the effectiveness of the proposed method by applying it to three real-world datasets from different domains.

\subsubsection{Schooling-Returns Data}\label{Real:Schooling-return}
We consider the application of our method to the study by \citet{card1993using}. This study investigates the impact of education levels on earnings using data from the Young Men Cohort of the National Longitudinal Survey.

\textbf{Data Description:} 
The dataset is a sample of 3010 men taken from the US National Longitudinal Survey of Young Men (NLSY). It includes variables such as 
$\mathit{LivedNearCollege}$, $\mathit{Schooling}$, $\mathit{Returns}$, and a set of \emph{\textbf{Covariates}} including 
\emph{\{Experience, Experience square, Black, Smsa, Smsa66, Region information (reg662-reg669), South}\}, 
among others. The hypothesized model of \cite{card1993using} is presented in Figure \ref{Fig:real-data-example4}. The hypothesized data generation mechanism is described as follows:

\revise{
\begin{equation}
    \begin{aligned}
        \mathit{Schooling} = & \, \alpha_0 + \alpha_1 \cdot \mathit{LivedNearCollege} + \boldsymbol{\alpha}^{\top} \cdot \text{\emph{\textbf{Covariates}}} + \delta, \\
        \mathit{Returns} = & \, \beta_0 + \beta_1 \cdot \mathit{Schooling} + \boldsymbol{\beta}^{\top} \cdot \text{\emph{\textbf{Covariates}}} + \epsilon,
    \end{aligned}
\end{equation}
}
where $\delta$ and $\epsilon$ are dependent.

\begin{figure}[htp!]
    \centering
    \begin{tikzpicture}[line width=0.75pt]
    \draw (0, 0) node(ER) [rectangle, draw, minimum width=0.8cm, minimum height=0.8cm, , rounded corners=5pt] {\parbox{2.0cm}{\centering \emph{\footnotesize{
    \textbf{Covariates}
    \\(Experience \\$\dots$\\South)}}}};

    \draw (4, 0) node(IA) [rectangle, draw=black, dash pattern=on 2.5pt off 2pt, fill=gray!30, minimum width=1cm, minimum height=1cm, rounded corners=5pt] {\parbox{2cm}{\centering \emph{Individual\\Ability}}};

    \draw (0, -2) node(Lnc) [rectangle, draw=OrangeRed, minimum width=1cm, minimum height=1cm, rounded corners=5pt] {\parbox{2cm}{\centering \emph{\color{darkgray}{Lived near \\ the college}}}};
    
    \draw (3, -2) node(Sch) [rectangle, draw, minimum width=1cm, minimum height=1cm, rounded corners=5pt] {\parbox{2cm}{\centering \emph{Schooling}}};

    \draw (6, -2) node(Re) [rectangle, draw, minimum width=1cm, minimum height=1cm, rounded corners=5pt] {\parbox{2cm}{\centering \emph{Returns}}};

    \draw[-arcsq] (ER) -- (Lnc);
    \draw[-arcsq] (ER) -- (Sch);
    \draw[-arcsq] (ER) -- (Re);

    \draw[-arcsq] (Lnc) -- (Sch);
    \draw[-arcsq] (Sch) -- (Re);

    \draw[-arcsq] (IA) -- (Sch);
    \draw[-arcsq] (IA) -- (Re);

    \end{tikzpicture}
    \caption{Graphical illustration of an IV model for estimating the causal effect of \emph{Schooling} on \emph{Returns} \citep{card1993using}.}
    \label{Fig:real-data-example4}
\end{figure}
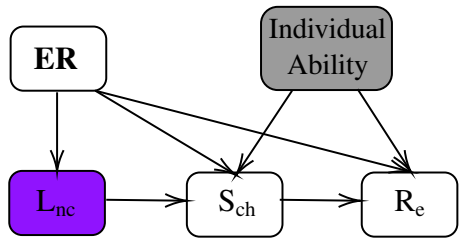

\textbf{Results:}
\cite{card1993using} demonstrated that \emph{LivedNearCollege} can serve as a valid IV for the causal relationship $\mathit{Schooling} \to \mathit{Returns}$, while controlling for the \emph{\textbf{Covariates}}. For consistency, we adopt the causal effect of \emph{Schooling} on \emph{Returns}, i.e., $\beta_1 = 0.1315$, as well as the coefficients $\boldsymbol{\beta}$ of the covariates from \cite{card1993using}, as the estimated parameter $\hat{\beta}$ under the AIT condition. We then obtain the residual $\widetilde{LivedNearCollege}$ by regressing \emph{LivedNearCollege} on the \emph{\textbf{Covariates}}. The $P$-value of the independence test between the auxiliary variable and the residual $\widetilde{LivedNearCollege}$ is 0.73, indicating that we cannot reject \emph{LivedNearCollege} as a valid IV. This result further supports the validity of using \emph{LivedNearCollege} as an IV, consistent with the findings in \cite{card1993using}.

\subsubsection{Colonial Origins data}\label{Real:colonial-or}
We apply our method to the study by \cite{acemoglu2001colonial}, which estimates the impact of colonial history on the economic development of different regions using the Colonial Origins of Comparative Development dataset.

\textbf{Data Description:} 
The dataset includes five key variables across 64 countries, after excluding samples with missing data. These variables are: $\mathit{Mortality}$, $\mathit{Euro1990}$, $\mathit{Latitude}$, $\mathit{Institutions}$, and $\mathit{Economic \ Development}$. The hypothesized model proposed by \cite{acemoglu2001colonial} is illustrated in Figure \ref{Fig:real-data-example2}, and the hypothesized data generation mechanism is described as follows:
\revise{
\begin{equation}
    \begin{aligned}
        Institutions &= \gamma + \begin{pmatrix}
        \gamma_1 & \gamma_2 & \gamma_3
        \end{pmatrix}
        \begin{pmatrix}
        \mathit{Mortality} \\
        \mathit{Latitude} \\
        \mathit{Euro1990}
    \end{pmatrix}
        + \delta,\\
        \mathit{Economic\ Development} &= \beta + 
        \begin{pmatrix}
        \beta_1 & \beta_2 
    \end{pmatrix}
    \begin{pmatrix}
        \mathit{Institutions} \\
        \mathit{Latitude} 
    \end{pmatrix}
    + \epsilon,
    \end{aligned}
\end{equation}
}
where $\delta$ and $\epsilon$ are dependent. Here, we adjust the parameter to $B = m/10$ to avoid overly small subsamples and improve the stability of estimation, due to the small sample size ($m=64$) in this dataset.

\begin{figure}[htp!]
    \centering
    \begin{tikzpicture}[line width=0.75pt]
    \draw (0, 0) node(Euro) [rectangle, draw, minimum width=1cm, minimum height=1cm, rounded corners=5pt] {\parbox{2cm}{\centering \emph{Euro1990}}};

    \draw (4, 0) node(Cd) [rectangle, draw=black, dash pattern=on 2.5pt off 2pt, fill=gray!30, minimum width=1cm, minimum height=1cm, rounded corners=5pt] {\parbox{2cm}{\centering \emph{Cultural\\difference}}};

    \draw (0, -2) node(Mor) [rectangle, draw=OrangeRed, minimum width=1cm, minimum height=1cm, rounded corners=5pt] {\parbox{2cm}{\centering \emph{\color{darkgray}{Mortality}}}};
    
    \draw (3, -2) node(Ins) [rectangle, draw, minimum width=1cm, minimum height=1cm, rounded corners=5pt] {\parbox{2cm}{\centering \emph{Institutions}}};

    \draw (6, -2) node(Ed) [rectangle, draw, minimum width=1cm, minimum height=1cm, rounded corners=5pt] {\parbox{2.2cm}{\centering \emph{Economic\\ Development}}};

    \draw (3, -4) node(Lat) [rectangle, draw, minimum width=1cm, minimum height=1cm, rounded corners=5pt] {\parbox{2cm}{\centering \emph{Latitude}}};

    \draw[-arcsq] (Euro) -- (Ins);
    \draw[-arcsq] (Ins) -- (Ed);
    \draw[-arcsq] (Mor) -- (Ins);

    \draw[-arcsq] (Lat) -- (Mor);
    \draw[-arcsq] (Lat) -- (Ins);
    \draw[-arcsq] (Lat) -- (Ed);
    \draw[-arcsq] (Lat) to[out=190, in=200] (Euro);

    \draw[-arcsq] (Cd) -- (Ins);
    \draw[-arcsq] (Cd) -- (Ed);
    \draw[-arcsq, dashed] (Cd) -- (Euro);

    \end{tikzpicture}
    \vspace{-3mm}
    \caption{Graphical illustration of an IV model for estimating the causal effect of \emph{Institutions} on \emph{Economic Development}  \citep{acemoglu2001colonial}.}
    \label{Fig:real-data-example2}
\end{figure}
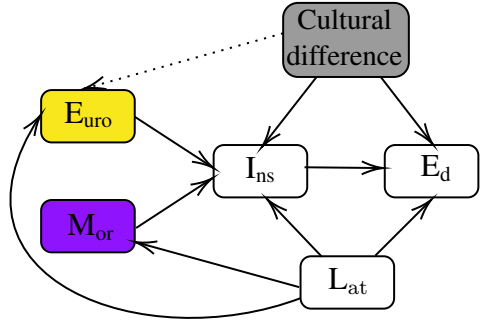

\textbf{Results:} 
\cite{acemoglu2001colonial} demonstrated that both $\mathit{Mortality}$ and $\mathit{Euro1990}$ can serve as valid IVs, conditional on $\mathit{Latitude}$, with respect to $\mathit{Institutions}$ and \textit{Economic Development}. To verify this, we test their validity using the AIT condition. For consistency, we adopt the causal effects of $\mathit{Institutions}$ on $\mathit{Economic\ Development}$ as reported by \cite{acemoglu2001colonial}, specifically $\beta_1 = 0.9458$ and $\beta_2 = -0.5971$, as the estimated parameters $\hat{\beta_i} (i=1,2)$ in the AIT condition. We next obtain the residuals $\widetilde{Mortality}$ and $\widetilde{Euro1990}$ by regressing $\mathit{Mortality}$ and $\mathit{Euro1990}$ on covariate $\mathit{Latitude}$, respectively. 
The validity test for $\mathit{Mortality}$ yields a $P$-value of $0.61$, whereas the test for $\mathit{Euro1990}$ yields a $P$-value of $0.25$. These results indicate that $\mathit{Euro1990}$ is more likely to be an invalid IV compared to $\mathit{Mortality}$, suggesting that the exogeneity of $\mathit{Euro1990}$ is weaker than that of $\mathit{Mortality}$. These findings are consistent with those of \cite{acemoglu2001colonial}, and we cannot reject the validity of $\mathit{Mortality}$ and $\mathit{Euro1990}$ as IVs, aligning with their conclusions. 

\subsubsection{Conflict and Time Preference Data}\label{Real:Conflict and Time Preference Data}
We consider the application of our method to the study by \citet{voors2012violent}. This study uses the Conflict and Time Preference Data to investigate the impact of violence on a person’s patience.

\textbf{Data Description:}
The dataset consists of 302 observations and fifteen variables, as described in \citet{voors2012violent}. Treatment variable, $\mathit{Violence}$, is measured by the percentage dead in attacks in the area the person lived, while the person’s $\mathit{Patience}$ (outcome variable) is assessed by a person’s discount rate for willingness to receive larger amounts of money in the future
compared to smaller amounts of money now. Other variables include $\mathit{Distance}$ (distance to Bujumbura) and $\mathit{Altitude}$, and \emph{\textbf{Covariates}} such as \{\emph{whether the
respondent is literate, the respondent’s age, the respondent’s sex, the total land holding per capita, land Gini coefficient, distance to market, conflict over land, ethnic homogeneity, 
socioeconomic homogeneity, population density, per capita total expenditure}\}. 
The \emph{\textbf{Covariates}} used in the study represent exogenous personal and geographical information variables. As discussed in \citet{voors2012violent}, violence may be targeted in a non-random way, potentially related to community patience, which makes violence endogenous. The hypothesized model from \citet{guo2016control} is illustrated in Figure \ref{Fig:real-data-conflict-Preference}, and the hypothesized generation mechanism is as follows: 

\revise{
\begin{equation}
    \begin{aligned}
        \mathit{Violence} &= 
        \alpha_0 +
        \begin{pmatrix}
             \alpha_1 & \alpha_2 & \alpha_3 & \alpha_4 & \alpha_5 & \boldsymbol{\alpha_6}^{\top}
        \end{pmatrix}
        \begin{pmatrix}
            \mathit{Distance} \\
            \mathit{Altitude} \\
            \mathit{Distance}^2 \\
            \mathit{Altitude}^2 \\
            Distance \cdot Altitude \\
            \text{\textbf{\emph{Covariates}}}
        \end{pmatrix}
        + \delta, \\
        \mathit{Patience} &= 
            \beta_0 + 
        \begin{pmatrix}
         \boldsymbol{\beta_1}^{\top} & \beta_2 & \beta_3
        \end{pmatrix}
        \begin{pmatrix}
            \text{\textbf{\emph{Covariates}}} \\
            \mathit{Violence} \\
            \mathit{Violence}^2
        \end{pmatrix}
        + \epsilon,
    \end{aligned}
\end{equation}
}
where $\delta$ and $\epsilon$ are dependent.

\begin{figure}[htp!]
    \centering
     \begin{tikzpicture}[line width=0.75pt]
    \draw (0, 0) node(dist) [rectangle, draw=Lavender, minimum width=1cm, minimum height=1cm, rounded corners=5pt] {\parbox{2cm}{\centering \emph{Distance to\\ Bujumbura}}};

    \draw (4, 0) node(U) [rectangle, draw=black, dash pattern=on 2.5pt off 2pt, fill=gray!30, minimum width=1cm, minimum height=1cm, rounded corners=5pt] {\parbox{4cm}{\centering \emph{Social and
\\Political confounder}}};

    \draw (0, -2) node(alti) [rectangle, draw=OrangeRed, minimum width=1cm, minimum height=1cm, rounded corners=5pt] {\parbox{2cm}{\centering \emph{\color{darkgray}{Altitude}}}};
    
    \draw (3, -2) node(Dper) [rectangle, draw, minimum width=1cm, minimum height=1cm, rounded corners=5pt] {\parbox{2cm}{\centering \emph{Violence}}};

    \draw (6, -2) node(Disc) [rectangle, draw, minimum width=1cm, minimum height=1cm, rounded corners=5pt] {\parbox{2.2cm}{\centering \emph{Patience}}};

    \draw (3, -4) node(Lat) [rectangle, draw, minimum width=0.8cm, minimum height=0.8cm, rounded corners=5pt] {\parbox{2.1cm}{\centering \footnotesize{\textbf{\emph{Covariates}}\\
    (\emph{Personal and Geographical information})}}};

    \draw[-arcsq] (dist) -- (Dper);
    \draw[-arcsq] (Dper) -- (Disc);
    \draw[-arcsq] (alti) -- (Dper);

    \draw[-arcsq] (Lat) -- (alti);
    \draw[-arcsq] (Lat) -- (Dper);
    \draw[-arcsq] (Lat) -- (Disc);
    \draw[-arcsq] (Lat) to[out=190, in=200] (dist);

    \draw[-arcsq] (U) -- (Dper);
    \draw[-arcsq] (U) -- (Disc);

    \end{tikzpicture}
    \vspace{-2mm}
    \caption{Graphical illustration of an IV model for estimating the causal effect of $\mathit{Violence}$  on a person's $\mathit{Patience}$ \citep{voors2012violent}.}
    \label{Fig:real-data-conflict-Preference}
\end{figure}
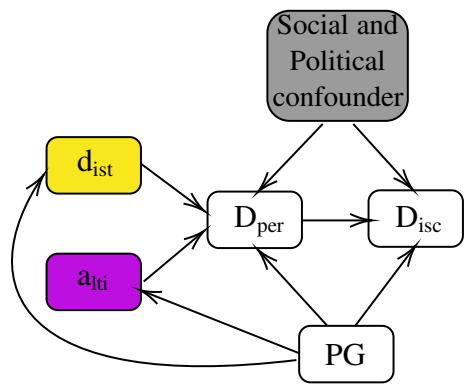

\vspace{2mm}

\textbf{Results:} 
\citet{voors2012violent} showed that both $\mathit{Distance}$ and $\mathit{Altitude}$ can serve as valid IVs, conditional on \emph{\textbf{Covariates}}, with respect to $\mathit{Violence}$ on $\mathit{Patience}$. To verify this, we test their validity using the AIT condition. For consistency, we adopt the causal effects of $\mathit{Violence}$ on $\mathit{Patience}$ as reported by \cite{guo2016control}, specifically $\beta_2 = 2.054$ and $\beta_3 = 0.049$, as the estimated parameters $\hat{\beta_i} (i=2,3)$ in the AIT condition. We obtain the residual $\widetilde{Distance}$ and $\widetilde{Altitude}$ by regressing $\mathit{Distance}$ and $\mathit{Altitude}$ on \emph{\textbf{Covariates}}, respectively. We first test the validity of $\mathit{Distance}$, which yields a $P$-value of $0.33$, suggesting that $\mathit{Distance}$ cannot be rejected as a valid instrumental variable. This is consistent with the findings of \cite{voors2012violent}. Similarly, testing $\mathit{Altitude}$ as a potential IV results in a $P$-value of $0.76$, indicating that $\mathit{Altitude}$ also cannot be rejected as a valid IV. These results align with the conclusions of \cite{voors2012violent}. 

\section{Conclusions}\label{Section-conclusions}
In this paper, we explored the testability of single IVs in the additive nonlinear, non-constant effects (ANINCE) model, where the treatment variable can be either discrete or continuous. To this end, we introduced a necessary condition, termed the AIT condition. \revise{We showed that, under the completeness condition, the AIT condition can detect whether a variable is a valid IV without requiring knowledge of whether any other variable serves as an instrument. 
Furthermore, we provided additional precision conditions for identifying all invalid IVs in linear and ANINCE models.} \tworevise{We then proposed the practical AIT condition test algorithm with covariates and finite data, and established its asymptotic level and power.} 
Experimental results using both simulation data and three real datasets have further validated the usefulness of our algorithm.
In the future, we plan to investigate whether the AIT condition could facilitate the testability implication of an invalid IV set.

\section*{Acknowledgements}
We are grateful to Kun Zhang for valuable discussions and helpful suggestions. 
We thank Patrick Burauel for kindly sharing his R implementation of IV-PIM. 
FX acknowledges support from the National Natural Science Foundation of China (Grant No.~62306019). 
XCG acknowledges the support of the Graduate Research Ability Enhancement Program Project Funding at Beijing Technology and Business University in 2024.

\appendix

\section{Proofs}\label{Appendix-proofs}
Before presenting the proofs, we introduce three important theorems since these are used to prove our results.
\revise{We begin by recalling the theorem on the properties of independent random variables from \cite{meester2008natural}. These results serve as the foundation for the proofs of Theorem \ref{Theorem-Necessary-Condition-IV} and Corollary \ref{Corollary-Necessary-Condition-IV-covariates}. 
\begin{Theorem}[Theorem 2.2.5 in \cite{meester2008natural}]\label{The_function_indep} 
Let $X_1, X_2, \ldots, X_m$ be \tworevise{mutually} independent random variables, and for $i = 1, \ldots, n$, $g_i$ be a function $g_i: \mathbb{R} \to \mathbb{R}$. Then the random variables $g_1(X_1), g_2(X_2), \ldots, g_m(X_m)$ are also \tworevise{mutually} independent. 
\end{Theorem}
\begin{Corollary}\label{The_sum_indep}
Let $X_1$, $X_2$ and $X_3$ be \tworevise{mutually} independent random variables; and let function $g: \mathbb{R}^2 \to \mathbb{R}$ and function $h: \mathbb{R} \to \mathbb{R}$. Then, $g(X_1, X_2)$ and $h(X_3)$ are independent random variables. 
\end{Corollary}}

\revise{
\begin{proof}
Let \( g: \mathbb{R}^2 \rightarrow \mathbb{R} \) and \( h: \mathbb{R} \rightarrow \mathbb{R} \). We aim to show that the random variables \( g(X_1, X_2) \) and \( h(X_3) \) are independent. We here provide the proof for the case where the random variables are discrete. The argument for continuous random variables is similar and can be obtained by replacing the summation with integration.

We compute:
\begin{align*}
& P(g(X_1, X_2) = a, h(X_3) = b)\\
&= \sum_{\substack{x_1, x_2, x_3:\\ g(x_1, x_2) = a,\, h(x_3) = b}} 
P(X_1 = x_1, X_2 = x_2, X_3 = x_3) \\
&= \sum_{\substack{x_1, x_2, x_3:\\ g(x_1, x_2) = a,\, h(x_3) = b}} 
P(X_1 = x_1) P(X_2 = x_2) P(X_3 = x_3) \\
&= \left( \sum_{\substack{x_1, x_2:\\ g(x_1, x_2) = a}} 
P(X_1 = x_1) P(X_2 = x_2) \right)
\left( \sum_{\substack{x_3:\\ h(x_3) = b}} 
P(X_3 = x_3) \right) \\
&= P(g(X_1, X_2) = a) \cdot P(h(X_3) = b)
\end{align*}

Therefore, \( g(X_1, X_2) \) and \( h(X_3) \) are independent.
\end{proof}
}

We then quote the Darmois–Skitovitch theorem that characterizes the independence of two linear statistics \citep{darmois1953analyse,skitovitch1953property}. This theorem provides the foundation for proving Propositions \ref{Proposition-linear-exogeneity} $\sim$ \ref{proposition-linear-model-exclusion} and Theorem \ref{Theorem-Necessary-Sufficient-Condition-IV-Linear-Models}. 

\begin{Theorem}[{\textbf{Darmois--Skitovitch Theorem}}]\label{Th-Darmois--Skitovitch}
Define two random variables ${V_1}$ and ${V_2}$ as linear combinations of independent random variables $\varepsilon_1,...,\varepsilon_p$:
\begin{align}
{V_1} = \sum\limits_{i = 1}^p {{\alpha _i}} {\varepsilon_i}, \quad \quad {V_2} = \sum\limits_{i = 1}^p {{\beta _i}} {\varepsilon_i},
\end{align}
where the $\alpha_{i}, \beta_{i}$ are constant coefficients. If ${V_1}$ and ${V_2}$ are independent, then the random variables ${\varepsilon_j}$ for which ${\alpha _j}{\beta _j} \ne 0$ are Gaussian. 
\end{Theorem}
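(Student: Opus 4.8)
The plan is to translate the probabilistic hypothesis into a functional equation for characteristic functions and then solve that equation. Write $\phi_j(u) = \mathbb{E}[e^{iu\varepsilon_j}]$. Because $\varepsilon_1,\dots,\varepsilon_p$ are mutually independent, the joint characteristic function of $(V_1,V_2)$ factorizes over $j$, and because $V_1$ and $V_2$ are independent it also factorizes over the two arguments; equating the two expressions gives
\begin{equation}
\prod_{j=1}^p \phi_j(s\alpha_j + t\beta_j) = \prod_{j=1}^p \phi_j(s\alpha_j)\;\prod_{j=1}^p \phi_j(t\beta_j), \qquad (s,t)\in\mathbb{R}^2 .
\end{equation}
Every index $j$ with $\alpha_j = 0$ (resp. $\beta_j = 0$) contributes the same factor to both sides and cancels, so after relabelling I may assume $\alpha_j\beta_j\neq 0$ for $j=1,\dots,k$, and the goal reduces to showing each such $\varepsilon_j$ is Gaussian.

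Next I would localize and linearize. Each $\phi_j$ is continuous with $\phi_j(0)=1$, so there is a neighbourhood $N$ of $0\in\mathbb{R}^2$ on which no factor above vanishes; on $N$ put $\psi_j = \log\phi_j$ (principal branch), turning the identity into the additive form
\begin{equation}
\sum_{j=1}^k \psi_j(s\alpha_j + t\beta_j) \;=\; \sum_{j=1}^k \psi_j(s\alpha_j) \;+\; \sum_{j=1}^k \psi_j(t\beta_j), \qquad (s,t)\in N .
\end{equation}
Since the right-hand side splits as a function of $s$ plus a function of $t$, applying a mixed second-order finite difference $\Delta^{(s)}_{h}\Delta^{(t)}_{h'}$ annihilates it, leaving $\sum_{j=1}^k (\Delta_{h\alpha_j}\Delta_{h'\beta_j}\psi_j)(s\alpha_j + t\beta_j) = 0$, where $\Delta_c f(x) = f(x+c)-f(x)$. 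Iterating this kind of differencing, and choosing the increments so as to peel off the indices one at a time (this is where $\alpha_j\beta_j\neq 0$ is used), forces every $\psi_j$ to coincide on $N$ with a polynomial of degree at most two. This finite-difference bookkeeping — together with the $k=1$ base case and an induction on $k$ — is the classical Darmois--Skitovitch argument, and it is the step I expect to be the main obstacle, since isolating individual summands from a ridge-type functional equation is delicate and requires care with the ranges over which the differences stay inside $N$.

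Finally I would globalize and identify the law. If $\log\phi_j$ agrees near $0$ with a polynomial, then a Marcinkiewicz / L\'evy--Cram\'er type result implies $\phi_j$ never vanishes and $\log\phi_j$ equals that same polynomial on all of $\mathbb{R}$. The standard constraints on characteristic functions, $\overline{\phi_j(u)} = \phi_j(-u)$ and $|\phi_j(u)|\le 1$, then force the polynomial to have the form $\psi_j(u) = i\mu_j u - \tfrac{1}{2}\sigma_j^2 u^2$ with $\mu_j\in\mathbb{R}$ and $\sigma_j^2 \ge 0$; that is, $\varepsilon_j \sim \mathcal{N}(\mu_j,\sigma_j^2)$ (a point mass when $\sigma_j^2 = 0$). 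This establishes the theorem.
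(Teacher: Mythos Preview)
Your proof sketch follows the classical characteristic-function route to the Darmois--Skitovitch theorem and is essentially correct: factorize the joint characteristic function, take logarithms in a neighbourhood of the origin, kill the separated right-hand side with mixed finite differences, induct on $k$ to force each $\psi_j$ to be a polynomial of degree at most two, and then invoke Marcinkiewicz to globalize and identify the Gaussian form. The delicate step you flag (keeping the iterated differences inside the neighbourhood $N$ while peeling off summands) is indeed the heart of the argument, and it goes through as you describe.

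There is nothing to compare against, however: the paper does not prove this theorem. It is stated in the appendix as a quoted result from \cite{darmois1953analyse} and \cite{skitovitch1953property} and is used as a black box in the proofs of Propositions~\ref{Proposition-linear-exogeneity}--\ref{proposition-linear-model-exclusion} and Theorem~\ref{Theorem-Necessary-Sufficient-Condition-IV-Linear-Models}. So your proposal goes well beyond what the paper itself supplies.
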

The above theorem states that if there exists a non-Gaussian ${\varepsilon_j}$ for which ${\alpha _j}{\beta _j} \ne 0$, ${V_1}$ and ${V_2}$ are dependent. 

Next, we introduce a local geometric information theorem that characterizes the independence of two nonlinear statistics \citep{lin1997factorizing}. This result provides the foundation for proving Propositions \ref{Proposition-identifiable-IV-ANINCE} $\sim$ \ref{Pro-violate-algebraic}, and Theorem \ref{Theorem-Necessary-Sufficient-Condition-IV-ANINCE-Models}. 

\begin{Theorem}[Proposition in \cite{lin1997factorizing}]
\label{Theorem-lin}
    The Hessian $H_f$ of function $f$ is block diagonal everywhere, $\partial_i \partial_j f \big|_{\vec{s_0}} = 0$ for all points $\vec{s_0}$ and all $i \le  k$, $j > k$, if and only if f is separable into a sum $f(s_1,...,S_m) = g(s_1,...,s_k) + h(s_{k+1},...,S_m)$ for some functions $g$ and $h$. 
\end{Theorem}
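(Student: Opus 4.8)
The plan is to prove the two implications separately: the ``sum $\Rightarrow$ block-diagonal Hessian'' direction is immediate by differentiation, so the bulk of the work is the converse. Throughout I assume, as is implicit wherever this result is invoked in the paper (the densities involved are smooth and positive on all of $(-\infty,\infty)$), that $f$ is twice continuously differentiable on a domain that is a product of open intervals, hence connected along each coordinate axis. This lets me use repeatedly the elementary fact that a $C^1$ function whose partial derivative in one coordinate vanishes identically does not depend on that coordinate. Note also that, since $f\in C^2$, the Hessian is symmetric, so vanishing of the ``upper-right'' block ($\partial_i\partial_j f = 0$ for $i\le k<j$) is equivalent to block-diagonality.

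For the direction $f(s_1,\dots,s_n) = g(s_1,\dots,s_k) + h(s_{k+1},\dots,s_n) \Rightarrow$ block-diagonal Hessian: fix $i\le k$ and $j>k$. Then $\partial_i f = \partial_i g$, which is a function of $s_1,\dots,s_k$ only, so $\partial_j\partial_i f = \partial_j(\partial_i g) = 0$; by symmetry of the Hessian $\partial_i\partial_j f = 0$ as well. Thus every off-diagonal block of $H_f$ linking the first $k$ coordinates to the last $n-k$ vanishes everywhere.

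For the converse, suppose $\partial_i\partial_j f \equiv 0$ for all $i\le k<j$. First, for each fixed $i\le k$, the function $\partial_i f$ has vanishing partial derivative in each coordinate $s_j$ with $j>k$, so $\partial_i f$ depends only on $(s_1,\dots,s_k)$. Next, fix a base point $(c_{k+1},\dots,c_n)$ and define $g(s_1,\dots,s_k) := f(s_1,\dots,s_k,c_{k+1},\dots,c_n)$. Then $\partial_i g(s_1,\dots,s_k)$ is $\partial_i f$ evaluated at that point, and since $\partial_i f$ is independent of the last $n-k$ coordinates, $\partial_i g(s_1,\dots,s_k) = \partial_i f(s_1,\dots,s_n)$ for all $s$ and all $i\le k$. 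Finally set $h := f - g$, viewing $g$ as a function on the full space that ignores its last arguments; then for each $i\le k$, $\partial_i h = \partial_i f - \partial_i g = 0$, so $h$ does not depend on $(s_1,\dots,s_k)$, i.e.\ $h$ is a function of $(s_{k+1},\dots,s_n)$ alone, and $f = g + h$ is the asserted separation.

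The only nontrivial point — and the step I would flag as the main obstacle — is the recurring use of ``partial derivative $\equiv 0$ in a coordinate $\Rightarrow$ independence of that coordinate,'' which is exactly why the domain must be connected (and, for the path-connecting argument in each coordinate, conveniently a product of intervals); on a disconnected or non-rectangular domain the statement can fail. Since in every application in this paper the domain is all of $\mathbb{R}^{n}$, this causes no difficulty, and the $C^2$ hypothesis ensures the Hessian is symmetric so that ``block diagonal'' is unambiguous.
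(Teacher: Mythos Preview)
Your argument is correct and is the standard proof of this separability lemma: the forward implication is immediate by differentiation, and for the converse you correctly show that each $\partial_i f$ with $i\le k$ is independent of the last $n-k$ coordinates, anchor at a base point to define $g$, and verify that $h:=f-g$ has vanishing partials in the first $k$ coordinates. Your remark on the domain being a product of intervals (so that ``$\partial_j\phi\equiv 0$ $\Rightarrow$ $\phi$ does not depend on $s_j$'' is valid) is exactly the needed hypothesis, and it is satisfied in every application in the paper.

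There is nothing to compare against, however: the paper does not prove this statement at all. It is quoted in the appendix as a known result from \cite{lin1997factorizing} and used as a black box to justify that $\partial_{\mathcal{A}}\partial_Z\log p(\mathcal{A},Z)\equiv 0$ implies $\log p(\mathcal{A},Z)$ splits additively (hence $\mathcal{A}\perp Z$). So your write-up supplies a self-contained proof where the paper simply cites the literature; that is a strict addition rather than a different route.
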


Theorem \ref{Theorem-lin} states that a function $f$ is separable if and only if its mixed second-order partial derivative is zero.

\subsection{Proof of Proposition \ref{Proposition-linear-exogeneity}: Testability of Exogeneity in Linear Models}\label{Proof-proposition-exogeneity-linear-model} 

\begin{proof} 
    Under the assumption of a linear model, Equation \eqref{Eq-Main-Model} can be expressed as follows: 
    \begin{equation}\label{Appendix:Eq-proposition-linear-model-exogeneity}
    \begin{aligned}
    \mathbf{U} &= \boldsymbol{\varepsilon_U},  \quad 
    & \quad  
    Z &= \boldsymbol{\gamma}^{T}  \boldsymbol{U} + \varepsilon_{Z},\\ 
    X & = \tau Z + \boldsymbol{\rho}^{T} \boldsymbol{U} + \varepsilon_{X}, &\quad Y &= \beta X+ \nu Z + \boldsymbol{\kappa}^{T} \boldsymbol{U} + \varepsilon_{Y}.
    \end{aligned}
    \end{equation}
    \revise{Let $h(X) = \hat{\beta} X = \frac{\operatorname{Cov}(Y, Z)}{\operatorname{Cov}(X, Z)} X$. According to the AIT condition, in the linear model, the function $h(X)$ satisfies the conditional moment restriction, \ie, $\mathbb{E}[Y - h(X) \mid Z] = 0$. By Equation \eqref{Appendix:Eq-proposition-linear-model-exogeneity}, $\hat{\beta}$ can be expressed as follows:} 
    \begin{equation}
    \hat{\beta} = \beta + 
    \underbrace{
        \frac{ (\nu \boldsymbol{\gamma}^{T} + \boldsymbol{\kappa}^{T}) \mathrm{Cov}(\mathbf{U})\boldsymbol{\gamma} + \nu \, \mathrm{Var}(\varepsilon_Z)
        }{ (\tau \boldsymbol{\gamma}^{T} + \boldsymbol{\rho}^{T}) \mathrm{Cov}(\mathbf{U})\boldsymbol{\gamma} + \tau \, \mathrm{Var}(\varepsilon_Z) }}_{\beta_{\text{bias}}}.
\end{equation}
Hence, we have 
\begin{equation}
    \begin{aligned}
    \mathcal{A}_{X \to Y \mid \mid Z} 
    &= Y - \hat{\beta}X \\
    &= \nu Z + \boldsymbol{\kappa}^{T} \mathbf{U} + \varepsilon_Y - \beta_{\text{bias}} X \\
    &= \left[ \nu \boldsymbol{\gamma}^{T} + \boldsymbol{\kappa}^{T} - \left( \nu \tau + \boldsymbol{\rho}^{T} \right) \beta_{\text{bias}} \right] \boldsymbol{\varepsilon_U} + \left( \nu - \tau \beta_{\text{bias}} \right) \varepsilon_Z - \beta_{\text{bias}} \varepsilon_X + \varepsilon_Y.
    \end{aligned}
\end{equation}
Since $Z$ is an invalid IV violating exogeneity, without loss of generality, we assume that $U_i$ causes the candidate IV $Z$, i.e., ${\gamma_{i}} \neq 0$, then $\beta_{bias} \neq 0$. This will imply that $\left[ \nu \gamma_i + \kappa_i - \left( \nu \tau + \rho_i \right) \beta_{\text{bias}} \right] \neq 0$ for $\varepsilon_{U_i}$, and $\left(\nu - \tau\beta _{bias}\right) \neq 0$ for $\varepsilon_Z$. 
Furthermore, because of Assumption \ref{Ass-non-Gaussianity-exogeneity}, i.e., (i) there exists at least one variable $U_i \in \mathbf{U}$ whose noise term follows a non-Gaussian distribution and cause $Z$ or, (ii) the noise term of $Z$ follows a non-Gaussian distribution, then at least one of the non-Gaussian noise terms $\varepsilon_{U_i}$ or $\varepsilon_{Z}$, is common between $\mathcal{A}_{X \to Y \mid \mid Z}$ and $Z$. Due to the Darmois–Skitovitch Theorem, we have $\mathcal{A}_{X \to Y||Z}$ is dependent on $Z$. That is to say, $\{X, Y||Z\}$ \emph{violates} the AIT condition. 
\end{proof}

\subsection{Proof of Proposition \ref{proposition-linear-model-exclusion}: Non-Testability of Exclusion Restriction in Linear Models}
\begin{proof}\label{Proof-proposition-exclusion-linear-model}
Because candidate IV $Z$ satisfies exogeneity condition, the model of Equation \eqref{Eq-Main-Model} can be written as: 
\begin{equation}\label{Appendix:Eq-proposition-linear-model-exculsion}
    \begin{aligned}
    \mathbf{U} &= \boldsymbol{\varepsilon_U}, \quad
    & \quad  
    Z &= \varepsilon_{Z},\\ 
    X & = \tau Z + \boldsymbol{\rho}^{T} \boldsymbol{U} + \varepsilon_{X}, &\quad 
    Y &= \beta X + \nu Z + \boldsymbol{\kappa}^{T} \boldsymbol{U} + \varepsilon_{Y}.
    \end{aligned}
    \end{equation}
    \revise{Let $h(X) = \hat{\beta} X = \frac{\operatorname{Cov}(Y, Z)}{\operatorname{Cov}(X, Z)} X$. According to the AIT condition, in the linear model, the function $h(X)$ satisfies the conditional moment restriction, \ie, $\mathbb{E}[Y - h(X) \mid Z] = 0$. By Equation \eqref{Appendix:Eq-proposition-linear-model-exculsion}, $\hat{\beta}$ can be expressed as follows:} 
    $$\hat{\beta} = \frac{\operatorname{Cov}(Y, Z)}{\operatorname{Cov}(X, Z)}=\beta + \frac{\nu}{\tau}=\beta + \beta_{bias}. $$ 
    Hence, we obtain 
\begin{equation}
    \begin{aligned}
    \mathcal{A}_{X \to Y||Z}
    &= Y - \hat{\beta}X\\
    &= \nu Z + \boldsymbol{\kappa}^{T} \boldsymbol{U} + \varepsilon_{Y} - \beta_{bias} X\\
    & = \left (\boldsymbol{\kappa}^{T}- \frac{\boldsymbol{\rho}^{T} \nu}{\tau} \right)\boldsymbol{\varepsilon_U} - \frac{\nu}{\tau}\varepsilon _X +\varepsilon _Y.
    \end{aligned} 
    \end{equation}
When $\nu = 0$, Equation \eqref{Appendix:Eq-proposition-linear-model-exculsion} implies that $Z$ is a valid IV, we have $\mathcal{A}_{X \to Y||Z} = \boldsymbol{\kappa}^{T} \boldsymbol{\varepsilon_U} +\varepsilon _Y$. Since $Z$ only contains the noise term $\varepsilon_Z$, in both cases—whether $Z$ is an invalid IV that violates solely the exclusion restriction condition or a valid IV—there are no shared noise terms between $\mathcal{A}_{X \to Y||Z}$ and $Z$. By the Darmois–Skitovitch Theorem, we have $\mathcal{A}_{X \to Y||Z}$ is independent of $Z$. Thus, $\{X, Y||Z\}$ always \emph{satisfies} the AIT condition. 
\end{proof}

\subsection{Proof of Theorem \ref{Theorem-Necessary-Sufficient-Condition-IV-Linear-Models}: Necessary and Sufficient Conditions in Linear Models}
\begin{proof} \label{proof:Theorem-Necessary-Sufficient-Condition-IV-Linear-Models} 
Below, we prove the necessary and sufficient condition for identifying invalid IV in the linear model. 

($\boldsymbol{\Rightarrow}$): 
According to Theorem \ref{Theorem-Necessary-Condition-IV}, we know that if $Z$ is a valid IV relative to $X \to Y$, then $\{X, Y||Z\}$ always satisfies the AIT condition. This indicates that if $\{X, Y||Z\}$ violates the AIT condition, then $Z$ is an invalid IV. 
Below, we processed with a proof by contradiction to prove that candidate IV $Z$ violates the \emph{exogeneity} condition. If $Z$ solely violates the \emph{exclusion restriction} condition, then by Proposition \ref{proposition-linear-model-exclusion}, $\{X, Y||Z\}$ always satisfies the AIT condition. This contradicts the assumption that $\{X, Y||Z\}$ violates the AIT condition. As a result, candidate IV $Z$ violates the \emph{exogeneity} condition.

($\boldsymbol{\Leftarrow}$): 
To establish that Assumption \ref{Ass-non-Gaussianity-exogeneity} holds and that the candidate IV $Z$ is invalid due to a violation of the exogeneity condition, we need to show that \(\{X, Y||Z\}\) consequently violates the AIT condition. This conclusion is equivalent to Proposition \ref{Proposition-linear-exogeneity}, thereby proving the theorem.
\end{proof}

\subsection{Proof of Proposition \ref{Proposition-identifiable-IV-ANINCE}: Testability of IV in ANINCE Models}

\begin{proof}\label{Proof-Proposition-identifiable-IV-ANINCE}
If $Z$ violates the IV conditions, the generating mechanism can be described as follows:
\begin{equation}\label{Eq:ANINCE-exogeneity}
\begin{aligned}
    \mathbf{U} = \boldsymbol{\varepsilon_{U}}, & \quad &
    Z &= \varphi_{Z}(\mathbf{U}) + \varepsilon_{Z},\\
    X =  g({Z}) + \varphi_{X}(\mathbf{U}) + \varepsilon_{X}, & \quad & Y &= f(X,Z) + \varphi_Y(\mathbf{U}) + \varepsilon_{Y}.
    \end{aligned}
\end{equation}

\revise{Without loss of generality, let $h(\cdot)$ be a function satisfying $\mathbb{E}[ Y - h(X)|Z] = {0}$ and $h(\cdot) \neq {0}$. According to the definition of the AIT condition, we have
\begin{equation}\label{Eq:ANINCE-exogeneity-A}
    \begin{aligned}
    \mathcal{A}_{X \to Y||Z}
    &= Y - h(X) \\
    &= \underbrace{f(X,Z) - h(X) }_{\widetilde{f}_{bias}(X, Z)} + \varphi_Y(\mathbf{U}) + \varepsilon_{Y},
    \end{aligned} 
    \end{equation}
where $\widetilde{f}_{bias}(X, Z) = f(X,Z) - h(X)$. 

Combining Equations \eqref{Eq:ANINCE-exogeneity} and \eqref{Eq:ANINCE-exogeneity-A}, we have 
\begin{equation}\label{Eq-transform}
    \begin{aligned}
        \mathcal{A}_{X \to Y||Z} &= \widetilde{f}_{bias}(X, Z) + \varphi_Y(\mathbf{U}) + \varepsilon_Y,\\
        Z &= \varphi_{Z}(\mathbf{U}) + \varepsilon_{Z}.
    \end{aligned}
\end{equation}
Below, we prove this proposition using the linear separability of the logarithm of the joint density of independent variables, which states the fact that for a set of independent random variables whose joint density is twice differentiable, the Hessian of the logarithm of their density is diagonal everywhere from \cite{lin1997factorizing} (See Theorem \ref{Theorem-lin} in Appendix \ref{Appendix-proofs} for further details). 
\tworevise{
According to Assumption \ref{Ass-algebraic-condition}, we know that the second-order partial derivative $\frac{\partial^2 \operatorname{log}p(\mathcal{A}_{X \to Y||Z},Z)}{\partial \mathcal{A}_{X \to Y||Z}\partial Z} \neq 0$. Furthermore, according to the local geometric information theorem (Theorem \ref{Theorem-lin}), we have $\mathcal{A}_{X \to Y||Z} \nCI Z$. This implies that $\{X, Y||Z\}$ violates the AIT condition.} }
\end{proof}

\subsection{Proof of Proposition \ref{Pro-violate-algebraic}}\label{proof-Pro-violate-algebraic}
Below, we provide separate proofs for each of the two cases in Proposition \ref{Pro-violate-algebraic}. 

\subsubsection{Proof of Proposition \ref{Pro-violate-algebraic} (i)}
\begin{proof} 
Because candidate IV $Z$ satisfies exogeneity condition, the model of Equation \eqref{Eq-Main-Model} can be written as: 
\begin{equation}\label{Eq-CAM-invalid-IV-model}
    \begin{aligned} 
    \mathbf{U} &= \varepsilon_{\mathbf{U}}, \quad 
    & \quad
    Z &= \varepsilon_{Z},\\
    X &=  g_X(Z) + \varphi_{X}(\mathbf{U}) + \varepsilon_{X}, \quad 
    & \quad
    Y &= f(X) + g_Y(Z) + \varphi_Y(\mathbf{U}) + \varepsilon_{Y}. 
    \end{aligned}
    \end{equation}
If the direct causal effect of $Z \to Y$ is a linear function of the direct causal effect of $Z \to X$, i.e., $g_Y(Z) = a \cdot g_X(Z) + b$, then it is possible to construct a valid IV  model that shares the same distribution as the above invalid IV model \eqref{Eq-CAM-invalid-IV-model}. 
The model for this valid IV, which has an identical distribution, is as follows: 
\begin{equation}\label{Eq-CAM-valid-IV-model}
\begin{aligned} 
    \mathbf{U}^{\prime}  &= \mathbf{U}, \quad 
    & \quad
    Z^{\prime} &= Z, \\
    X^{\prime} &= X, \quad 
    & \quad
    Y^{\prime} &= f^{\prime}(X') + \varphi_Y(\mathbf{U}) + \varepsilon_{Y}=Y, 
\end{aligned}
\end{equation}
where $f^{\prime}(X') = f(X) + g_Y(Z)$. According to Equation \eqref{Eq-CAM-valid-IV-model}, we know that $Z^{\prime}$ is a valid IV relative to $X^{\prime} \to Y^{\prime}$. 

\revise{
Based on Theorem \ref{Theorem-Necessary-Condition-IV}, the auxiliary variable $\mathcal{A}_{X \to Y||Z}$ is independent of IV Z. As a result, the joint probability density of $(\mathcal{A}_{X \to Y||Z},Z)$ factorizes as the product of the marginal densities: $$p(\mathcal{A}_{X \to Y||Z},Z)= p(\mathcal{A}_{X \to Y||Z}) \cdot p(Z).$$ According to the local geometric information theorem (Theorem \ref{Theorem-lin}), taking the logarithm of both sides and computing the second-order partial derivative yields: 
$$\frac{\partial^2 \operatorname{log}p(\mathcal{A}_{X \to Y||Z}, Z)}{\partial \mathcal{A}_{X \to Y||Z}\partial Z} = \frac{\partial^2 [\operatorname{log}p(\mathcal{A}_{X \to Y||Z}) + \operatorname{log}p(Z)]}{\partial \mathcal{A}_{X \to Y||Z}\partial Z} = 0.$$
This result indicates that Assumption \ref{Ass-algebraic-condition} does not hold. }
\end{proof}

\subsubsection{Proof of Proposition \ref{Pro-violate-algebraic} (ii)}

\tworevise{
\begin{proof}
Due to the assumption of linearity in the model, the model in Equation \eqref{Eq-Main-Model} can be written as:
    \begin{equation}\label{Appendix:Eq-proposition-linear-model}
    \begin{aligned}
    \mathbf{U} &= \boldsymbol{\varepsilon_U},  \quad 
    & \quad  
    Z &= \boldsymbol{\gamma^{T}} \boldsymbol{U} + \varepsilon_{Z},\\ 
    X & = \tau Z + \boldsymbol{\rho}^{T} \boldsymbol{U} + \varepsilon_{X}, &\quad Y &= \beta X+ \nu Z + \boldsymbol{\kappa}^{T} \boldsymbol{U} + \varepsilon_{Y},
    \end{aligned}
    \end{equation} 
where all noise terms are Gaussian.

Let $h(X) = \hat{\beta} X = \frac{\operatorname{Cov}(Y, Z)}{\operatorname{Cov}(X, Z)} X$. According to the AIT condition, in the linear model, the function $h(X)$ satisfies the conditional moment restriction, \ie, $\mathbb{E}[Y - h(X) \mid Z] = 0$. 
Since all noise terms are Gaussian and the model is linear, the vector $(X,Y,Z)$ is jointly Gaussian. 
Because $\mathcal{A}_{X\to Y\|Z}=Y-\hat\beta X$ is a linear combination of $(X,Y)$, the pair $(\mathcal{A}_{X\to Y\|Z},Z)$ is also jointly Gaussian.

Hence, we have
\begin{equation}
\begin{aligned}
\mathrm{Cov}(\mathcal{A}_{X\to Y\|Z},Z)
=\mathrm{Cov}(Y-\hat\beta X,Z)
=\mathrm{Cov}(Y,Z)-\hat\beta\,\mathrm{Cov}(X,Z)=0.
\end{aligned}
\end{equation}
Therefore, $\mathcal{A}_{X\to Y\|Z}$ and $Z$ are uncorrelated. For jointly Gaussian variables, uncorrelatedness implies independence, and thus
\begin{equation}
\begin{aligned}
p(\mathcal{A}_{X\to Y\|Z},Z)=p(\mathcal{A}_{X\to Y\|Z})\,p(Z)
\;\Longrightarrow\;
\log p(\mathcal{A}_{X\to Y\|Z},Z)=\log p(\mathcal{A}_{X\to Y\|Z})+\log p(Z).
\end{aligned}
\end{equation}
It follows immediately that
\begin{equation}
\begin{aligned}
\frac{\partial^2}{\partial \mathcal{A}_{X\to Y\|Z}\,\partial Z}\log p(\mathcal{A}_{X\to Y\|Z},Z)\equiv 0,
\end{aligned}
\end{equation}
which implies that Assumption~\ref{Ass-algebraic-condition} does not hold.
\end{proof}
}

\subsection{Proof of Theorem \ref{Theorem-Necessary-Sufficient-Condition-IV-ANINCE-Models}: Necessary and Sufficient Conditions for IV in ANINCE Models} 

\begin{proof}\label{Proof-Theorem-Necessary-Sufficient-Condition-IV-ANINCE-Models}  
Below, we prove the necessary and sufficient conditions for identifying invalid IV in
the ANINCE model. 
\begin{itemize}
    \item ($\boldsymbol{\Rightarrow}$): According to Theorem \ref{Theorem-Necessary-Condition-IV}, we know that if $Z$ is a valid IV relative to $X \to Y$, then $\{X, Y||Z\}$ always satisfies the AIT condition. 
    \item ($\boldsymbol{\Leftarrow}$): To establish that Assumptions \ref{Ass-completeness} and \ref{Ass-algebraic-condition} hold and that the candidate IV $Z$ is invalid, we need to show that $\{X, Y||Z\}$ consequently violates the AIT condition. This conclusion is equivalent to Proposition \ref{Proposition-identifiable-IV-ANINCE}, thereby proving the theorem.
\end{itemize}
\end{proof}

\subsection{Proof of Corollary \ref{Corollary-Necessary-Condition-IV-covariates}: Necessary Condition for IV with Covariates}\label{Proof:Corollary-Necessary-Condition-IV-covariates} 
\begin{proof}
Below, we apply the same proof technique used in Theorem \ref{Theorem-Necessary-Condition-IV} to demonstrate Corollary \ref{Corollary-Necessary-Condition-IV-covariates}. 
\revise{Suppose $Z$ is a valid IV relative to $X \to Y$ given $\mathbf{W}$, the generating mechanism can be expressed as follows: 
\begin{equation}
\begin{aligned}
    \mathbf{U} &= \boldsymbol{\varepsilon_{U}}, \quad 
    \mathbf{W} = t_W(\mathbf{PA}_{\mathbf{W}}) +  \boldsymbol{\varepsilon_W}, \quad 
    Z = t_Z(\mathbf{W}) + \varepsilon_Z,\\
    X &= g(\mathbf{W}, Z) + \varphi_{X}(\mathbf{U}) + \varepsilon_{X},  \quad  Y = f(X, \mathbf{W}) + \varphi_Y(\mathbf{U}) + \varepsilon_{Y},
    \end{aligned}
\end{equation}
where functions $g(\cdot)$, $f(\cdot)$ and $\varphi_{*}(\cdot)$ are smooth functions, $\mathbf{PA}_{\mathbf{W}}$ denotes the set of parent variables for each variable in $\mathbf{W}$, and $\mathbf{PA}_{\mathbf{W}} \subseteq \mathbf{W}$. 
According to \citet{newey2003instrumental}, \citet{singh2019kernel}, \citet{bennett2019deep}, under the completeness condition and given that $Z$ is a valid IV under the covarites $\mathbf{W}$, the function $h(\cdot)$ that satisfies the conditional moment restriction $\mathbb{E}[Y-h(X)|Z, \mathbf{W}]=0$ is uniquely identified, and coincides with the true causal effect function $f(\cdot)$ of $(X, \mathbf{W})$ on $Y$, that is, $h(\cdot) = f(\cdot)$.}  

Thus, we have 
\begin{align}\label{Eq:Theorem-Necessary-Condition-IV-A-covariates}
    \mathcal{A}_{X \to Y||(Z, \mathbf{W})} = Y - h(X, \mathbf{W}) = Y - f(X, \mathbf{W}) = \varphi_{Y}(\mathbf{U}) + \varepsilon_{Y}.
\end{align}
Let $\mathcal{Z}$ denote the residual from the regression $Z$ on $\mathbf{W}$. Thus, $\mathcal{Z} = \varepsilon_Z$. 

By Theorem 2.2.5 and its extension in \cite{meester2008natural}, if random variables are mutually independent, then any measurable functions applied to disjoint subsets of them yield independent random variables (see Theorem \ref{The_function_indep} and Corollary~\ref{The_sum_indep} in Appendix~\ref{Appendix-proofs} for further details). Based on this result, we next show that the auxiliary variable $\mathcal{A}_{X \to Y||(Z, \mathbf{W})}$ and $\mathcal{Z}$ are statistically independent. Specifically, since the noise terms $\varepsilon_Z$, $\varepsilon_Y$, and $\boldsymbol{\varepsilon_U}$ are mutually independent, we can obtain that $\varepsilon_Z$ is also independent of $\varphi_Y(\boldsymbol{\varepsilon_U}) + \varepsilon_Y$. Furthermore, combining the equations $\boldsymbol{U}=\boldsymbol{\varepsilon_U}$ and $\mathcal{Z} = \varepsilon_Z$ , we conclude that $\mathcal{Z}$ is independent of $\varphi_Y(\boldsymbol{U}) + \varepsilon_Y$. Hence, the auxiliary variable $\mathcal{A}_{X \to Y||(Z, \mathbf{W})}$ and $\mathcal{Z}$ are statistically independent, \ie, $\mathcal{A}_{X \to Y||(Z, \mathbf{W})} \CI \mathcal{Z} $. 
Consequently, $\{X, Y||(Z, \mathbf{W})\}$ satisfies the AIT condition. 
\end{proof}

\subsection{Proof of Theorem \ref{The:HSIC}: Asymptotic Level and Power of the AIT Test}\label{App:proof of HSIC}
\begin{proof}
Let $h(\cdot)$ denote the oracle function and $\hat{h}(\cdot)$ its estimator trained on the dataset $\mathcal{D}_1 \subset \mathcal{D}$. 
We establish our theoretical results using an independent dataset $\mathcal{D}_2 = \{(X_i, Y_i, Z_i)\}_{i=1}^m \subset \mathcal{D}$.
For notational simplicity, we below use $(X, Y, Z)$ to denote a generic observation drawn from $\mathcal{D}_2$. Recall that $\mathcal{A} = Y - h(X)$ and $\hat{\mathcal{A}} = Y - \hat{h}(X)$. 
In the absence of covariates, we simply have $\mathcal{Z} = Z$. To establish the asymptotic level and power of the AIT test, we analyze the difference between 
$\widehat{\mathrm{HSIC}}(\hat{\mathcal{A}}, \mathcal{Z})$ 
and its oracle counterpart 
$\widehat{\mathrm{HSIC}}(\mathcal{A}, \mathcal{Z})$, where $\widehat{\mathrm{HSIC}}(\cdot,\cdot)$ denotes the block-based estimator of \citet{zhang2018large}. Specifically, 
\begin{equation}
    \begin{aligned}\label{Eq: test statistic}
        &\Big|\sqrt{mB} \widehat{\mathrm{HSIC}}(\hat{\mathcal{A}}, \mathcal{Z}) - \sqrt{mB}\widehat{\mathrm{HSIC}}(\mathcal{A}, \mathcal{Z}) \Big| \\
        =&\Big|\sqrt{mB} \cdot \frac{B}{m}\sum_{b=1}^{m/B} \hat{\eta}_{b}(\hat{\mathcal{A}}) - \sqrt{mB} \cdot \frac{B}{m}\sum_{b=1}^{m/B} \hat{\eta}_{b}(\mathcal{A}) \Big| \\
        =& \sqrt{mB} \cdot \frac{B}{m} \left|\sum_{b=1}^{m/B}\left[\hat{\eta}_{b}(\hat{\mathcal{A}}) - \hat{\eta}_{b}(\mathcal{A}) \right] \right| \quad \text{(by the triangle inequality)} \\
        \le & \sqrt{mB} \cdot \frac{B}{m} \sum_{b=1}^{m/B} \left|\left[\hat{\eta}_{b}(\hat{\mathcal{A}}) - \hat{\eta}_{b}(\mathcal{A}) \right] \right| \\
        =& \sqrt{mB} \cdot \frac{B}{m} \sum_{b=1}^{m/B} \cdot \left.\left|\left[\underbrace{\frac{1}{B(B-3)}\left[\operatorname{tr}\left(\tilde{\hat{K}}^{(b)} \tilde{L}^{(b)}\right)- \operatorname{tr}\left(\tilde{K}^{(b)} \tilde{L}^{(b)}\right)\right]}_{\text{Part I}} \right.\right.\right.\\
        +& \left. \underbrace{\frac{1}{B(B-3)}\left[\frac{\mathbbm{1}^{T} \tilde{\hat{K}}^{(b)} \mathbbm{1}\mathbbm{1}^{T} \tilde{L}^{(b)} \mathbbm{1}}{(B-1)(B-2)} -\frac{\mathbbm{1}^{T} \tilde{K}^{(b)} \mathbbm{1}\mathbbm{1}^{T} \tilde{L}^{(b)} \mathbbm{1}}{(B-1)(B-2)} \right]}_{\text{Part II}} \right. \\ 
        -& \left. \left. \underbrace{\frac{1}{B(B-3)}\left[\frac{2}{B-2} \mathbbm{1}^{T} \tilde{\hat{K}}^{(b)} \tilde{L}^{(b)} \mathbbm{1} - \frac{2}{B-2} \mathbbm{1}^{T} \tilde{K}^{(b)} \tilde{L}^{(b)} \mathbbm{1}\right]}_{\text{Part III}} \right] \right|,
    \end{aligned}
\end{equation}
where $\tilde{\hat{K}}^{(b)}$ denotes the kernel matrix for the estimated auxiliary variables $\hat{\mathcal{A}}$ in the $b$-th block. 

Below, we prove the asymptotic level and power of the AIT condition test, respectively. 

\noindent\textbf{1. Proof of the asymptotic level of the AIT condition. } 

To establish the asymptotic level of the AIT test under the null hypothesis, we proceed in three steps. First, we derive a bound on the estimation error within a single block. 
Second, we extend this bound to the full HSIC estimator. Finally, we study the asymptotic behavior of the resulting test statistic.

\begin{itemize}
    \item \textbf{Step 1: Single-block Error Bound for $|\hat{\eta}_b(\hat{\mathcal{A}})-\hat{\eta}_b(\mathcal{A})|$.} For a fixed block $b$ (omitting the superscript $(b)$ for brevity), let $l_{ij} = l(\mathcal{Z}_i, \mathcal{Z}_j)$. We decompose the difference into the three components appearing in \eqref{Eq: test statistic} (Parts I–III) and bound them separately. 
    \begin{enumerate}
    \item[(1)] \textbf{Part I in Equation \eqref{Eq: test statistic}:}
    \begin{equation}
        \begin{aligned}\nonumber
            &\frac{1}{B(B-3)}\operatorname{tr}\left(\widetilde{\hat{K}} \tilde{L}\right) - \frac{1}{B(B-3)}\operatorname{tr}\left(\tilde{K} \tilde{L}\right) \\
            =& \frac{1}{B(B-3)}\left[\sum_{i,j: i \neq j}\left(k\left(\hat{\mathcal{A}}_{i}, \hat{\mathcal{A}}_{j}\right)-k\left(\mathcal{A}_{i}, \mathcal{A}_{j}\right)\right) l_{i j}\right]
        \end{aligned}
    \end{equation} 
    Taking absolute value of the above expression, and by the boundedness and Lipschitz continuity (Assumption \ref{Ass: Bounded Lipschitz Kernels}), we have 
    \begin{equation}\label{Eq: first term}
        \begin{aligned}
                &\left|\frac{1}{B(B-3)}\left[\sum_{i,j:i \neq j}\left(k\left(\hat{\mathcal{A}}_{i}, \hat{\mathcal{A}}_{j}\right)-k\left(\mathcal{A}_{i}, \mathcal{A}_{j}\right)\right) l_{i j}\right]\right|\\
                \le& \frac{M}{B(B-3)}\sum_{i,j: i \neq j}\left|\left(k\left(\hat{\mathcal{A}}_{i}, \hat{\mathcal{A}}_{j}\right)-k\left(\mathcal{A}_{i}, \mathcal{A}_{j}\right)\right) \right| \qquad \text{(by Assumption \ref{Ass: Bounded Lipschitz Kernels} (1))} \\
                \le& \frac{MQ}{B(B-3)}\sum_{i,j: i \neq j}\left(\left\|\hat{\mathcal{A}}_{i}-\mathcal{A}_{i}\right\|+\left\|\hat{\mathcal{A}}_{j}-\mathcal{A}_{j}\right\|\right) \qquad \text{(by Assumption \ref{Ass: Bounded Lipschitz Kernels} (2))}\\
                \le& \frac{MQ}{B(B-3)}\left[(B-1) \cdot \sum_{i} \left\|\hat{\mathcal{A}}_{i}-\mathcal{A}_{i}\right\|+(B-1) \cdot \sum_{j} \left\|\hat{\mathcal{A}}_{j}-\mathcal{A}_{j}\right\|\right]\\
                \le& \frac{2MQ(B-1)}{B(B-3)} \sum_{i} \left\|\hat{\mathcal{A}}_{i}-\mathcal{A}_{i}\right\|.
        \end{aligned}
    \end{equation}
    \item[(2)] \textbf{Part II in Equation \eqref{Eq: test statistic}:} 
    \begin{equation}\nonumber
        \begin{aligned}
         &\frac{1}{B(B-3)}\frac{\mathbbm{1}^{T} \widetilde{\hat{K}} \mathbbm{1}\mathbbm{1}^{T} \tilde{L} \mathbbm{1}}{(B-1)(B-2)} - \frac{1}{B(B-3)}\frac{\mathbbm{1}^{T} \tilde{K} \mathbbm{1}\mathbbm{1}^{T} \tilde{L} \mathbbm{1}}{(B-1)(B-2)} \\
         =& \frac{1}{B(B-3)(B-1)(B-2)}\sum_{i,j: i\neq j}\left[k\left(\hat{\mathcal{A}}_i,\hat{\mathcal{A}}_j\right)-k\left(\mathcal{A}_{i}, \mathcal{A}_{j}\right)\right] \cdot \sum_{q,r: q\neq r} l_{qr}.
        \end{aligned}
    \end{equation}
    Taking absolute value of the above expression, and by the boundedness and Lipschitz continuity (Assumption \ref{Ass: Bounded Lipschitz Kernels}), we have 
    \begin{equation}\label{Eq: second term}
        \begin{aligned}
            &\left|\frac{1}{B(B-3)(B-1)(B-2)}\sum_{i,j: i\neq j}\left[k\left(\hat{\mathcal{A}}_i,\hat{\mathcal{A}}_j\right)-k\left(\mathcal{A}_{i}, \mathcal{A}_{j}\right)\right]  \sum_{q,r: q\neq r} l_{qr}\right|\\
            \le& \frac{MB(B-1)}{B(B-3)(B-1)(B-2)} \sum_{i, j:i\neq j}\left|k\left(\hat{\mathcal{A}}_{i}, \hat{\mathcal{A}}_{j}\right)-k\left(\mathcal{A}_{i}, \mathcal{A}_{j}\right)\right| \qquad \text{(by Assumption \ref{Ass: Bounded Lipschitz Kernels} (1))} \\
            \leq & \frac{MQ(B-1)}{(B-3)(B-2)} \left(\sum_{i}\left\|\hat{\mathcal{A}}_{i}-\mathcal{A}_{i}\right\|+\sum_{j}\left\|\hat{\mathcal{A}}_{j}-\mathcal{A}_{j}\right\|\right) \qquad \text{(by Assumption \ref{Ass: Bounded Lipschitz Kernels} (2))} \\
            = &  \frac{2MQ(B-1)}{(B-3)(B-2)} \sum_{i} \left\|\hat{\mathcal{A}}_{i}-\mathcal{A}_{i}\right\|.
        \end{aligned}
    \end{equation}
    \item[(3)] \textbf{Part III in Equation \eqref{Eq: test statistic}:} 
    \begin{equation}
        \begin{aligned}\nonumber
            &\frac{-1}{B(B-3)}\frac{2}{B-2} \mathbbm{1}^{T} \widetilde{\hat{K}} \tilde{L} \mathbbm{1}+ \frac{1}{B(B-3)}\frac{2}{B-2} \mathbbm{1}^{T} \tilde{K} \tilde{L} \mathbbm{1}\\
            =&\frac{-2}{B(B-3)(B-2)}\sum_{i,j,r:r\neq i,j \neq i}\left[k\left(\hat{\mathcal{A}}_i,\hat{\mathcal{A}}_j\right)-k(\mathcal{A}_{i},\mathcal{A}_{j})\right]l_{ir}.
        \end{aligned}
    \end{equation}
    Taking absolute value of the above expression, and by the boundedness and Lipschitz continuity (Assumption \ref{Ass: Bounded Lipschitz Kernels}), we have 
    \begin{equation}\label{Eq: third term}
    \begin{aligned}
        &\left|\frac{-2}{B(B-3)(B-2)}\sum_{i,j,r:r\neq i,j \neq i}\left[k\left(\hat{\mathcal{A}}_i,\hat{\mathcal{A}}_j\right)-k(\mathcal{A}_{i},\mathcal{A}_{j})\right]l_{ir}\right|\\
        \le &\frac{2}{B(B-3)(B-2)}\sum_{i} \sum_{r:r\neq i}\left|k(\hat{\mathcal{A}}_i,\hat{\mathcal{A}}_r)-k(\mathcal{A}_{i},\mathcal{A}_{r})\right| \cdot \sum_{j:j\neq i}|l_{ji}| \\
        \le &\frac{2M(B-1)}{B(B-3)(B-2)}\sum_{i} \sum_{r:r\neq i}\left|k(\hat{\mathcal{A}}_i,\hat{\mathcal{A}}_r)-k(\mathcal{A}_{i},\mathcal{A}_{r})\right| \qquad \text{(by Assumption \ref{Ass: Bounded Lipschitz Kernels} (1))} \\
        \le & \frac{2MQ(B-1)}{B(B-3)(B-2)} \sum_{i} \left(\sum_{r:r\neq i}  \left\|\hat{\mathcal{A}}_{r}-\mathcal{A}_{r}\right\| + (B-1)\left\|\hat{\mathcal{A}}_{i}-\mathcal{A}_{i}\right\| \right) \qquad \text{(by Assumption \ref{Ass: Bounded Lipschitz Kernels} (2))}\\ 
        \le & \frac{4MQ(B-1)^2}{B(B-3)(B-2)} \sum_{i} \left\|\hat{\mathcal{A}}_{i}-\mathcal{A}_{i}\right\|. 
    \end{aligned}
    \end{equation}
\end{enumerate}
    Combining the bounds derived in \eqref{Eq: first term}--\eqref{Eq: third term}, we obtain  
    \begin{equation}
        \begin{aligned}\nonumber
            |\hat{\eta}_b(\hat{\mathcal{A}})-\hat{\eta}_b(\mathcal{A})|= \frac{C(B)}{B}\sum_{i\in \mathcal{I}_b}\left\|\hat{\mathcal{A}}_{i}-\mathcal{A}_{i}\right\|,
        \end{aligned}
    \end{equation}
    where the constant $C(B):=\frac{2MQ(B-1)}{(B-3)}\left[1+\frac{B}{B-2}+\frac{2(B-1)}{B-2}\right]$, and $\mathcal{I}_b$ denotes the index set of the $b$-th block with $|\mathcal{I}_b|=B$. 

    \item \textbf{Step 2: Overall HSIC Error Bound $|\widehat{\mathrm{HSIC}}(\hat{\mathcal{A}},\mathcal{Z})-\widehat{\mathrm{HSIC}}(\mathcal{A},\mathcal{Z})|$.} 
    Recall that the block-based HSIC estimator is defined as $\widehat{\mathrm{HSIC}}(\mathcal{A},\mathcal{Z})=\frac{B}{m}\sum_{b=1}^{m/B}\hat{\eta}_b(\mathcal{A})$. Hence, summing over all $m/B$ blocks: 
        \begin{equation}
            \begin{aligned}\nonumber
            |\widehat{\mathrm{HSIC}}(\hat{\mathcal{A}},\mathcal{Z})-\widehat{\mathrm{HSIC}}(\mathcal{A},\mathcal{Z})|&=\frac{B}{m} \sum_{b=1}^{m/B}|\hat{\eta}_b(\hat{\mathcal{A}})-\hat{\eta}_b(\mathcal{A})|\\
            & \le\frac{B}{m} \sum_{b=1}^{m/B} \frac{C(B)}{B}\sum_{i\in\mathcal{I}_b}\left\|\hat{\mathcal{A}}_{i}-\mathcal{A}_{i}\right\|
            \le \frac{C(B)}{m}\sum_{i=1}^m\left\|\hat{\mathcal{A}}_{i}-\mathcal{A}_{i}\right\| 
            \end{aligned}
        \end{equation}
        Substituting the auxiliary variable $\mathcal{A}=Y-h(X)$, we obtain $\hat{\mathcal{A}}_i-\mathcal{A}_i=
        h(X_i)-\hat{h}(X_i)$. Thus,  
        $$|\widehat{\mathrm{HSIC}}(\hat{\mathcal{A}},\mathcal{Z})-\widehat{\mathrm{HSIC}}(\mathcal{A},\mathcal{Z})|
        \le\frac{C(B)}{m}\sum_{i=1}^m\bigl\|
        h(X_i)-\hat{h}(X_i)\bigr\|.$$
        By the law of large numbers, $\frac{1}{m}\sum_{i=1}^m\|h(X_i) - \hat{h}(X_i)\|
        \xrightarrow{p}
        \mathbb{E}[\|h(X_i) - \hat{h}(X_i)\|]$. Moreover, since $\hat{h}$ is estimated using another independent dataset $\mathcal{D}_1$ of sample size $n$, and is $\sqrt{n}$-consistent by Assumption \ref{Ass:sqrt-n-consistency}, we have $\|h-\hat{h}\|_{L_2({P_X})}=O_p(n^{-1/2})$. 
        Combining the above results yields 
        \begin{equation}\label{Eq:the difference of HSIC}
            |\widehat{\mathrm{HSIC}}(\hat{\mathcal{A}},\mathcal{Z})-\widehat{\mathrm{HSIC}}(\mathcal{A},\mathcal{Z})|=O_p\!\left(n^{-1/2}\right).
        \end{equation}

\item \textbf{Step 3: Asymptotic Validity $\sqrt{mB}\,\widehat{\mathrm{HSIC}}$.}
By the assumption that the sample sizes satisfy $mB = o(n)$ as $n \to \infty$, together with Equation~\eqref{Eq:the difference of HSIC}, Equation~\eqref{Eq: test statistic} can be expressed as follows: 
\[
\bigl|\sqrt{mB}\,\widehat{\mathrm{HSIC}}(\hat{\mathcal{A}},\mathcal{Z})
- \sqrt{mB}\,\widehat{\mathrm{HSIC}}(\mathcal{A},\mathcal{Z})\bigr|
= \sqrt{mB}\cdot O_p(n^{-1/2}) = o_p(1).
\]

Hence, 
\[
\sqrt{mB}\,\widehat{\mathrm{HSIC}}(\hat{\mathcal{A}},\mathcal{Z})
=\sqrt{mB}\,\widehat{\mathrm{HSIC}}(\mathcal{A},\mathcal{Z})+o_p(1).
\]

Under $H_0$, let $c_{m,\alpha}$ be a critical value such that the oracle statistic satisfies
\[
P_{H_0}\!\left(\sqrt{mB}\,\widehat{\mathrm{HSIC}}(\mathcal{A},\mathcal{Z})> c_{m,\alpha}\right)
\le \alpha + o(1).
\]

By Slutsky's theorem, 
\[
P_{H_0}\!\left(\sqrt{mB}\,\widehat{\mathrm{HSIC}}(\hat{\mathcal{A}},\mathcal{Z})> c_{m,\alpha}
\right)= {P}_{H_0}\bigl(\sqrt{mB}\,\widehat{\mathrm{HSIC}}({\mathcal{A}},\mathcal{Z}) + o_p(1) > c_{m,\alpha}\bigr)\le \alpha + o(1).
\]
Consequently, 
\[
P_{H_0}(\text{Type I error} )=P_{H_0}\!\left(\sqrt{mB}\,\widehat{\mathrm{HSIC}}(\hat{\mathcal{A}},\mathcal{Z})> c_{m,\alpha}
\right)\le \alpha + o(1).
\]

\end{itemize}

\noindent \textbf{2. Proof of the asymptotic power of the AIT condition.} 

Under the fixed alternative $H_1$, assume that 
$\mathrm{HSIC}(\mathcal{A}, \mathcal{Z}) = \eta > 0$. Recall that the block-based HSIC estimator can be written as
\[
\widehat{\mathrm{HSIC}}(\mathcal{A}, \mathcal{Z})
=
\frac{1}{M} \sum_{b=1}^{M} \hat{\eta}_b,
\quad
M = m/B.
\]
Under $H_1$, the block statistics $\{\hat{\eta}_b\}_{b=1}^{M}$ are i.i.d. with 
$\mathbb{E}[\hat{\eta}_b] = \eta > 0$ and finite variance $\mathrm{Var}(\hat{\eta}_b)=\tau^2<\infty$.
By the classical central limit theorem,
\[
\sqrt{M}\left(
\widehat{\mathrm{HSIC}}(\mathcal{A}, \mathcal{Z}) - \eta
\right)
\xrightarrow{D}
\mathcal{N}(0, \tau^2).
\]
Since $M = m/B$, multiplying both sides by $B$, this is equivalent to
\[
\sqrt{mB}\left(
\widehat{\mathrm{HSIC}}(\mathcal{A}, \mathcal{Z}) - \eta
\right)
\xrightarrow{D}
\mathcal{N}(0, \sigma_{H_1}^2),
\]
where $\sigma_{H_1}^2 = \tau^2 B^2$.

By the plug-in error bound established in Equation \eqref{Eq:the difference of HSIC}, $\widehat{\mathrm{HSIC}}(\hat{\mathcal{A}}, \mathcal{Z})=
\widehat{\mathrm{HSIC}}(\mathcal{A}, \mathcal{Z})
+O_p(n^{-1/2})$, and under $mB=o(n)$,
\[
\sqrt{mB}\left(
\widehat{\mathrm{HSIC}}(\hat{\mathcal{A}}, \mathcal{Z})
-
\widehat{\mathrm{HSIC}}(\mathcal{A}, \mathcal{Z})
\right)
=
o_p(1).
\]
By Slutsky’s theorem,
\[
\sqrt{mB}\left(
\widehat{\mathrm{HSIC}}(\hat{\mathcal{A}}, \mathcal{Z})
-
\eta
\right)
\xrightarrow{D}
\mathcal{N}(0, \sigma_{H_1}^2).
\]

The Type II error probability is
\[
P_{H_1}(\text{Type II error})=P_{H_1}\!\left(
\sqrt{mB}\,\widehat{\mathrm{HSIC}}(\hat{\mathcal{A}}, \mathcal{Z})
\le c_{m,\alpha}
\right),
\]
where $c_{m,\alpha}$ is defined as the $(1-\alpha)$-quantile of the null limiting distribution. Subtracting $\sqrt{mB}\eta$ and standardizing yields
\[
P_{H_1}\left(\frac{\sqrt{mB}\left(
\widehat{\mathrm{HSIC}}(\hat{\mathcal{A}}, \mathcal{Z}) - \eta
\right)}{\sigma_{H_1}}\le \frac{c_{m,\alpha} - \sqrt{mB}\eta}{\sigma_{H_1}}\right) := 
P_{H_1}\!\left(T_{m,B}\le \Delta_{m,B}\right),
\]
where $T_{m,B}\xrightarrow{D}\mathcal{N}(0,1)$. Since the null distribution is non-degenerate, its quantiles are finite, and hence $c_{m,\alpha} = O_p(1)$. Moreover, because $\eta>0$,
\[
\Delta_{m,B}
=
-
\frac{\eta}{\sigma_{H_1}}\sqrt{mB}
+
O(1)
\to -\infty.
\]
Hence,
\[
P_{H_1}(\text{Type II error})
=
\Phi(\Delta_{m,B}) + o(1), \text{ as } mB \to \infty.
\]

Applying Mills' ratio for $\Delta_{m,B}<0$ gives
\[
\Phi(\Delta_{m,B})
\le
\frac{1}{\sqrt{2\pi}(-\Delta_{m,B})}
\exp\!\left(-\frac{\Delta_{m,B}^2}{2}\right)
=
O\!\left(
\frac{1}{\sqrt{mB}}
e^{-c mB}
\right),
\]
where $c=\eta^2/(2\sigma_{H_1}^2)>0$.
Therefore,
\[
P_{H_1}(\text{Type II error})
=
O\!\left(
\frac{1}{\sqrt{mB}} e^{-c mB}
\right).
\]
\end{proof}

\tworevise{\subsection{Proof of Corollary \ref{The-HSIC-covariate}: Asymptotic Level and Power of the AIT Test with Covariates}\label{App:proof of HSIC of covariates}
\begin{proof}
Let $h(\cdot)$ and $\pi(\cdot)$ denote the oracle functions, and let $\hat{h}(\cdot)$ and $\hat{\pi}(\cdot)$ denote their estimators trained on the dataset $\mathcal{D}_1 \subset \mathcal{D}$. 
We establish our theoretical results using an independent dataset $\mathcal{D}_2 = \{(X_i, Y_i, \mathbf{W}_i, Z_i)\}_{i=1}^m \subset \mathcal{D}$. 
For notational simplicity, below we use $(X, Y, \mathbf{W}, Z)$ to denote a generic observation drawn from $\mathcal{D}_2$. Recall that $\mathcal{A} = Y - h(X)$ and $\hat{\mathcal{A}} = Y - \hat{h}(X)$, and that $\mathcal{Z} = Z - \pi(\mathbf{W})$ and $\hat{\mathcal{Z}} = Z - \hat{\pi}(\mathbf{W})$. 
To establish the asymptotic level and power of the AIT test, we analyze the difference between 
$\widehat{\mathrm{HSIC}}(\hat{\mathcal{A}}, \hat{\mathcal{Z}})$ 
and its oracle counterpart 
$\widehat{\mathrm{HSIC}}(\mathcal{A}, \mathcal{Z})$, where $\widehat{\mathrm{HSIC}}(\cdot,\cdot)$ denotes the block-based estimator of \citet{zhang2018large}, \ie, 
$\bigl| \widehat{\mathrm{HSIC}}(\hat{\mathcal{A}}, \hat{\mathcal{Z}})-\widehat{\mathrm{HSIC}}(\mathcal{A}, \mathcal{Z})\bigr|$. 

Applying the triangle inequality, we have 
\begin{equation}
\label{Eq:HSIC-difference-covariates}
\begin{aligned}
&\bigl|
\widehat{\mathrm{HSIC}}(\hat{\mathcal{A}}, \hat{\mathcal{Z}})
-
\widehat{\mathrm{HSIC}}(\mathcal{A}, \mathcal{Z})
\bigr| \\
\le\;&
\underbrace{
\bigl|
\widehat{\mathrm{HSIC}}(\hat{\mathcal{A}}, \hat{\mathcal{Z}})
-
\widehat{\mathrm{HSIC}}(\mathcal{A}, \hat{\mathcal{Z}})
\bigr|
}_{\text{Part I: error induced by }\hat{\mathcal{A}}}
+
\underbrace{
\bigl|
\widehat{\mathrm{HSIC}}(\mathcal{A}, \hat{\mathcal{Z}})
-
\widehat{\mathrm{HSIC}}(\mathcal{A}, \mathcal{Z})
\bigr|
}_{\text{Part II: error induced by }\hat{\mathcal{Z}}}.
\end{aligned}
\end{equation}

We next show that, by arguments analogous to those in the proof of Theorem~\ref{The:HSIC}, the two terms can be bounded separately. Specifically, by Assumptions~\ref{Ass: Bounded Lipschitz Kernels with covariates} and \ref{Ass:sqrt-n-consistency-covariates}, we have
\[
\bigl|
\widehat{\mathrm{HSIC}}(\hat{\mathcal{A}}, \hat{\mathcal{Z}})
-
\widehat{\mathrm{HSIC}}(\mathcal{A}, \hat{\mathcal{Z}})
\bigr|
= O_p(n^{-1/2}).
\]
Moreover, by Assumptions~\ref{Ass: Bounded Lipschitz Kernels with covariates} and \ref{Ass:rate}, it holds that
\[
\bigl|
\widehat{\mathrm{HSIC}}(\mathcal{A}, \hat{\mathcal{Z}})
-
\widehat{\mathrm{HSIC}}(\mathcal{A}, \mathcal{Z})
\bigr|
= O_p(n^{-q}), \quad q \in (0,1/2).
\]

Combining the above two results, we obtain 
\[
\bigl|
\widehat{\mathrm{HSIC}}(\hat{\mathcal{A}}, \hat{\mathcal{Z}})
-
\widehat{\mathrm{HSIC}}(\mathcal{A}, \mathcal{Z})
\bigr|
=
O_p(n^{-1/2}) + O_p(n^{-q}),
\qquad q \in (0,1/2).
\]
Since $n^{-q}$ dominates $n^{-1/2}$ for $q < 1/2$, the overall convergence rate
is $O_p(n^{-q})$, \ie, $\bigl|
\widehat{\mathrm{HSIC}}(\hat{\mathcal{A}}, \hat{\mathcal{Z}})
-\widehat{\mathrm{HSIC}}(\mathcal{A}, \mathcal{Z})
\bigr|=O_p(n^{-q})$. 

By the assumption that the sample sizes satisfy $mB = o(n^{2q})$, which ensures that: 
\begin{equation}\label{Eq:difference of HSIC with covariates}
    \begin{aligned}
    \bigl|\sqrt{mB}\,\widehat{\mathrm{HSIC}}(\hat{\mathcal{A}}, \hat{\mathcal{Z}})
    -\sqrt{mB}\,\widehat{\mathrm{HSIC}}(\mathcal{A}, \mathcal{Z})\bigr|= o_p(1).
    \end{aligned}
\end{equation}

Below, we prove the asymptotic level and power of the AIT condition test, respectively. 

\noindent \textbf{1. Proof of the asymptotic level of the AIT condition with covariates.}

Under $H_0$, let $c_{m,\alpha}$ be a critical value such that the oracle statistic satisfies 
\[
P_{H_0}\!\left(\sqrt{mB}\,\widehat{\mathrm{HSIC}}(\mathcal{A},\mathcal{Z})> c_{m,\alpha}\right)\le
\alpha + o(1).
\]
By Slutsky's theorem, we further have 
\[
{P}_{H_0}(\sqrt{mB}\,\widehat{\mathrm{HSIC}}(\hat{\mathcal{A}},\hat{\mathcal{Z}}) > c_{m,\alpha}) 
= {P}_{H_0}(\sqrt{mB}\,\widehat{\mathrm{HSIC}}(\mathcal{A},\mathcal{Z}) + o_p(1) > c_{m,\alpha})
\le \alpha + o(1).
\]
Consequently, 
$${P}_{H_0}(\text{Type I error}) = {P}_{H_0}(\sqrt{mB}\,\widehat{\mathrm{HSIC}}(\hat{\mathcal{A}},\hat{\mathcal{Z}}) > c_{m,\alpha}) \le \alpha + o(1).$$

\noindent \textbf{2. Proof of the asymptotic power of the AIT condition with covariates.}

We proceed analogously to the proof of Theorem~\ref{The:HSIC}. 
In particular, applying the same block-averaging and a central limit theorem argument to the oracle variables $(\mathcal A,\mathcal Z)$ yields
\[
\sqrt{mB}\left(
\widehat{\mathrm{HSIC}}(\mathcal{A}, \mathcal{Z}) - \eta
\right)
\xrightarrow{D}
\mathcal{N}(0,\sigma_{H_1}^2),
\qquad
\sigma_{H_1}^2=\tau^2 B^2,
\]
where $\eta=\mathrm{HSIC}(\mathcal A,\mathcal Z)>0$ and $\tau^2=\mathrm{Var}(\hat{\eta}_b)<\infty$.

According to Equation \eqref{Eq:difference of HSIC with covariates}, we have 
\[
\sqrt{mB}\left(\widehat{\mathrm{HSIC}}(\hat{\mathcal{A}}, \hat{\mathcal{Z}}) - \eta\right)
=\sqrt{mB}\left(\widehat{\mathrm{HSIC}}(\mathcal{A}, \mathcal{Z}) - \eta\right)+o_p(1).
\]
By Slutsky’s theorem, 
\[
\sqrt{mB}\left(
\widehat{\mathrm{HSIC}}(\hat{\mathcal{A}}, \hat{\mathcal{Z}}) - \eta
\right)
\xrightarrow{D}
\mathcal{N}(0,\sigma_{H_1}^2).
\]

The Type II error probability is
\[
P_{H_1}(\text{Type II error})
=
P_{H_1}\!\left(
\sqrt{mB}\,\widehat{\mathrm{HSIC}}(\hat{\mathcal{A}}, \hat{\mathcal{Z}})
\le c_{m,\alpha}
\right),
\]
where $c_{m,\alpha}$ is defined as the $(1-\alpha)$-quantile of the null limiting distribution. Subtracting $\sqrt{mB}\eta$ and standardizing yields
\[
P_{H_1}\left(\frac{\sqrt{mB}\left(
\widehat{\mathrm{HSIC}}(\hat{\mathcal{A}}, \hat{\mathcal{Z}}) - \eta
\right)}{\sigma_{H_1}} \le \frac{c_{m,\alpha}-\sqrt{mB}\eta}{\sigma_{H_1}} \right):=P_{H_1}\!\left(T_{m,B}\le \Delta_{m,B}\right),
\]
where $T_{m,B}\xrightarrow{D}\mathcal{N}(0,1)$. Since the null distribution is non-degenerate, its quantiles are finite, and hence $c_{m,\alpha}=O_p(1)$. Moreover, because $\eta>0$, we have
\[
\Delta_{m,B}=-\frac{\eta}{\sigma_{H_1}}\sqrt{mB}+O_p(1)\to -\infty.
\]

Hence,
\[
P_{H_1}(\text{Type II error})
=
\Phi(\Delta_{m,B})+o(1),
\text{ as } mB\to\infty.
\]
Applying Mills' ratio for $\Delta_{m,B}<0$ gives
\[
\Phi(\Delta_{m,B})
\le
\frac{1}{\sqrt{2\pi}(-\Delta_{m,B})}
\exp\!\left(-\frac{\Delta_{m,B}^2}{2}\right)
=
O\!\left(\frac{1}{\sqrt{mB}}e^{-c mB}\right),
\]
where $c=\eta^2/(2\sigma_{H_1}^2)>0$.
Therefore,
\[
P_{H_1}(\text{Type II error})
=
O\!\left(\frac{1}{\sqrt{mB}}e^{-c mB}\right).
\]
\end{proof}

}

\tworevise{
\section{Two Analytic Examples Satisfying Assumption~\ref{Ass-algebraic-condition}}
\label{App-example of assumption 3}

In this section, we provide two fully analytic data-generating processes and verify 
Assumption~\ref{Ass-algebraic-condition} in a straightforward manner. Our verification is based on the following observation: since $p(\mathcal A,Z)=p(\mathcal A\mid Z)p(Z)$, we have
\[
\log p(\mathcal A,Z)=\log p(\mathcal A\mid Z)+\log p(Z),
\quad \Rightarrow \quad
\frac{\partial^2}{\partial \mathcal A\,\partial Z}\log p(\mathcal A,Z)
=
\frac{\partial^2}{\partial \mathcal A\,\partial Z}\log p(\mathcal A\mid Z),
\]
because $\log p(Z)$ does not depend on $\mathcal A$.
Thus, it suffices to show that the mixed derivative of $\log p(\mathcal A\mid Z)$ is nonzero on a set with non-zero Lebesgue measure.

Throughout, we define the auxiliary variable for the constant effect model as
\[
\mathcal A_{X\to Y\|Z}:=Y-\hat\beta X,
\qquad
\hat\beta:=\frac{\Cov(Y,Z)}{\Cov(X,Z)}.
\]

\subsection{Example \ref{Example:violation of exogeneity}: Violation of the Exogeneity Condition (\texorpdfstring{$Z \nCI U$)})}
\label{App:Ex1-exogeneity}

Consider the nonlinear Gaussian noise model
\begin{equation}\label{Eq:App-ex1}
\begin{aligned}
U &= \varepsilon_U, \qquad
Z = \gamma U+\varepsilon_Z,\\
X &= \exp(Z)+\rho U+\varepsilon_X,\qquad
Y = \beta X+\kappa U+\varepsilon_Y,
\end{aligned}
\end{equation}
where $(\varepsilon_U,\varepsilon_Z,\varepsilon_X,\varepsilon_Y)\stackrel{\mathrm{ind}}{\sim}\mathcal N(0,1)$.
Since $Z=\gamma U+\varepsilon_Z$, we have $Z \nCI U$ whenever $\gamma\neq 0$, i.e., exogeneity is violated.

\paragraph{Step 1: Compute $\hat\beta=\Cov(Y,Z)/\Cov(X,Z)$.} Let $\sigma_Z^2:=\Var(Z)=\gamma^2+1$.
For $Z\sim \mathcal N(0,\sigma_Z^2)$, the moment identity gives
\begin{equation}\label{Eq:App-cov-expZ}
\Cov(\exp(Z),Z)=\mathbb E[Z e^{Z}]
=\left.\frac{d}{dt}\mathbb E[e^{tZ}]\right|_{t=1}
=\left.\frac{d}{dt}\exp\!\left(\tfrac12\sigma_Z^2 t^2\right)\right|_{t=1}
=\sigma_Z^2\exp\!\left(\tfrac12\sigma_Z^2\right).
\end{equation}
Moreover, $\Cov(U,Z)=\Cov(U,\gamma U+\varepsilon_Z)=\gamma$.
Hence
\[
\Cov(X,Z)=\Cov(\exp(Z),Z)+\rho\,\Cov(U,Z)
=\sigma_Z^2 \exp({\sigma_Z^2/2})+\rho\gamma.
\]
Since $Y=\beta X+\kappa U+\varepsilon_Y$ and $\varepsilon_Y \CI Z$,
\[
\Cov(Y,Z)=\beta\,\Cov(X,Z)+\kappa\,\Cov(U,Z)
=\beta\big(\sigma_Z^2 \exp({\sigma_Z^2/2})+\rho\gamma\big)+\kappa\gamma.
\]
Therefore,
\begin{equation}\label{Eq:App-ex1-betahat}
\hat\beta
=\frac{\Cov(Y,Z)}{\Cov(X,Z)}
=\beta+\underbrace{\frac{\kappa\gamma}{\sigma_Z^2 \exp({\sigma_Z^2/2})+\rho\gamma}}_{\beta_{bias}}.
\end{equation}

\paragraph{Step 2: Derive $\mathcal A$ and $p(\mathcal A\mid Z)$.}

Using $Y=\beta X+\kappa U+\varepsilon_Y$,
\begin{align*}
\mathcal A
&=Y-\hat\beta X \\
&=(\beta-\hat\beta)X+\kappa U+\varepsilon_Y\\
&=(\beta-\hat\beta)\exp(Z)+\big((\beta-\hat\beta)\rho+\kappa\big)U+(\beta-\hat\beta)\varepsilon_X+\varepsilon_Y.
\end{align*}
Conditional on $Z=z$, the random variable $U\mid Z=z$ is Gaussian since $(U,Z)$ is jointly Gaussian. 
Specifically, 
\[
\mathbb E[U\mid Z=z]=\frac{\Cov(U,Z)}{\Var(Z)}z=\frac{\gamma}{\sigma_Z^2}z,
\qquad
\Var(U\mid Z=z)=1-\frac{\Cov(U,Z)^2}{\Var(Z)}=\frac{1}{\sigma_Z^2}.
\]
Since $(\varepsilon_X,\varepsilon_Y)\CI Z$ and are Gaussian, it follows that 
$\mathcal A\mid Z=z$ is Gaussian with mean $m(z)$ and variance $v$ given by 
\begin{equation}\label{Eq:App-ex1-meanvar}
\begin{aligned}
m(z)
&=(\beta-\hat\beta)\exp({z})+\big((\beta-\hat\beta)\rho+\kappa\big)\frac{\gamma}{\sigma_Z^2}z,\\
v
&=\big((\beta-\hat\beta)\rho+\kappa\big)^2\frac{1}{\sigma_Z^2}+(\beta-\hat\beta)^2+1.
\end{aligned}
\end{equation}
In particular, $v>0$ is constant (independent of $z$).

\paragraph{Step 3: Verify Assumption~\ref{Ass-algebraic-condition}.}

Since $\mathcal A\mid Z=z\sim \mathcal N(m(z),v)$,
\[
\log p(\mathcal A\mid Z=z)
=-\frac{(\mathcal A-m(z))^2}{2v}-\frac12\log(2\pi v).
\]
Hence
\[
\frac{\partial}{\partial \mathcal A}\log p(\mathcal A\mid Z=z)
=-\frac{\mathcal A-m(z)}{v},
\qquad
\frac{\partial^2}{\partial \mathcal A\,\partial z}\log p(\mathcal A\mid Z=z)
=\frac{m'(z)}{v}.
\]
Using Equation~\eqref{Eq:App-ex1-meanvar}, we have the derivative 
\begin{equation}\label{Eq:App-ex1-mprime}
m'(z)
=(\beta-\hat\beta)\exp({z})+\big((\beta-\hat\beta)\rho+\kappa\big)\frac{\gamma}{\sigma_Z^2}.
\end{equation}
Therefore, combining Equations~\eqref{Eq:App-ex1-meanvar} and \eqref{Eq:App-ex1-mprime} yields 
\begin{equation}\label{Eq:App-ex1-mixed}
\frac{\partial^2}{\partial \mathcal A\,\partial Z}\log p(\mathcal A,Z)
=
\frac{\partial^2}{\partial \mathcal A\,\partial Z}\log p(\mathcal A\mid Z)
=\frac{m'(Z)}{v} = \frac{\kappa\gamma-\beta_{bias}[\rho\gamma+\exp(Z)(\gamma^2+1)]}{(\kappa-\beta_{bias}\rho)^2 + (\gamma^2+1)(\beta_{bias}^2+1)}.
\end{equation}
Since $m'(z)$ is not identically zero (which holds generically; e.g., if $\kappa\gamma-\beta_{bias}[\rho\gamma+\exp(z)(\gamma^2+1)] \neq 0$), 
the set $\{z: m'(z)\neq 0\}$ has non-zero Lebesgue measure. Consequently, Equation~\eqref{Eq:App-ex1-mixed} is nonzero on a set with non-zero Lebesgue measure, and Assumption~\ref{Ass-algebraic-condition} holds.

\subsection{Example \ref{Example:violation of exclusion}: Violation of the Exclusion Restriction Condition ($Z\to Y$ Direct Effect)}
\label{App:Ex2-exclusion}

Consider the model
\begin{equation}\label{Eq:App-ex2}
\begin{aligned}
U &= \varepsilon_U,\qquad
Z = \varepsilon_Z,\\
X &= \exp(Z)+\rho U+\varepsilon_X,\qquad
Y = \beta X+\nu Z+\kappa U+\varepsilon_Y,
\end{aligned}
\end{equation}
where $(\varepsilon_U,\varepsilon_Z,\varepsilon_X,\varepsilon_Y)\stackrel{\mathrm{ind}}{\sim}\mathcal N(0,1)$. 
Here $Z\CI U$ (exogeneity holds), but exclusion is violated when $\nu\neq 0$ due to the direct effect $\nu Z$.

\paragraph{Step 1: Compute $\hat\beta=\Cov(Y,Z)/\Cov(X,Z)$.}
Since $Z\sim\mathcal N(0,1)$, by Equation~\eqref{Eq:App-cov-expZ} with $\sigma_Z^2=1$,
\[
\Cov(\exp(Z),Z)=\exp(1/2).
\]
Moreover, $U \CI Z$ and $\varepsilon_X \CI Z$, hence
\[
\Cov(X,Z)=\Cov(\exp(Z),Z)=\exp({1/2}).
\]
Since $Y=\beta X+\nu Z+\kappa U+\varepsilon_Y$ and $(U,\varepsilon_Y) \CI Z$,
\[
\Cov(Y,Z)=\beta\,\Cov(X,Z)+\nu\,\Var(Z)=\beta \exp({1/2})+\nu.
\]
Therefore, 
\begin{equation}\label{Eq:App-ex2-betahat}
\hat\beta=\frac{\Cov(Y,Z)}{\Cov(X,Z)}=\beta+ \underbrace{\nu \exp({-1/2})}_{\beta_{bias}}.
\end{equation}

\paragraph{Step 2: Derive $\mathcal A$ and $p(\mathcal A\mid Z)$.}
We have
\begin{align*}
\mathcal A
&=Y-\hat\beta X\\
&=(\beta-\hat\beta)X+\nu Z+\kappa U+\varepsilon_Y\\
&=(\beta-\hat\beta)\exp({Z})+\nu Z+\big((\beta-\hat\beta)\rho+\kappa\big)U+(\beta-\hat\beta)\varepsilon_X+\varepsilon_Y.
\end{align*}
Conditional on $Z=z$, $U$ remains $\mathcal N(0,1)$ and is independent of $Z$. Thus $\mathcal A\mid Z=z$ is Gaussian with mean and variance
\begin{equation}\label{Eq:App-ex2-meanvar}
\begin{aligned}
m(z)
&=(\beta-\hat\beta)\exp({z})+\nu z,\\
v
&=\big((\beta-\hat\beta)\rho+\kappa\big)^2+(\beta-\hat\beta)^2+1,
\end{aligned}
\end{equation}
where $v>0$ is constant.

\paragraph{Step 3: Verify Assumption~\ref{Ass-algebraic-condition}.}
As in Example~\ref{App:Ex1-exogeneity}, using Equations~\eqref{Eq:App-ex2-betahat} and \eqref{Eq:App-ex2-meanvar}, we obtain $m'(z)=(\beta-\hat\beta)\exp({z})+\nu$, which implies 
\begin{equation}\label{Eq:second-order}
\frac{\partial^2}{\partial \mathcal A\,\partial Z}\log p(\mathcal A,Z)=
\frac{\partial^2}{\partial \mathcal A\,\partial Z}\log p(\mathcal A\mid Z)
=\frac{m'(Z)}{v}=\frac{\nu [1-\exp(Z-1/2)]}{[\kappa-\nu \exp{(\frac{-1}{2})}]^2 + \nu^2\exp{(-1)+1}}.
\end{equation}
Since $\nu\neq 0$ (i.e., exclusion is violated), and $m'(z)\neq 0$ for all $z\neq 1/2$, $\{z:m'(z)\neq 0\}$ has non-zero Lebesgue measure. Therefore, the mixed second derivative (Equation \eqref{Eq:second-order}) is nonzero on a set with non-zero Lebesgue measure, and Assumption~\ref{Ass-algebraic-condition} holds.

}

\tworevise{
\section{More Details on Simulation Experiments in Section \ref{Section:experiments}}\label{Appendix-Synthetic}
In this section, we provide details of the simulation experiments corresponding to Tables \ref{Table-linear-model-A2} $\sim$ \ref{Table-type-II}. Specifically, the generation mechanism for each table is as follows: 
\begin{itemize}
    \item Table \ref{Table-linear-model-A2}. The model setup is as follows: $U=\varepsilon_U$, $Z_1 = U + \varepsilon_{Z1}$, $Z_2 = \varepsilon_{Z2}$, $X = \tau_1 Z_1 + \tau_2 Z_2 + \rho U + \varepsilon_{X}$, $Y = X + \kappa U + \varepsilon_{Y}$, where all constant coefficients are randomly selected from a uniform distribution with parameters min = 0.5 and max = 1.5. The noise terms $\varepsilon_U, \varepsilon_{Z1}, \varepsilon_{Z2}, \varepsilon_{X}$, and $ \varepsilon_{Y}$ follow the specific distributions listed in each row. The final row indicates that all noise terms are randomly drawn from one of six distributions \footnote{These six distributions include Uniform, Beta, T, Gamma, Lognormal, and Gaussian.}.
    
    \item Table \ref{Table-non-linear-constant-model-A2}. The model setup is as follows: $U=\varepsilon_U$, $Z_1 = \varphi_{Z1}(U) + \varepsilon_{Z1}$, $Z_2 = \varepsilon_{Z2}$, $X = \tau_1 Z_1 + \tau_2 Z_2 + \rho U + \varepsilon_{X}$, $Y = \beta X + \kappa U + \varepsilon_{Y}$, where all constant coefficients are set to 1, and all noise terms follow the Gaussian distribution with mean 0 and standard deviation 1. The nonlinear function $\varphi_{Z1}(U)$ matches the corresponding function provided in each row. The details of the nonlinear function are as follows:
    \begin{equation}
    \begin{aligned}
    \text{Log:}\quad Y &= \log_e(\tworevise{0.2 |X- 1|}); \\
    \text{Quadratic polynomial:}\quad Y &= X^2 - 2 \cdot X + 1;\\
    \text{Cubic polynomial:}\quad Y &= X^3 - 0.5 \cdot X^2 + 0.2 \cdot X; \\
    \text{Log (quadratic):}\quad Y &= \log_e(|0.5 \cdot X^2 + X|); \\
   \text{Exp (quadratic):}\quad Y &= e^{0.3 \cdot X^2 + \tworevise{1}}; \\
    \end{aligned}
    \end{equation}
    
    \item Table \ref{Table-non-linear-non-constant-model-A2}. The model setup is as follows: $U=\varepsilon_U$, $Z_1 = \gamma U + \varepsilon_{Z1}$, $Z_2 = \varepsilon_{Z2}$, $X = \tau_1 Z_1 + \tau_2 Z_2 + \rho U + \varepsilon_{X}$, $Y = f(X) + \kappa U + \varepsilon_{Y}$, where all constant coefficients are set to 1, and all noise terms follow the Uniform distribution with parametric min = -2 and max = 2. The nonlinear function $f(X)$ corresponds to the specific function listed in each row.  
    The details of the nonlinear function are as follows:
    \begin{equation}
    \begin{aligned}
    \text{Log:}\quad Y &= \tworevise{8}\cdot\log_e(|X|); \\
    \text{Quadratic polynomial:}\quad Y &= \tworevise{(5\cdot X + 2)^2 - 10};\\
    \text{Cubic polynomial:}\quad Y &= \tworevise{5\cdot X^3 + 2 \cdot X^2 + 2 \cdot X - 3}; \\
    \text{Log (quadratic):}\quad Y &= \tworevise{8\cdot\log_e(2 \cdot X^2 + X + 1|)}; \\
    \text{Exp (quadratic):}\quad Y &= \tworevise{3\cdot e^{1.5 \cdot X^2 + 0.5\cdot X + 2}}; \\
\end{aligned}
    \end{equation}

    \item Table \ref{Table-non-linear-constant-model-A3}. The model setup is as follows: $U=\varepsilon_U$, $Z_1 = \varepsilon_{Z1}$, $Z_2 = \varepsilon_{Z2}$, $X = sign(Z_1) + g_X(Z_2) + \rho U + \varepsilon_{X}$, $Y = X + g_Y(Z_1) + \kappa U + \varepsilon_{Y}$, \tworevise{where all constant coefficients are set to 1}, and all noise terms follow a Beta distribution with parameters alpha = 0.5 and beta = 0.1. The sign($*$) denotes sign function, where $* > 0$ equals 1, $* = 0$ equals 0, and otherwise, it equals -1. The nonlinear functions $g_X(Z_2), g_Y(Z_1)$ are defined by the specific functions provided in each row. The specific nonlinear functions are as follows:
    \begin{equation}
   \begin{aligned}
    \text{Log:}\quad Y &= \tworevise{4.5\cdot\log_e(2 \cdot|X| + 0.3)}; \\
    \text{Quadratic polynomial:}\quad Y &= \tworevise{(1.5\cdot X + 2)^2 - 8};\\
    \text{Cubic polynomial:}\quad Y &= \tworevise{3\cdot X^3 + 2\cdot X^2 + 5.5\cdot X + 5}; \\
    \text{Log (quadratic):}\quad Y &= \tworevise{\log_e(|4.5 \cdot X^2 - 0.1\cdot X - 0.1|)}; \\
    \text{Exp (quadratic):}\quad Y &= \tworevise{e^{2.5 \cdot X^2 + 1.9\cdot X}}; \\
\end{aligned}
    \end{equation}
    \item Table \ref{Table-non-linear-non-constant-model-A3}. The model setup is as follows: $U=\varepsilon_U$, $Z_1 = \varepsilon_{Z1}$, $Z_2 = \varepsilon_{Z2}$, $X = sign(Z_1) + g_X(Z_2) + \varphi_X(U) + \varepsilon_{X}$, $Y = f(X) + g_Y(Z_1) + \varphi_Y(U) + \varepsilon_{Y}$, where sign($*$) denotes sign function, \tworevise{and all noise terms follow the Uniform distribution with parametric min = -2 and max = 2}. The nonlinear functions correspond to the specific functions provided in each row. The detailed forms of these nonlinear functions are as follows:
    
    \begin{equation}
    \begin{aligned}
    \text{Log:}\quad Y &= \tworevise{\log_e(|X -0.01|) - 0.01}; \\
    \text{Quadratic polynomial:}\quad Y &= \tworevise{0.1 \cdot (0.5\cdot X -0.1)^2 - 10};\\
    \text{Cubic polynomial:}\quad Y &= \tworevise{ X^3 + 0.5\cdot X^2 + 0.2\cdot X - 0.05}; \\
    \text{Log (quadratic):}\quad Y &= \tworevise{\log_e(|0.5 \cdot X^2 + X| + 1)}; \\
    \text{Exp (quadratic):}\quad Y &= \tworevise{e^{0.2 \cdot X^2}}; \\
\end{aligned}
    \end{equation}
    \item Table \ref{Table-IV-PIM-compared-linear-model}. The specific generation mechanism for the linear model with covariates $\mathbf{W}$ is defined as follows: $U =\varepsilon_U$, $\mathbf{W} = \boldsymbol{\varepsilon_{W}}$, $Z_1 =\mathcal{I}(U + \mathbf{W} + \varepsilon_{Z1})$, $Z_2= \mathcal{I}(\mathbf{W} + \varepsilon_{Z2})$, $X = 0.5Z_1 + 0.5Z_2 + \boldsymbol{\lambda W} + \delta$, and $Y = X + \mathbf{W} + \epsilon$, where $\varepsilon_{U} \sim T(5)$, $\varepsilon_{Z1} \sim Beta(0.5, 0.1)$, $\varepsilon_{Z2} \sim \mathcal{N}(0,1)$, and $\delta, \epsilon \sim T(5)$. Here, $\mathcal{I}(*)$ is the indicator function such that $\mathcal{I}(*) > \text{mean}(*)$ equals 1; otherwise, it is 0. The coefficient $\boldsymbol{\lambda}$ is randomly drawn from a normalized standard normal distribution. The noise terms $\boldsymbol{\varepsilon_{W}}$ follow a multidimensional normal distribution, consistent with IV-PIM, with the dimensionality of covariates $\mathbf{W}$ varying across $|\mathbf{W}|=\{2, 3, 5\}$. 

    In the IV-PIM method, the parameters are set as follows: the number of bootstrap samples $B_{bootstrap}=5$, the kappa method is specified as \emph{spectral}, and the synthetic treatment variable method is set to \emph{knockoff}.
    
    \item Table \ref{Table-binary-treatment-model}. The discrete treatment data that simulates violations of the \emph{exogeneity} and \emph{exclusion restriction} conditions as follows:
    $U = \varepsilon_{U}$, $Z =  \mathcal{I} ( \varphi_{Z}(U) + \varepsilon_{Z})$, $X = \mathcal{I} (g_{X}(Z)+ \varphi_{X}(U) + \varepsilon_{X})$, $Y = \beta X + g_{Y}(Z) + \varphi_{Y}(U) + \varepsilon_{Y}$, and $\varepsilon_{*} \sim \mathcal{N}(0,1)$, where $\beta=1$, and $\mathcal{I}(*)$ is the indicator function such that $\mathcal{I}(*) > \text{mean}(*)$ equals 1; otherwise, it is 0. The functions $\varphi_{*}(U)$ and $g_{*}(Z)$ are nonlinear and randomly selected from the following: {\emph{cos, sin, square, cubic(third-degree polynomials), logarithmic, exponential function}}. 
    \item {Table \ref{Table-type-I}. The model settings are as follows:}
    \begin{itemize}
        \item {Linear Model: $U = \varepsilon_{U}$, $Z = \varepsilon_{Z}$, $X = Z+ U + \varepsilon_{X}$, $Y = X + U + \varepsilon_{Y}$, where $\varepsilon_U$, $\varepsilon_Z$ $\sim$ $\mathcal{U}[-1,0)\cup(0,1]$, and $\varepsilon_X, \varepsilon_Y \sim \mathcal{N}(0,1)$.}
        \item {Partial Non-Linear Model with Constant Causal Effect: $U = \varepsilon_{U}$, $Z = \varepsilon_{Z}$, $X = sin(Z)+ U + \varepsilon_{X}$, $Y = X + U^2 + \varepsilon_{Y}$, where all independent noise follow the standard Gaussian distribution $\varepsilon_* \sim \mathcal{N}(0,1)$.} 
        \item {ANINCE Model: $U = \varepsilon_{U}$, $Z = \varepsilon_{Z}$, $X = Z^2+ U + \varepsilon_{X}$, $Y = e^X + U + \varepsilon_{Y}$, where all independent noise follow the standard Gaussian distribution $\varepsilon_* \sim \mathcal{N}(0,1)$.}
    \end{itemize}
    
    \item {Table \ref{Table-type-II}. The model settings are as follows:} 
    \begin{itemize}
        \item {Linear Model and Exogeneity Violated: $U = \varepsilon_{U}$, $Z = 2U + \varepsilon_{Z}$, $X = Z+ U + \varepsilon_{X}$, $Y = X + U + \varepsilon_{Y}$, where $\varepsilon_U$, $\varepsilon_Z$ $\sim$ $Gamma(1,2)$, and $\varepsilon_X, \varepsilon_Y \sim Beta(1,2)$.}
        \item {Linear Model and Exogeneity $\&$ Exclusion Restriction Violated: $U = \varepsilon_{U}$, $Z = 3U + \varepsilon_{Z}$, $X = Z+ U + \varepsilon_{X}$, $Y = X + 2Z + U + \varepsilon_{Y}$, where $\varepsilon_U$, $\varepsilon_Z$ $\sim$ $Gamma(1,2)$, and $\varepsilon_X, \varepsilon_Y \sim \mathcal{N}(0,1)$.}
        \item {Partial Non-Linear Model with Constant Causal Effect and Exogeneity Violated: $U = \varepsilon_{U}$, $Z = U^3 + \varepsilon_{Z}$, $X = Z + U + \varepsilon_{X}$, $Y = X + U + \varepsilon_{Y}$, where $\varepsilon_U$, $\varepsilon_Z$ $\sim$ $\mathcal{N}(0,1)$, and $\varepsilon_X, \varepsilon_Y \sim \mathcal{U}[-1,0)\cup(0,1]$.}
        \item {Partial Non-Linear Model with Constant Causal Effect and Exclusion Restriction Violated: $U = \varepsilon_{U}$, $Z = \varepsilon_{Z}$, $X = Z + U + \varepsilon_{X}$, $Y = X + sin(Z) + U^2 + \varepsilon_{Y}$, where $\varepsilon_U$, $\varepsilon_Z$ $\sim$ $\mathcal{N}(0,1)$, and $\varepsilon_X, \varepsilon_Y \sim \mathcal{U}[-1,0)\cup(0,1]$.} 
        \item {Partial Non-Linear Model with Constant Causal Effect and  Exogeneity $\&$ Exclusion Restriction Violated: $U = \varepsilon_{U}$, $Z = U + \varepsilon_{Z}$, $X = Z + U + \varepsilon_{X}$, $Y = X + e^Z + 2U + \varepsilon_{Y}$, where $\varepsilon_U$, $\varepsilon_Z$ $\sim$ $\mathcal{U}[-1,0)\cup(0,1]$, and $\varepsilon_X, \varepsilon_Y \sim \mathcal{N}(0,1)$.}
    \end{itemize}
\end{itemize}
}

\bibliography{reference}

\end{document}